\documentclass[a4paper,11pt]{amsart}

\usepackage{amsmath,amsfonts,amssymb,amsthm} 
\usepackage{bbm} 
\usepackage{hyperref,color}










\newtheorem{theorem}{Theorem}[section]
\newtheorem{corollary}[theorem]{Corollary}
\newtheorem{lemma}[theorem]{Lemma}
\newtheorem{property}[theorem]{Property}
\newtheorem{proposition}[theorem]{Proposition}

\theoremstyle{definition}

\theoremstyle{remark}
\newtheorem{assumption}[theorem]{Assumption}
\newtheorem{remark}[theorem]{Remark}
\newtheorem{example}[theorem]{Example}

\theoremstyle{definition}

\numberwithin{equation}{section}

\newcommand{\var}{{\rm var}}

\newcommand{\diag}{{\rm diag}}

\newcommand{\ud}{\,\mathrm{d}}
\newcommand{\Leb}{{\rm Leb}}

\DeclareMathOperator*{\esssup}{ess\,sup}
\DeclareMathOperator*{\essinf}{ess\,inf}

\addtolength{\hoffset}{-1.8cm}
\addtolength{\textwidth}{3.6cm}

\addtolength{\voffset}{-0.8cm}
\addtolength{\textheight}{1.6cm}

\begin{document}

\title[Malliavin asymptotic expansions]{Malliavin calculus method for
  asymptotic expansion of dual control problems}

\author[Michael Monoyios]{Michael Monoyios \\
Mathematical Institute \\ 
University of Oxford\\
Accepted for publication in SIAM Journal on Financial Mathematics} 

\address{Michael Monoyios \\ 
Mathematical Institute \\ 
University of Oxford \\
Radcliffe Observatory Quarter\\
Woodstock Road\\
Oxford OX2 6GG\\\
UK}

\email{monoyios@maths.ox.ac.uk}

\date{\today}

\thanks{Many thanks to Giuseppe Benedetti for helpful comments, and to
  an Associate Editor and two anonymous referees for careful and
  insightful reading and suggestions which have improved the paper.}

\begin{abstract}

We develop a technique based on Malliavin-Bismut calculus ideas, for
asymptotic expansion of dual control problems arising in connection
with exponential indifference valuation of claims, and with
minimisation of relative entropy, in incomplete markets. The problems
involve optimisation of a functional of Brownian paths on Wiener
space, with the paths perturbed by a drift involving the control. In
addition there is a penalty term in which the control features
quadratically. The drift perturbation is interpreted as a measure
change using the Girsanov theorem, leading to a form of the
integration by parts formula in which a directional derivative on
Wiener space is computed. This allows for asymptotic analysis of the
control problem. Applications to incomplete It\^o process markets are
given, in which indifference prices are approximated in the low risk
aversion limit. We also give an application to identifying the minimal
entropy martingale measure as a perturbation to the minimal martingale
measure in stochastic volatility models.

\end{abstract}

\maketitle

\section{Introduction}
\label{sec:intro}

In this article we use an approach to the Malliavin calculus,
pioneered by Bismut \cite{bismut81}, in which perturbations to Brownian
paths on Wiener space are interpreted as measure changes via the
Girsanov theorem, to derive asymptotic expansions for certain
entropy-weighted stochastic control problems. These problems typically
arise from the dual to investment and indifference pricing problems
under exponential utility.

In the dual approach to investment and hedging problems in incomplete
markets, optimisation problems over trading strategies are converted
to optimisations over probability measures. For example, in
exponential indifference pricing of a European claim with payoff $F$,
the dual control representation of the indifference price is to
maximise the expectation of the payoff subject to an entropic penalty
involving the risk aversion $\alpha$ (as we show in Lemma
\ref{lem:dripnew}). In an It\^o process setting, the optimisation over
measures leads to a problem in which the control is a drift
perturbation to a multi-dimensional Brownian motion. This leads us to
consider control problems of the form (with $\Vert\cdot\Vert$ denoting
the Euclidean norm)
\begin{equation}
\sup_{\varphi}\mathbb
E\left[F\left(W+\varepsilon\int_{0}^{\cdot}\varphi_{s}\ud s\right) -
\frac{1}{2}\int_{0}^{T}\Vert\varphi_{t}\Vert^{2}\ud t\right].
\label{eq:bcp}
\end{equation}
The random variable $F(W+\varepsilon\int_{0}^{\cdot}\varphi_{s}\ud s)$
is a functional of the paths of a drift-perturbed multi-dimensional
Brownian motion $W+\varepsilon\int_{0}^{\cdot}\varphi_{s}\ud s$, where
$\varepsilon$ is a small parameter and $\varphi$ is some adapted
control process. Such a dependence typically arises because $F$
depends on a state variable $X^{(\varepsilon)}$ which is a perturbed
process following
\begin{equation}
\ud X^{(\varepsilon)}_{t} = a_{t}\ud t + b_{t}(\ud W_{t} +
\varepsilon\varphi_{t}\ud t),
\label{eq:dXve}
\end{equation}
with $a,b$ adapted processes.

The idea behind our approach is to view the drift $\varepsilon\varphi$
in (\ref{eq:bcp}) or (\ref{eq:dXve}) as a perturbation to Brownian
paths on Wiener space. For $\varepsilon=0$ the optimal control is
zero, and we suppose that the optimal control for small $\varepsilon$
will be a perturbation around zero. Ideas of Malliavin calculus arise
in differentiating the objective function of the control problem with
respect to $\varepsilon$ at $\varepsilon=0$. This uses Bismut's
\cite{bismut81} approach to the stochastic calculus of variations,
which exploits the Girsanov theorem to translate a drift adjustment
into to a measure change, in order to perform differentiation on path
space. Ultimately, this leads to an asymptotic expansion for the value
function, valid for small $\varepsilon$. In the financial application
to indifference pricing, $\varepsilon^{2}=\alpha$, so one obtains
small risk aversion asymptotics. Similar ideas arise in entropy
minimisation problems, which are the dual to pure investment problems
with exponential utility, and we illustrate an example of this in a
stochastic volatility model, in which the small parameter is
$1-\rho^{2}$, $\rho$ being the correlation between the stock and its
volatility. The power of this approach is that we can obtain results
in non-Markovian models and for quite general path-dependent payoffs.

Entropy-weighted control problems have been treated using variational
principles by Bou\'e and Dupuis \cite{bouedupuis98} (we thank a
referee for pointing out this work to us), with a view to applications
in large deviations theory. The result in \cite{bouedupuis98} is a
representation of the form
\begin{equation}
-\log\mathbb E[{\rm e}^{-g(W)}] = \inf_{v}\mathbb
E\left[\frac{1}{2}\int_{0}^{T}\Vert v_{s}\Vert^{2}\ud s + g\left(W +
\int_{0}^{\cdot}v_{s}\ud s\right)\right].
\label{eq:vcp}
\end{equation}
Bierkens and Kappen \cite{bierkenskappen12} develop the methods in
\cite{bouedupuis98} further and obtain formulae for the optimal
control in (\ref{eq:vcp}) as a Malliavin derivative of the functional
$g(W)$. These papers are in a similar spirit to ours in sharing a
variational point of view. It would be interesting to see if future
work could to relate the results in
\cite{bouedupuis98,bierkenskappen12} to ours.

Utility-based valuation techniques rarely lead to explicit solutions,
and this motivates the interest in approximate solutions. The idea of
using Malliavin calculus methods in asymptotic indifference pricing is
due to Davis \cite{mhad06}. Davis used the approach in a
two-dimensional constant parameter basis risk model, with a traded and
non-traded asset following correlated geometric Brownian motions, and
for a European claim depending only on the final value of the
non-traded asset price. In this model, it turns out that partial
differential equation (PDE) techniques, based on a Cole-Hopf transform
applied to the the Hamilton-Jacobi-Bellman equation of the underlying
utility maximisation problem (see Zariphopoulou \cite{z01}, Henderson
\cite{hend02} and Monoyios \cite{mmdef06}), lead to a closed form
non-linear expectation representation for the indifference price. The
asymptotic expansion obtained by Davis \cite{mhad06} can therefore
also be obtained by applying a Taylor expansion to the non-linear
expectation representation, as carried out in Monoyios
\cite{mmqf04,mmima07}. For this reason, perhaps, the technique
developed by Davis has not been further exploited.

In higher-dimensional models, and in almost all models with random
parameters, the aforementioned Cole-Hopf transform does not
work. Indifference prices and their risk-aversion asymptotics have
been analysed via other methods, notably by backward stochastic
differential equation (BSDE) and bounded-mean-oscillation (BMO)
martingale methods (Mania and Schweizer \cite{manschw05}, Kallsen and
Rheinl\"ander \cite{kr11}) for bounded claims. Monoyios \cite{mmamf10}
derived small risk aversion valuation and hedging results via PDE
techniques, in a random parameter basis risk model generated by
incomplete information on asset drifts. Delbaen {\em et al}
\cite{6auth} and Stricker \cite{str04} used arguments based on a
Fenchel inequality to derive the zero risk aversion limit of the
indifference price. Recently, Henderson and Liang \cite{hl12} have
used BSDE and PDE techniques to derive indifference price
approximations, of a different nature to ours, in a multi-dimensional
non-traded assets model.

The techniques in this paper are different. We resurrect the method
suggested by Davis \cite{mhad06}. The first contribution is to show
that this technique can be significantly generalised, to cover
multi-dimensional It\^o process markets, with no Markov structure
required, and for claims which can be quite general functionals of the
paths of the asset prices. In doing this we elucidate the precise
relation with the Malliavin calculus. The second contribution is to
derive a representation (Proposition \ref{prop:oc}) for the optimal
control in problems of the form (\ref{eq:bcp}), using variational
techniques on Wiener space. This is used in verifying the correct
structure of our asymptotic expansion.

The third contribution is to establish a dual stochastic control
representation (Lemma \ref{lem:dripnew}) of the indifference price
process in a semi-martingale model. This result seems to be the most
compact representation possible. We apply the Malliavin asymptotic
method to this control problem in an It\^o process setting, and derive
the general form of the small risk aversion asymptotic expansion of an
exponential indifference price, recovering the well-known connection
between small risk aversion exponential indifference valuation and
quadratic risk minimisation. Examples are given of multi-asset basis
risk models, and of stochastic correlation in basis risk. Finally, we
show how the technique can be applied to identify the minimal entropy
martingale measure (MEMM) $\mathbb Q^{0}\equiv \mathbb Q_{E}$ as a
perturbation to the minimal martingale measure $\mathbb Q_{M}$ in a
stochastic volatility model, when the stock and volatility are highly
correlated.

Other types of asymptotic expansion for marginal utility-based prices,
in terms of a small holding of claims, have been obtained by Kramkov
and S{\^{\i}}rbu \cite{ks07} and by Kallsen {\em et al}
\cite{kmkv11}. These works use utility functions defined on the
positive half-line, in contrast to the exponential utility function
used in this paper. In stochastic volatility models, Sircar and
Zariphopoulou \cite{sz05} obtain asymptotic expansions for exponential
indifference prices using the fast mean-reversion property of the
volatility process. This approach has been significantly exploited in
many scenarios (see Sircar {\em et al} \cite{fpss11}), and is of a
different nature to our approach.

Malliavin calculus has found application in other areas of
mathematical finance, such as insider trading \cite{imk03}, to
computation of sensitivity parameters \cite{fournie99}, and to other
forms of asymptotic expansion \cite{bgm10}, involving sensitivity with
respect to initial conditions, or with respect to parameters in asset
price dynamics, or to parameters appearing in an expectation, as
opposed to a control.

The rest of the paper is organised as follows. In Section
\ref{sec:ddbfws} we prove a version of the Malliavin
integration-by-parts formula on Wiener space (Lemma \ref{lem:ipcan}),
giving a directional derivative of a Brownian functional. In Section
\ref{sec:cp} this is used to derive our asymptotic expansion (Theorem
\ref{thm:asex}). We use variational methods to characterise the
optimal control (Proposition \ref{prop:oc}) which helps us
characterise the error term in the approximation. The interplay
between directional derivatives on Wiener space, the Malliavin
derivative, and perturbation analysis is exemplified in this
section. In Section \ref{sec:ddrip} we derive, in a locally bounded
semi-martingale model, the dual stochastic control representation of
the indifference price process (Lemma \ref{lem:dripnew}) that forms
the basis of the financial control problems we are interested in. In
Section \ref{sec:iviim} we apply the asymptotic analysis of
indifference valuation in an It\^o process setting. In Section
\ref{sec:examples} we give examples of approximate indifference
valuation in some basis risk models, and we show how the MEMM can be
identified as a perturbation to the minimal martingale measure in a
stochastic volatility model.

\section{Directional derivatives of Brownian functionals on Wiener
  space} 
\label{sec:ddbfws}

In this section we consider perturbations to Brownian paths, and the
ensuing directional derivatives, on Wiener space. This is Bismut's
\cite{bismut81} approach to the Malliavin calculus, and will be used
in asymptotic analysis of control problems in the next section. In
this approach, one deduces a certain invariance principle (see
(\ref{eq:equiv})) by using the Girsanov theorem to translate a drift
perturbation to a Brownian motion into a change of probability
measure. This approach is discussed in Section IV.41 of Rogers and
Williams \cite{rw00}, and Appendix E of Karatzas and Shreve
\cite{ks98}. Nualart \cite{nualart06} is a general treatise on
Malliavin calculus.

The setting uses the canonical basis $(\Omega,{\mathcal F},\mathbb
F:=({\mathcal F}_{t})_{0\leq t\leq T},\mathbb P)$, on which we define
an $m$-dimensional Brownian motion $W$. So,
$\Omega=C_{0}([0,T];\mathbb R^{m})$, the Banach space of continuous
functions $\omega:[0,T]\to\mathbb R^{m}$ null at zero, equipped with
the supremum norm
$\Vert\omega(t)\Vert_{\infty}:=\sup_{t\in[0,T]}\Vert\omega(t)\Vert$,
$\mathbb P$ is Wiener measure, and
$(W_{t}(\omega):=\omega(t))_{t\in[0,T]}$ is $m$-dimensional Brownian
motion with natural filtration $\mathbb F$. The Malliavin calculus is
conventionally introduced with reference to the Hilbert space
$H=L^{2}([0,T],\mathcal{B}([0,T]),\Leb;\mathbb R^{m})$ (we write
$H=L^{2}([0,T];\mathbb R^{m})$ for brevity). An element $h\in H$ is a
function $h:[0,T]\to\mathbb R^{m}$, with norm $\Vert
h\Vert_{H}^{2}=\int_{0}^{T}\Vert h_{t}\Vert^{2}\ud t<\infty$. Then the
Wiener integral $\mathbb W(h)$, defined by
\begin{equation*}
\mathbb W(h) := \sum_{i=1}^{m}\int_{0}^{T}h^{i}_{t}\ud W^{i}_{t} \equiv
\int_{0}^{T}h_{t}\cdot\ud W_{t} \equiv (h\cdot W)_{T},
\end{equation*}
is an isonormal Gaussian process. That is, the linear isometry $\mathbb
W:H\to L^{2}[(\Omega,\mathcal{F},\mathbb P);\mathbb R]$ is such that
$\mathbb W=(\mathbb W(h))_{h\in H}$ is a centred family of
Gaussian random variables with $\mathbb E[\mathbb W(h)]=0$ and
\begin{equation*}
\mathbb E[\mathbb W(h)\mathbb W(g)] = \langle h,g\rangle_{H} =
\int_{0}^{T}h_{t}\cdot g_{t}\ud t =
\sum_{i=1}^{m}\int_{0}^{T}h^{i}_{t}g^{i}_{t}\ud t.   
\end{equation*}
For $\varphi\in H=L^{2}([0,T];\mathbb R^{m})$, the Cameron-Martin
subspace $\mathcal{CM}\subset\Omega=C_{0}([0,T];\mathbb R^{m})$ is
composed of absolutely continuous functions $\Phi:[0,T]\to\mathbb
R^{m}$ with square-integrable derivative $\varphi$. That is,
\begin{equation*}
\Phi_{t} := \int_{0}^{t}\varphi_{s}\ud s, \quad
\int_{0}^{t}\Vert\varphi_{s}\Vert^{2}\ud s < \infty, \quad 0\leq t\leq T.
\end{equation*} 
One transports the Hilbert space structure of $H$ to $\mathcal{CM}$ by
defining
\begin{equation*}
\langle\Phi,\Psi\rangle_{\mathcal{CM}} := \langle\varphi,\psi\rangle_{H} =
\int_{0}^{T}\varphi_{t}\cdot\psi_{t}\ud t, \quad \Psi :=
\int_{0}^{\cdot}\psi_{s}\ud s, 
\end{equation*}
so $\mathcal{CM}$ is isomorphic to $H$. If $F$ is
Malliavin-differentiable, then there exists an $H$-valued random
variable, so a measurable (but not necessarily adapted) process
$(D_{t}F)_{t\in[0,T]}$, such that for $\Phi\in\mathcal{CM}$ we have
the integration-by-parts formula
\begin{equation*}
\mathbb E[\langle DF,h\rangle_{H}] = \mathbb E[F\mathbb W(h)],  
\end{equation*}
or
\begin{equation}
\mathbb E\left[\int_{0}^{T}D_{t}F\cdot\varphi_{t}\ud t\right] =
\mathbb E\left[F\int_{0}^{T}\varphi_{t}\cdot\ud W_{t}\right],
\label{eq:ibpcm}
\end{equation}
and $\langle DF,h\rangle_{H}$ has properties of a directional
derivative. This will be transparent in the Bismut approach to the
Malliavin calculus, which we outline below.

\subsection{The Bismut approach}
\label{subsec:bismut}

Bismut \cite{bismut81} developed an alternative version of the
stochastic calculus of variations, in which the left-hand-side of
(\ref{eq:ibpcm}) is a directional derivative on Wiener space, and
which allows for $\varphi$ to be a previsible process.

We have a square-integrable functional $F(W)$ of the Brownian paths
$W$, that is, an $\mathcal{F}_{T}$-measurable map $F:\Omega\to\mathbb
R$ satisfying 
\begin{equation}
\mathbb E[F^{2}(W)] < \infty.
\label{eq:sqiF}
\end{equation}
Let $\Phi\in C^{1}_{0}([0,T];\mathbb R^{m})\subset \Omega$, with
$\Phi:=\int_{0}^{\cdot}\varphi_{s}\ud s$ for some previsible process
$\varphi$. We are interested in defining a directional derivative of
$F$ in the direction $\Phi$.

The first variation (or G\^ateaux variation) $\delta F(W;\Phi)$
of $F$ at $W\in\Omega$ in the direction $\Phi$ is the limit, if it
exists, given by
\begin{equation*}
\delta F(W;\Phi) := \lim_{\varepsilon\to
0}\frac{1}{\varepsilon}[F(W + \varepsilon\Phi) - F(W)] =
\frac{\ud}{\ud\varepsilon}\left.[F(W +
\varepsilon\Phi)]\right\vert_{\varepsilon=0}.
\end{equation*}
(See Luenberger \cite{luenberger69} (Chapter 7) or Wouk \cite{wouk79}
(Chapter 12) for more on this and other notions of differentiation of
non-linear maps in Banach spaces.) The first variation is homogeneous
in the perturbation $\Phi$: $\delta F(W;c\Phi)=c\delta F(W;\Phi)$ for
$c\in\mathbb R$. We are interested in the case when $F$ is such that
the first variation is also linear in $\Phi$. To this end, we impose
the following conditions on $F$, similar to those used in Appendix E
of Karatzas and Shreve \cite{ks98}.

\begin{assumption}
\label{ass:condF}

\begin{itemize}

\item[(i)] $F$ satisfies square-integrability condition
  (\ref{eq:sqiF}).

\item[(ii)] There exists a non-negative Brownian functional $k$
satisfying $\mathbb E[k^{2}(W)]<\infty$ and a function
$g:[0,\infty)\to[0,\infty)$ satisfying $\limsup_{\varepsilon\downarrow 
0}(g(\varepsilon)/\varepsilon)<\infty$, such that for 
$W,\Phi\in\Omega$,
\begin{equation}
|F(W+\Phi) - F(W)| \leq k(W)g(\Vert\Phi\Vert_{\infty}). 
\label{eq:condF}
\end{equation}

\item[(iii)] There exists a kernel $\partial
  F(\omega;\cdot)\equiv\partial F(W;\cdot):\Omega\to\mathbb M$, where
  $\mathbb M$ is the set of $m$-dimensional finite Borel measures on
  $[0,T]$, such that for each $\Phi\in C^{1}_{0}([0,T];\mathbb
  R^{m})\subset\Omega$ we have
\begin{equation}
\lim_{\varepsilon\to 0}\frac{1}{\varepsilon}\left[F(W+\varepsilon\Phi)
- F(W)\right] = \int_{0}^{T}\Phi_{t}\cdot\partial F(W;\ud t), \quad
\mbox{for almost all $W\in\Omega$}.
\label{eq:partialF}
\end{equation}

\end{itemize}
  
\end{assumption}

Note, in particular, that (\ref{eq:partialF}) implies
\begin{equation}
F(W+\varepsilon\Phi) = F(W) +
\varepsilon\int_{0}^{T}\Phi_{t}\cdot\partial F(W;\ud t) +
o(|\varepsilon|\Vert\Phi\Vert_{\infty}).   
\label{eq:asymp1}
\end{equation}
Using $\Phi=\int_{0}^{\cdot}\varphi_{s}\ud s$ on the right-hand-side
of (\ref{eq:partialF}), we may integrate by parts to obtain the
equivalent form
\begin{equation}
\lim_{\varepsilon\to
0}\frac{1}{\varepsilon}\left[F\left(W +
\varepsilon\int_{0}^{\cdot}\varphi_{s}\ud s\right) 
- F(W)\right] = \int_{0}^{T}\partial F(W;(t,T])\cdot\varphi_{t}\ud t.
\label{eq:partialF2}
\end{equation}
In particular, we then have the analogue of (\ref{eq:asymp1}):
\begin{equation}
F\left(W+\varepsilon\int_{0}^{\cdot}\varphi_{s}\ud s\right) = F(W) +
\varepsilon\int_{0}^{T}\partial F(W;(t,T])\cdot\varphi_{t}\ud t +
o(|\varepsilon|\Vert\varphi\Vert_{\infty}).
\label{eq:asymp2}
\end{equation}
Rogers and Williams \cite{rw00} (Section IV.41) make the observations
that the condition (\ref{eq:partialF}) in Assumption \ref{ass:condF}
is automatically satisfied if $F$ is Fr\'echet differentiable with
bounded derivative, and in that case $\partial F\equiv F^{\prime}$,
where the Fr\'echet derivative $F^{\prime}(W;\cdot)$ is a bounded
linear functional on $\Omega$ (that is, a measure with finite total
variation, an element of the dual space $\Omega^{\prime}$). But there
are functionals where differentiability fails but (\ref{eq:partialF})
holds (\cite{rw00}, Section IV.41 has such examples).

Our particular interest will be in the functional $\mathbb
E[F(W+\varepsilon\Phi)]$ and the associated variation
\begin{equation*}
\lim_{\varepsilon\to 0}\frac{1}{\varepsilon}\mathbb E[F(W +
\varepsilon\Phi) - F(W)] = \frac{\ud}{\ud\varepsilon}\left.\mathbb
E[F(W + \varepsilon\Phi)]\right\vert_{\varepsilon=0}.  
\end{equation*}
It turns out that one can make sense of this limit, resulting in a
version of the integration-by-parts formula (\ref{eq:ibpcm}) which
holds regardless of whether $F$ is Malliavin differentiable. This is
given in Lemma \ref{lem:ipcan} further below. 


\subsubsection{The Clark formula}
\label{subsubsec:clarkf}

The classical result of Clark \cite{clark70} relates the kernel
$\partial F$ to the progressively measurable integrand $\psi$
(satisfying $\mathbb E[\int_{0}^{T}\Vert\psi_{t}\Vert^{2}\ud
t]<\infty$) in the martingale representation of $F(W)$:
\begin{equation}
F(W) = \mathbb E[F(W)] + \int_{0}^{T}\psi_{t}\cdot\ud W_{t}.
\label{eq:mrF}
\end{equation}
The Clark formula gives $\psi$ as the predictable projection of the
measurable (but not necessarily adapted) process $(\partial
F(W;(t,T]))_{0\leq t\leq T}$. In other words,
\begin{equation}
\psi_{t} = \mathbb E[\partial F(W;(t,T])|\mathcal{F}_{t}], \quad 0\leq
t\leq T.
\label{eq:clark}  
\end{equation}
This result is proven in Appendix E of Karatzas and Shreve \cite{ks98}
and in Section IV.41 of Rogers and Williams \cite{rw00}, using similar
methods to those that we shall employ in the proof of Lemma
\ref{lem:ipcan} below.

\begin{lemma}[Directional derivative on Wiener space]
\label{lem:ipcan}

Let $F\equiv F(W)$ be a functional of the Brownian paths $W$ on the
Banach space $\Omega=C_{0}([0,T];\mathbb R^{m})$ satisfying Assumption
\ref{ass:condF}. Let $\varphi$ be a bounded previsible process, with
$\Phi\in C^{1}_{0}([0,T];\mathbb R^{m})\subset\Omega$ defined by
$\Phi:=\int_{0}^{\cdot}\varphi_{s}\ud s$. Then the map
$\varepsilon\to\mathbb E[F(W+\varepsilon\Phi)]$ is differentiable,
with derivative
\begin{equation}
\frac{\ud}{\ud\varepsilon}\left.\mathbb
E[F(W+\varepsilon\Phi)]\right\vert_{\varepsilon=0} 
= \mathbb E\left[F(W)(\varphi\cdot W)_{T}\right].
\label{eq:ddws}
\end{equation}
Moreover, if $\varphi=c\widetilde{\varphi}$ for some fixed
$\widetilde{\varphi}$ and $c\in\mathbb R$, then
\begin{equation}
\mathbb E[F(W+\varepsilon\Phi) - F(W) - \varepsilon F(W)(\varphi\cdot
W)_{T}] \sim O(c^{2}\varepsilon^{2}). 
\label{eq:errortermc}
\end{equation}

\end{lemma}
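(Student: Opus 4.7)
The plan is to use Girsanov's theorem to translate the path perturbation $W\mapsto W+\varepsilon\Phi$ into a change of probability measure, thereby rewriting $\mathbb E[F(W+\varepsilon\Phi)]$ as an expectation of $F(W)$ against an explicit Dol\'eans exponential. Differentiating that exponential in $\varepsilon$ is elementary, while an It\^o-based representation of the second-order remainder will furnish the quantitative error bound. Boundedness of $\varphi$ is essential: it supplies Novikov's condition and uniform $L^{p}$ control on a neighbourhood of $\varepsilon=0$.

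The first step is to invoke the Girsanov invariance principle (\ref{eq:equiv}) to obtain
\[
\mathbb E[F(W+\varepsilon\Phi)] = \mathbb E[F(W)\, Z_{\varepsilon}], \qquad
Z_{\varepsilon} := \exp\left(\varepsilon(\varphi\cdot W)_{T} - \tfrac{1}{2}\varepsilon^{2}\int_{0}^{T}\Vert\varphi_{t}\Vert^{2}\ud t\right).
\]
Because $\varphi$ is bounded, $Z_{\varepsilon}$ is a true $\mathbb P$-martingale and moreover $\sup_{|\varepsilon|\le\varepsilon_{0}}\Vert Z_{\varepsilon}\Vert_{L^{p}}<\infty$ for every $p\in[1,\infty)$. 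Since $Z^{(\varepsilon)}$ solves $\ud Z_{t}^{(\varepsilon)} = \varepsilon Z_{t}^{(\varepsilon)}\varphi_{t}\cdot\ud W_{t}$, the map $\varepsilon\mapsto Z_{\varepsilon}$ is pointwise smooth with $\partial_{\varepsilon}Z_{\varepsilon}|_{\varepsilon=0}=(\varphi\cdot W)_{T}$. Combined with $F\in L^{2}(\mathbb P)$ from Assumption \ref{ass:condF}(i), dominated convergence via Cauchy-Schwarz allows one to differentiate through the expectation, delivering (\ref{eq:ddws}).

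For the error bound (\ref{eq:errortermc}) with $\varphi=c\tilde\varphi$, integrating the Dol\'eans SDE and subtracting the linearisation gives
\[
Z_{\varepsilon} - 1 - \varepsilon(\varphi\cdot W)_{T} = \varepsilon\int_{0}^{T}(Z_{t}^{(\varepsilon)}-1)\varphi_{t}\cdot\ud W_{t}.
\]
Two applications of It\^o's isometry, together with the $L^{2}$ estimate $\Vert Z_{t}^{(\varepsilon)}-1\Vert_{L^{2}}=O(c\varepsilon)$ (standard from the SDE above and the uniform $L^{p}$ bound on $Z^{(\varepsilon)}$), produce $\Vert Z_{\varepsilon}-1-\varepsilon(\varphi\cdot W)_{T}\Vert_{L^{2}}=O(c^{2}\varepsilon^{2})$. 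A final Cauchy-Schwarz against $F\in L^{2}$ closes the estimate.

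I expect the quantitative error bound to be the main technical point: a naive Taylor expansion of $\exp(X-Y)$ with $X=\varepsilon(\varphi\cdot W)_{T}$ and $Y=\tfrac{1}{2}\varepsilon^{2}\int_{0}^{T}\Vert\varphi_{t}\Vert^{2}\ud t$ already suggests the $c^{2}\varepsilon^{2}$ scaling, but controlling the powers $X^{k}$ uniformly in $L^{2}$ as $\varepsilon\to 0$ is cleanest via the SDE representation above, which simultaneously delivers the correct scaling in both $c$ and $\varepsilon$.
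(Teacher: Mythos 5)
Your proposal starts from the claimed identity
\[
\mathbb E[F(W+\varepsilon\Phi)]=\mathbb E[F(W)\,Z_{\varepsilon}],\qquad
Z_{\varepsilon}=\mathcal{E}(\varepsilon\varphi\cdot W)_{T},
\]
which you attribute to Girsanov via~(\ref{eq:equiv}). This is the gap. The paper's~(\ref{eq:equiv}) is the \emph{other} Girsanov relation,
\[
\mathbb E[F(W)]=\mathbb E\bigl[M^{(\varepsilon)}_{T}\,F(W+\varepsilon\Phi)\bigr],
\qquad M^{(\varepsilon)}_{T}=\mathcal{E}(-\varepsilon\varphi\cdot W)_{T},
\]
and it cannot be "solved" for $\mathbb E[F(W+\varepsilon\Phi)]$ to produce a multiplicative formula in $F(W)$. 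Your identity is the Cameron--Martin translation formula, which is correct when $\Phi$ is a \emph{deterministic} Cameron--Martin path, but it fails when $\varphi$ is an adapted (previsible) process, which is precisely the generality the lemma targets (the paper emphasises that the Bismut approach "allows for $\varphi$ to be a previsible process"). Concretely, if $\mathbb Q$ has density $Z_{\varepsilon}$ then under $\mathbb Q$ the canonical process $W$ is distributed as the solution of the fixed-point SDE $W=B+\varepsilon\int_{0}^{\cdot}\varphi_{s}(W)\,\mathrm{d}s$ driven by a $\mathbb Q$-Brownian motion $B$, whereas $\mathbb E[F(W+\varepsilon\Phi)]$ integrates $F$ against the law of the directly translated path $W+\varepsilon\int_{0}^{\cdot}\varphi_{s}(W)\,\mathrm{d}s$ under Wiener measure. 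For adapted $\varphi$ these two laws differ (already at order $\varepsilon^{2}$), so your premise is false, and both your derivation of~(\ref{eq:ddws}) and your error bound estimate the wrong quantity: you control $\mathbb E[F(W)Z_{\varepsilon}]$ rather than $\mathbb E[F(W+\varepsilon\Phi)]$.

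A tell-tale symptom is that your argument never invokes Assumption~\ref{ass:condF}(ii), the local Lipschitz-type bound $|F(W+\Phi)-F(W)|\le k(W)g(\Vert\Phi\Vert_{\infty})$. In the paper's proof this hypothesis is indispensable: because the correct Girsanov relation entangles the density $M^{(\varepsilon)}_{T}$ with the perturbed path $F(W+\varepsilon\Phi)$, the identity is rearranged into the difference-quotient form~(\ref{eq:equiv2}), and the cross term $\mathbb E\bigl[\tfrac{1}{\varepsilon}(F(W+\varepsilon\Phi)-F(W))(1-M^{(\varepsilon)}_{T})\bigr]$ must be shown to vanish using condition (ii) together with the $L^{2}$ convergence $M^{(\varepsilon)}_{T}\to 1$ from Lemma~\ref{lem:l2convM}. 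Your $L^{2}$ estimates for $Z_{\varepsilon}$ and its linearisation are individually fine, but they address a quantity that is not $\mathbb E[F(W+\varepsilon\Phi)]$. To repair the proof you would need to start from~(\ref{eq:equiv}) as actually stated, form the difference quotient on both sides, and bring condition (ii) into play to handle the cross term, essentially reproducing the paper's argument.
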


A form of Lemma \ref{lem:ipcan} appears in Davis \cite{mhad06} (his
Lemma 3) in a one-dimensional set-up, with a functional dependent only
on the final value of a diffusion. Fourni\'e {\em et al}
\cite{fournie99} have results of a similar nature in the context of
perturbations arising from variations in the drift or diffusion
coefficients of Markov SDEs (see, for instance, Proposition 3.1 in
\cite{fournie99}).

To prove Lemma \ref{lem:ipcan} we will need the following property of
exponential martingales. 

\begin{lemma}
\label{lem:l2convM}

For a bounded previsible process $\varphi$ and $\varepsilon\in\mathbb
R$, define the exponential martingale
\begin{equation}
M^{(\varepsilon)}_{t} := \mathcal{E}(-\varepsilon\varphi\cdot W)_{t}
:= \exp\left(-\varepsilon\int_{0}^{t}\varphi_{s}\cdot \ud W_{s} - 
\frac{1}{2}\varepsilon^{2}\int_{0}^{t}\Vert\varphi_{s}\Vert^{2}\ud
s\right), \quad 0\leq t\leq T.
\label{eq:Medef}
\end{equation}
Then we have
\begin{equation}
\label{eq:limit1}
\lim_{\varepsilon\to 0}\mathbb
E\left[\int_{0}^{t}(1 - M^{(\varepsilon)}_{s})^{2}\ud s\right] = 0,
\quad 0\leq t\leq T,
\end{equation}
and
\begin{equation}
\frac{1}{\varepsilon}(1 - M^{(\varepsilon)}_{t}) \to
(\varphi\cdot W)_{t}, \quad \mbox{in $L^{2}$, as
$\varepsilon\to 0$, for every $t\in[0,T]$}.
\label{eq:limit}
\end{equation}

\end{lemma}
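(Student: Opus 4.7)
The plan is to reduce both claims to direct applications of It\^o's isometry, using the SDE satisfied by the exponential martingale. Since $\varphi$ is bounded, say $\Vert\varphi_t\Vert\leq K$ a.s., Novikov's condition is trivially verified for $-\varepsilon\varphi\cdot W$, so $M^{(\varepsilon)}$ is a true martingale with $\mathbb E[M^{(\varepsilon)}_t]=1$. It\^o's formula applied to the exponential yields
\begin{equation*}
1 - M^{(\varepsilon)}_t = \varepsilon\int_{0}^{t}M^{(\varepsilon)}_{s}\varphi_{s}\cdot\ud W_{s}, \quad 0\leq t\leq T,
\end{equation*}
and this single identity drives both limits.

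For (\ref{eq:limit1}), I would apply It\^o's isometry to the above representation to obtain
\begin{equation*}
\mathbb E[(1-M^{(\varepsilon)}_{s})^{2}] = \varepsilon^{2}\mathbb E\left[\int_{0}^{s}(M^{(\varepsilon)}_{u})^{2}\Vert\varphi_{u}\Vert^{2}\ud u\right] \leq \varepsilon^{2}K^{2}\int_{0}^{s}\mathbb E[(M^{(\varepsilon)}_{u})^{2}]\,\ud u.
\end{equation*}
To bound the second moment of $M^{(\varepsilon)}_u$ uniformly in $u\in[0,T]$, I would use the algebraic identity $(M^{(\varepsilon)}_{u})^{2} = M^{(2\varepsilon)}_{u}\exp\bigl(\varepsilon^{2}\int_{0}^{u}\Vert\varphi_{r}\Vert^{2}\ud r\bigr)$, the deterministic bound $\int_{0}^{u}\Vert\varphi_{r}\Vert^{2}\ud r \leq K^{2}T$, and the martingale property $\mathbb E[M^{(2\varepsilon)}_{u}]=1$, giving $\mathbb E[(M^{(\varepsilon)}_{u})^{2}] \leq {\rm e}^{\varepsilon^{2}K^{2}T}$. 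Plugging this back and integrating $s$ over $[0,t]$ yields the explicit estimate $\mathbb E\bigl[\int_{0}^{t}(1-M^{(\varepsilon)}_{s})^{2}\ud s\bigr] \leq t\varepsilon^{2}K^{2}T\,{\rm e}^{\varepsilon^{2}K^{2}T}$, which vanishes as $\varepsilon\to 0$.

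For (\ref{eq:limit}), the SDE rearranges to
\begin{equation*}
\frac{1}{\varepsilon}(1-M^{(\varepsilon)}_{t}) - (\varphi\cdot W)_{t} = \int_{0}^{t}(M^{(\varepsilon)}_{s}-1)\varphi_{s}\cdot\ud W_{s},
\end{equation*}
and a second application of It\^o's isometry, together with $\Vert\varphi\Vert\leq K$, gives
\begin{equation*}
\mathbb E\!\left[\left(\frac{1}{\varepsilon}(1-M^{(\varepsilon)}_{t})-(\varphi\cdot W)_{t}\right)^{\!2}\right] \leq K^{2}\,\mathbb E\!\left[\int_{0}^{t}(M^{(\varepsilon)}_{s}-1)^{2}\ud s\right],
\end{equation*}
which tends to zero by (\ref{eq:limit1}). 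Hence $\varepsilon^{-1}(1-M^{(\varepsilon)}_{t})\to(\varphi\cdot W)_{t}$ in $L^{2}$, as required.

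There is no substantial obstacle in this argument: both statements are quantitative, and the estimates are sharp enough to give explicit $O(\varepsilon^{2})$ rates. The only subtlety is ensuring that $M^{(\varepsilon)}$ is a genuine square-integrable martingale so that It\^o's isometry is legitimately available; the uniform bound on $\varphi$ takes care of this via Novikov and the closed-form second-moment estimate above.
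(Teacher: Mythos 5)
Your proof is correct and follows the same skeleton as the paper's: you use the SDE representation $1-M^{(\varepsilon)}_t=\varepsilon\int_{0}^{t}M^{(\varepsilon)}_{s}\varphi_{s}\cdot\ud W_{s}$ and It\^o's isometry to establish both limits. The meaningful difference is that you supply a uniform-in-$\varepsilon$ bound on the second moment, $\mathbb E[(M^{(\varepsilon)}_{u})^{2}]\leq {\rm e}^{\varepsilon^{2}K^{2}T}$, via the algebraic identity $(M^{(\varepsilon)}_{u})^{2}=M^{(2\varepsilon)}_{u}\exp(\varepsilon^{2}\int_{0}^{u}\Vert\varphi_{r}\Vert^{2}\ud r)$. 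The paper, by contrast, only invokes that $\mathbb E[\int_{0}^{t}(M^{(\varepsilon)}_{s})^{2}\Vert\varphi_{s}\Vert^{2}\ud s]$ is finite for each $\varepsilon$ (from the stochastic integral being a martingale) and then says ``letting $\varepsilon\to 0$'' closes (\ref{eq:limit1}); strictly speaking this step needs the prefactor $\varepsilon^{2}$ to dominate, which requires the integral expectation to stay bounded as $\varepsilon\to 0$ rather than merely be finite for each fixed $\varepsilon$. Your explicit estimate $\mathbb E[\int_{0}^{t}(1-M^{(\varepsilon)}_{s})^{2}\ud s]\leq t\varepsilon^{2}K^{2}T\,{\rm e}^{\varepsilon^{2}K^{2}T}$ supplies exactly that missing uniformity and, as a bonus, gives a quantitative $O(\varepsilon^{2})$ rate. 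So the student's argument is not a different route, but a tighter version of the same one, closing a small gap the paper leaves implicit.
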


\begin{proof}

Since $\varphi$ is bounded, Novikov's criterion is satisfied and
$M^{(\varepsilon)}$ is a martingale. Using the representation
\begin{equation}
M^{(\varepsilon)}_{t} = 1 -
\varepsilon\int_{0}^{t}M^{(\varepsilon)}_{s}\varphi_{s}\cdot\ud W_{s},
\quad 0\leq t\leq T, 
\label{eq:Msoln}
\end{equation}
the stochastic integral is a martingale and we have
\begin{equation}
\mathbb
E\left[\int_{0}^{t}(M^{(\varepsilon)}_{s})^{2}\Vert\varphi_{s}\Vert^{2}\ud
  s\right] < \infty, \quad 0\leq t\leq T.  
\label{eq:deic2}
\end{equation}
Using (\ref{eq:Msoln}) along with the It\^o isometry, we have, for any
$t\in[0,T]$, 
\begin{equation*}
\mathbb E\left[\int_{0}^{t}(1 - M^{(\varepsilon)}_{s})^{2}\ud s\right]
= \varepsilon^{2}\mathbb 
E\left[\int_{0}^{t}\int_{0}^{s}(M^{(\varepsilon)}_{u})^{2}\Vert\varphi_{u}\Vert^{2}\ud
u\ud s\right].  
\end{equation*}
By (\ref{eq:deic2}), the expectation on the right-hand-side is finite
for any value of $\varepsilon$. Hence, letting $\varepsilon\to 0$ we
obtain (\ref{eq:limit1}).

Using (\ref{eq:Msoln}) and the It\^o isometry once again, we
compute, for any $t\in[0,T]$,
\begin{equation*}
\mathbb E\left[
\left(\frac{1}{\varepsilon}\left(1 - M^{(\varepsilon)}_{t}\right) -
(\varphi\cdot W)_{t}\right)^{2}\right] 
= \mathbb E\left[\int_{0}^{t}(1 -
M^{(\varepsilon)}_{s})^{2}\Vert\varphi_{s}\Vert^{2}\ud s\right], 
\end{equation*}
which, using (\ref{eq:limit1}) and the fact that $\varphi$ is bounded,
converges to zero as $\varepsilon\to 0$, and this gives
(\ref{eq:limit}).

\end{proof}

\begin{proof}[Proof of Lemma \ref{lem:ipcan}]

We use a version of arguments found in some proofs of the Clark
representation formula (see, for instance, Appendix E of Karatzas and
Shreve \cite{ks98} or the proof of Theorem IV.41.9 in Rogers and
Williams \cite{rw00}).

For $\varepsilon\in\mathbb R$ and $\varphi$ previsible and bounded,
define the probability measure $\mathbb P^{(\varepsilon)}$ by
\begin{equation*}
\frac{\ud\mathbb P^{(\varepsilon)}}{\ud\mathbb P} = M^{(\varepsilon)}_{T},
\end{equation*}
where $M^{(\varepsilon)}$ is the exponential martingale defined in
(\ref{eq:Medef}). By the Girsanov Theorem, $W+\varepsilon\Phi$ is
Brownian motion under $\mathbb P^{(\varepsilon)}$, so that with
$\mathbb E^{(\varepsilon)}$ denoting expectation under $\mathbb
P^{(\varepsilon)}$,
\begin{equation}
\mathbb E[F(W)] = \mathbb E^{(\varepsilon)}\left[F(W +
\varepsilon\Phi)\right] = \mathbb E[M^{(\varepsilon)}_{T}F(W + \varepsilon\Phi)].
\label{eq:equiv}
\end{equation}
This invariance principle underlies Bismut's approach to the Malliavin
calculus.

Re-write (\ref{eq:equiv}) as
\begin{equation}
\mathbb E\left[\frac{F(W+\varepsilon\Phi) -
F(W)}{\varepsilon}\right] = \mathbb
E\left[\frac{1 - M^{(\varepsilon)}_{T}}{\varepsilon}F(W)\right] 
+ \mathbb E\left[\frac{F(W+\varepsilon\Phi) - 
F(W)}{\varepsilon}(1 - M^{(\varepsilon)}_{T})\right]. 
\label{eq:equiv2}
\end{equation}
We differentiate $\mathbb E[F(W+\varepsilon\Phi)]$ with respect to
$\varepsilon$ at $\varepsilon=0$ by considering what happens when we
let $\varepsilon\to 0$ in (\ref{eq:equiv2}). To this end, subtract
$\mathbb E[F(W)(\varphi\cdot W)_{T}]$ from both sides, to compute
\begin{eqnarray}
&& \mathbb E\left[\frac{1}{\varepsilon}(F(W+\varepsilon\Phi) -
F(W)) - F(W)(\varphi\cdot W)_{T}\right] \nonumber \\
& = & \mathbb
E\left[\left(\frac{1 - M^{(\varepsilon)}_{T}}{\varepsilon} -
(\varphi\cdot W)_{T}\right)F(W)\right]
+ \mathbb E\left[\frac{F(W+\varepsilon\Phi) - F(W)}{\varepsilon}(1 -
M^{(\varepsilon)}_{T})\right].  
\label{eq:equiv3}
\end{eqnarray}
Now take the limit $\varepsilon\to 0$ in
(\ref{eq:equiv3}). Using conditions (i) and (ii) in Assumption
(\ref{ass:condF}), the dominated convergence theorem and the
Cauchy-Schwarz inequality, the last term on the right-hand-side is
bounded by
\begin{equation*}
\mathbb E[k(W)(g(|\varepsilon|\Vert\Phi\Vert_{\infty})/|\varepsilon|)|1 -
M^{(\varepsilon)}_{T}|] \leq 
K(\mathbb E[(1 - M^{(\varepsilon)}_{T})^{2}])^{1/2}, \quad \mbox{for
some constant $K$}, 
\end{equation*}
which converges to zero as $\varepsilon\to 0$, because of
(\ref{eq:limit1}).

Next consider the first term on the right-hand-side of
(\ref{eq:equiv3}). Using the square-integrability of $F$ and the
Cauchy-Schwarz inequality, we have
\begin{equation*}
\left(\mathbb E\left[\left(\frac{1 -
M^{(\varepsilon)}_{T}}{\varepsilon} - (\varphi\cdot
W)_{T}\right)F(W)\right]\right)^{2} \leq K\mathbb
E\left[\left(\frac{1 - M^{(\varepsilon)}_{T}}{\varepsilon} - 
(\varphi\cdot W)_{T}\right)^{2}\right], 
\end{equation*}
for some constant $K$. This converges to zero as $\varepsilon\to 0$,
on using (\ref{eq:limit}). Thus the right-hand-side, and hence the
left-hand-side, of (\ref{eq:equiv3}) converges to zero as
$\varepsilon\to 0$ and this establishes (\ref{eq:ddws}), the first
part of the lemma.

To establish (\ref{eq:errortermc}), we apply the same arguments to
(\ref{eq:equiv3}) multiplied by $\varepsilon$. We have
\begin{eqnarray*}
&& \mathbb E\left[F(W+\varepsilon\Phi) -
F(W) - \varepsilon F(W)(\varphi\cdot W)_{T}\right] \nonumber \\
& = & \mathbb
E\left[\left(1 - M^{(\varepsilon)}_{T} - \varepsilon(\varphi\cdot
W)_{T}\right)F(W)\right]
+ \mathbb E\left[(F(W+\varepsilon\Phi) - 
F(W))(1 - M^{(\varepsilon)}_{T})\right]. 
\end{eqnarray*}
We examine how each of the terms on the right-hand-side scale for small
$\varepsilon$ and $\varphi=c\widetilde{\varphi}$. Using the
representation (\ref{eq:Msoln}), the second term satisfies
\begin{equation*}
\mathbb E\left[(F(W+\varepsilon\Phi) - 
F(W))(1 - M^{(\varepsilon)}_{T})\right] \leq \varepsilon\mathbb
E\left[k(W)g(|\varepsilon|\Vert\Phi\Vert_{\infty})
\left\vert\int_{0}^{T}M^{(\varepsilon)}_{t}\varphi_{t}\cdot\ud
W_{t}\right\vert\right],  
\end{equation*}
and so for $\varphi=c\widetilde{\varphi}$ this term is of
$O(c^{2}\varepsilon^{2})$, on invoking the properties of $g(\cdot)$ in
Assumption \ref{ass:condF} (ii). For the first term, using the
representation (\ref{eq:Msoln}) for $\varphi=c\widetilde{\varphi}$, we
have
\begin{equation*}
\mathbb E\left[\left(1 - M^{(\varepsilon)}_{T} - \varepsilon(\varphi\cdot
W)_{T}\right)F(W)\right] = c^{2}\varepsilon^{2}\mathbb
E\left[F(W)\int_{0}^{T}\frac{1}{c\varepsilon}(M^{(\varepsilon)}_{t}
- 1)\widetilde{\varphi}_{t}\cdot\ud W_{t}\right],
\end{equation*}
which is is of $O(c^{2}\varepsilon^{2})$, on using
(\ref{eq:limit}). Hence (\ref{eq:errortermc}) is established.

\end{proof}

\begin{remark}
\label{rem:boundedphi}

The boundedness condition on $\varphi$ in Lemma \ref{lem:ipcan} can be
relaxed. A Novikov condition on $\varepsilon\varphi$ would suffice, so
that the stochastic exponential $M^{(\varepsilon)}$ in
(\ref{eq:Medef}) is a martingale. This remark also pertains to Lemma
\ref{lem:l2convM}.

\end{remark}

\subsection{Relation with the Malliavin derivative}
\label{subsubsec:rwmd}

We can connect the Malliavin derivative of $F$ (when this exists) to
the kernel $\partial F$ in the condition (\ref{eq:partialF}) and to
the directional derivative in Lemma \ref{lem:ipcan}. 

First, note that the right-hand-side of (\ref{eq:ddws}) has the same
structure as the right-hand-side of (\ref{eq:ibpcm}), so Lemma
\ref{lem:ipcan} is a version of the integration-by-parts formula,
generalised to Bismut's set-up, when
$\Phi=\int_{0}^{\cdot}\varphi_{s}\ud s$ need not be restricted to
elements of $\mathcal{CM}$.

Now, with $\Phi=\int_{0}^{\cdot}\varphi_{s}\ud s$, take the limit
$\varepsilon\to 0$ in (\ref{eq:equiv2}), using the conditions in
Assumption \ref{ass:condF} and the Dominated Convergence Theorem (a
similar procedure is used in \cite{ks98,rw00} in proving the
Clark representation formula), to obtain
\begin{equation}
\mathbb E\left[\int_{0}^{T}\Phi_{t}\cdot\partial F(W;\ud t)\right] =
\mathbb E\left[F(W)\int_{0}^{T}\varphi_{t}\cdot\ud W_{t}\right],
\label{eq:ibpartial0}
\end{equation}
Using $\Phi=\int_{0}^{\cdot}\varphi_{s}\ud s$ and integrating by parts
on the left-hand-side as was done to obtain (\ref{eq:partialF2}), we
convert (\ref{eq:ibpartial0}) to the equivalent form
\begin{equation}
\mathbb E\left[\int_{0}^{T}\partial F(W;(t,T])\cdot\varphi_{t}\ud t\right] =
\mathbb E\left[F(W)\int_{0}^{T}\varphi_{t}\cdot\ud W_{t}\right].
\label{eq:ibpartial}
\end{equation}
Comparing with (\ref{eq:ddws}), we see that the left-hand-side of
(\ref{eq:ibpartial}) is just another way to write the directional
derivative in Lemma \ref{lem:ipcan}. Note that if we use the
martingale representation (\ref{eq:mrF}) of $F$ on the right-hand-side
of (\ref{eq:ibpartial}) we obtain the Clark formula (\ref{eq:clark}).

In the case that $\Phi\equiv\int_{0}^{\cdot}\varphi_{s}\ud s$ is an
element of the Cameron-Martin space $\mathcal{CM}$, and for
Malliavin-differentiable $F$, the right-hand-side of (\ref{eq:ddws})
or (\ref{eq:ibpartial}) is also the right-hand-side of the
integration-by-parts formula (\ref{eq:ibpcm}), so in this case the
kernel $\partial F$ is related to the Malliavin derivative according
to
\begin{equation*}
\partial F(W;(t,T]) = D_{t}F(W), \quad 0\leq t\leq T,
\end{equation*}
and (\ref{eq:ibpartial}) is the integration-by-parts formula.  So when
$F$ is Malliavin-differentiable and
$\Phi\in\mathcal{CM}\subset\Omega$, the directional derivative in
(\ref{eq:ddws}) is also given by $\mathbb
E\left[\int_{0}^{T}D_{t}F\cdot\varphi_{t}\ud t\right]$. But Lemma
\ref{lem:ipcan} is valid when $F$ is not necessarily
Malliavin-differentiable and for previsible $\varphi$, with
$\Phi\equiv\int_{0}^{\cdot}\varphi_{s}\ud s$ not necessarily in
$\mathcal{CM}$.

\section{Malliavin asymptotics of a control problem}
\label{sec:cp}

In this section we describe a control problem and analyse it via
variational principles and Bismut-Malliavin asymptotics. How this type
of problem arises in a financial model will be described in subsequent
sections.

We have a canonical basis $(\Omega,\mathcal{F},\mathbb
F=(\mathcal{F}_{t})_{0\leq t\leq T},\mathbb P)$, on which is defined
an $m$-dimensional Brownian motion $W$. A square-integrable random
variable $F$ is a functional of the paths of the perturbed Brownian
motion $W+\varepsilon\int_{0}^{\cdot}\varphi_{s}\ud s$, where
$\varepsilon\in\mathbb R$ is a small parameter and $\varphi$ is a
control process satisfying $\int_{0}^{T}\Vert\varphi_{t}\Vert^{2}\ud
t<\infty$, and such that $\mathcal{E}(|\varepsilon|\varphi\cdot W)$ is
a martingale. (A Novikov condition $\mathbb
E[\exp\left(\frac{1}{2}\varepsilon^{2}\int_{0}^{T}\Vert\varphi_{t}\Vert^{2}\ud
t\right)]<\infty$ would suffice). Denote by $\mathcal{A}$ the set of
such controls.

The control problem we are interested in is to maximise an objective
functional $G(\varphi)$, defined by
\begin{equation}
G(\varphi) := \mathbb E\left[F\left(W +
\varepsilon\int_{0}^{\cdot}\varphi_{s}\ud s\right) - 
\frac{1}{2}\int_{0}^{T}\Vert\varphi_{t}\Vert^{2}\ud t\right].
\label{eq:Gfunc}
\end{equation}
The value function is
\begin{equation}
p:= G(\varphi^{*}) = \sup_{\varphi\in\mathcal{A}}G(\varphi),  
\label{eq:objf}
\end{equation}
for some optimal control $\varphi^{*}$.  

As we shall see in Section \ref{sec:iviim}, in finance this type of
control problem typically arises because $F$ is a functional of the
Brownian paths through dependence on some perturbed state variable
$X\equiv X^{(\varepsilon)}\in \mathbb R^{m}$, following an It\^o
process
\begin{equation} 
\ud X^{(\varepsilon)}_{t} = a_{t}\ud t + b_{t}(\ud W_{t} +
\varepsilon\varphi_{t}\ud t), 
\label{eq:xeps2}
\end{equation} 
for some $m$-dimensional adapted process $a$ satisfying
$\int_{0}^{T}\Vert a_{t}\Vert\ud t<\infty$ and $m\times m$ adapted
matrix process $b$ satisfying $\int_{0}^{T}\Vert a_{t}\Vert\ud
t<\infty$ and $\int_{0}^{T}b_{t}b^{\top}_{t}\ud t<\infty$. In this
section we shall not require a state process $X^{(\varepsilon)}$.

The idea behind the asymptotic expansion is to treat
$\varepsilon\varphi$ as a perturbation to the Brownian paths.
We suppose that, for small $\varepsilon$, the optimal control
$\varphi^{*}$ will be small. We expand the objective functional in
(\ref{eq:Gfunc}) about $\varepsilon=0$ using Lemma
\ref{lem:ipcan}. Naturally, for $\varepsilon=0$ the functional
$F(W+\varepsilon\int_{0}^{\cdot}\varphi_{s}\ud s)$ loses all
dependence on the control $\varphi$, so in this case optimal control
is zero, and the leading order term will be $\mathbb E[F(W)]$.
Ultimately, this leads to the main result below, a perturbative
solution to the control problem (\ref{eq:objf}).

\begin{theorem}
\label{thm:asex}

On the canonical basis
$(\Omega,\mathcal{F},(\mathcal{F}_{t})_{t\in[0,T]},\mathbb P)$, define
an $m$-dimensional Brownian motion $W$. Let
$\Phi:=\int_{0}^{\cdot}\varphi_{s}\ud s\in\Omega$ be such that
$\int_{0}^{T}\Vert\varphi_{t}\Vert^{2}\ud t<\infty$ and
$\mathcal{E}(|\varepsilon|\varphi\cdot W)$ is a martingale, with
$\varepsilon\in\mathbb R$ a small parameter. Denote the set of such
$\varphi$ by $\mathcal{A}$. Let $F(W +
\varepsilon\int_{0}^{\cdot}\varphi_{s}\ud s)$ be a functional of the
paths of the perturbed Brownian motion
$W+\varepsilon\int_{0}^{\cdot}\varphi_{s}\ud s$, and satisfying
Assumption \ref{ass:condF}. Then the control problem with value
function (\ref{eq:objf}) has asymptotic value given by
\begin{equation}
p = \mathbb E[F(W)] + \frac{1}{2}\varepsilon^{2}\mathbb
E\left[\int_{0}^{T}\Vert\psi_{t}\Vert^{2}\ud
t\right] + O(\varepsilon^{4}), 
\label{eq:asex}
\end{equation}
where $\psi$ is the integrand in the martingale representation
(\ref{eq:mrF}) of $F(W)$.

\end{theorem}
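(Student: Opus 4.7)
The plan is to produce matching upper and lower bounds on $p$ by combining Lemma \ref{lem:ipcan} with a scaling ansatz. Since $\varepsilon=0$ kills the dependence on $\varphi$ in the functional and leaves only the quadratic penalty, the optimizer $\varphi^{*}$ should itself be of order $\varepsilon$. I would therefore take the explicit candidate $\varphi^{c}_{t}:=\varepsilon\psi_{t}$, where $\psi$ is the integrand in the martingale representation (\ref{eq:mrF}) of $F(W)$, and aim to show that $G(\varphi^{c})$ already realises the claimed expansion.

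For the lower bound, note that $\varphi^{c}$ lies in $\mathcal{A}$ because $\mathbb E[\int_{0}^{T}\Vert\psi_{t}\Vert^{2}\ud t]<\infty$ by square-integrability of $F(W)$. Applying Lemma \ref{lem:ipcan} with scale $c=\varepsilon$ and direction $\widetilde{\varphi}=\psi$, the error estimate (\ref{eq:errortermc}) gives
\[
\mathbb E\Bigl[F\Bigl(W+\varepsilon\int_{0}^{\cdot}\varphi^{c}_{s}\ud s\Bigr)\Bigr] = \mathbb E[F(W)] + \varepsilon^{2}\mathbb E[F(W)(\psi\cdot W)_{T}] + O(\varepsilon^{4}).
\]
Substituting $F(W)=\mathbb E[F(W)]+\int_{0}^{T}\psi_{t}\cdot\ud W_{t}$ and applying It\^o's isometry collapses $\mathbb E[F(W)(\psi\cdot W)_{T}]$ to $\mathbb E[\int_{0}^{T}\Vert\psi_{t}\Vert^{2}\ud t]$. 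Combining with the quadratic penalty, which equals $-\tfrac{1}{2}\varepsilon^{2}\mathbb E[\int_{0}^{T}\Vert\psi_{t}\Vert^{2}\ud t]$ at $\varphi^{c}$, gives
\[
G(\varphi^{c}) = \mathbb E[F(W)] + \tfrac{1}{2}\varepsilon^{2}\mathbb E\Bigl[\int_{0}^{T}\Vert\psi_{t}\Vert^{2}\ud t\Bigr] + O(\varepsilon^{4}),
\]
so $p\geq\mathbb E[F(W)]+\tfrac{1}{2}\varepsilon^{2}\mathbb E[\int_{0}^{T}\Vert\psi_{t}\Vert^{2}\ud t]+O(\varepsilon^{4})$.

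For the upper bound, apply Lemma \ref{lem:ipcan} to a general $\varphi\in\mathcal{A}$ and use the pointwise Young inequality
\[
\varepsilon\mathbb E\Bigl[\int_{0}^{T}\psi_{t}\cdot\varphi_{t}\ud t\Bigr] - \tfrac{1}{2}\mathbb E\Bigl[\int_{0}^{T}\Vert\varphi_{t}\Vert^{2}\ud t\Bigr] \leq \tfrac{1}{2}\varepsilon^{2}\mathbb E\Bigl[\int_{0}^{T}\Vert\psi_{t}\Vert^{2}\ud t\Bigr],
\]
with equality precisely at $\varphi=\varepsilon\psi$, to deduce $G(\varphi)\leq\mathbb E[F(W)]+\tfrac{1}{2}\varepsilon^{2}\mathbb E[\int_{0}^{T}\Vert\psi_{t}\Vert^{2}\ud t]+R(\varphi,\varepsilon)$, where $R$ is the Lemma \ref{lem:ipcan} remainder.

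The main obstacle is controlling $R$ uniformly: (\ref{eq:errortermc}) gives only $O(c^{2}\varepsilon^{2})$, which blows up when $\Vert\varphi\Vert$ is not tied to $\varepsilon$. This is exactly where Proposition \ref{prop:oc} enters, as signalled in the introduction: its variational characterization of $\varphi^{*}$ forces $\varphi^{*}=\varepsilon\psi+O(\varepsilon^{3})$, so a priori $c=O(\varepsilon)$ at the optimum, and the remainder at $\varphi^{*}$ collapses to $O(\varepsilon^{4})$, matching the claimed error. Any control not of this scale is dominated by the strictly convex penalty and can be ruled out of contention directly, without appealing to the fine structure of $F$.
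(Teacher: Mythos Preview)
Your proposal is correct and follows essentially the same route as the paper. The paper also evaluates $G$ at the explicit candidate $\widehat{\varphi}=\varepsilon\psi$ via Lemma \ref{lem:ipcan} and the error bound (\ref{eq:errortermc}) to obtain the claimed expansion, and then invokes Proposition \ref{prop:oc} to show that the true optimiser satisfies $\varphi^{*}=\varepsilon\psi+O(\varepsilon^{3})$, so that $G(\varphi^{*})$ has the same expansion with $O(\varepsilon^{4})$ remainder; the only cosmetic difference is that you package this as matching upper and lower bounds, whereas the paper phrases it as ``approximate then optimise'' versus ``optimise then approximate'' yielding the same answer.
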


\begin{remark}[Heuristics]
\label{rem:heuristics}

Before proving the theorem, we outline the ideas underlying the proof
in a simple setting where ordinary calculus can replace variational
calculus on Wiener space.

Consider maximising, over a scalar variable $\varphi$, a smooth
function $J(\varepsilon,\varphi)$ given by
\begin{equation}
J(\varepsilon,\varphi) := f(x+\varepsilon\varphi) -
\frac{1}{2}\varphi^{2},
\label{eq:Jfunc}
\end{equation}
for some smooth function $f$, and with $\varepsilon$ a small
parameter. The optimiser of this problem satisfies 
\begin{equation}
\varphi^{*} = \varepsilon f^{\prime}(x+\varepsilon\varphi^{*}),
\label{eq:vphistar}
\end{equation}
so is of course zero for $\varepsilon=0$. If we seek a power series
approximation for $\varphi^{*}$, writing
$\varphi^{*}=\sum_{k=1}^{\infty}\varepsilon^{k}\varphi^{(k)}$ for some
coefficients $\varphi^{(k)}$, then using this in (\ref{eq:vphistar})
along with a Taylor expansion of $f^{\prime}(x+\varepsilon\varphi)$
gives
\begin{equation*}
\varphi^{*} = \varepsilon f^{\prime}(x)(1 +
\varepsilon^{2}f^{\prime\prime}(x)) + O(\varepsilon^{5}).
\end{equation*}
In particular, the first two terms in $\varphi^{*}$ are linear and
cubic in $\varepsilon$. With the given structure of the objective
function in (\ref{eq:Jfunc}), this implies that the maximum has
asymptotic expansion given by
\begin{equation*}
J(\varepsilon,\varphi^{*}) = f(x) +
\frac{1}{2}\varepsilon^{2}(f^{\prime}(x))^{2} + O(\varepsilon^{4}).
\end{equation*}
But this is the same value as is obtained by maximising the
linear-in-$\varepsilon$ approximation to $J(\varepsilon,\varphi)$:
\begin{equation*}
J(\varepsilon,\varphi) = f(x) + \varepsilon\varphi f^{\prime}(x) -
\frac{1}{2}\varphi^{2} + O(\varepsilon^{2}\varphi^{2}),
\end{equation*}
which is maximised by $\widehat{\varphi}=\varepsilon f^{\prime}(x)$,
yielding
\begin{equation*}
J(\varepsilon,\widehat{\varphi}) = f(x) +
\frac{1}{2}\varepsilon^{2}(f^{\prime}(x))^{2} + O(\varepsilon^{4}),
\end{equation*}
so that $J(\varepsilon,\varphi^{*})=J(\varepsilon,\widehat{\varphi})$
to order $\varepsilon^{2}$, with the remainder being of order
$\varepsilon^{4}$ in both cases.

We shall see that a similar structure underlies the proof of Theorem
\ref{thm:asex} which we give further below, after some preparation.
 
\end{remark}

The following result is the analogue of (\ref{eq:vphistar}) for
differentiation (in a variational sense) of the the objective
functional of the control problem with respect to the control
$\varphi$. We will use this later in establishing the asymptotic
expansion of Theorem \ref{thm:asex}.

\begin{proposition}
\label{prop:oc}

Assume the same setting as in Theorem \ref{thm:asex}. The optimal
control $\varphi^{*}$ for the problem with value function
(\ref{eq:objf}) satisfies 
\begin{equation}
\varphi^{*}_{t} = \varepsilon\mathbb E\left[\left.\partial F\left(W +
\varepsilon\int_{0}^{\cdot}\varphi^{*}_{s}\ud
s;(t,T]\right)\right\vert\mathcal{F}_{t}\right], \quad 0\leq t\leq T,
\label{eq:phistar}
\end{equation}
where $\partial F(W + \varepsilon\int_{0}^{\cdot}\varphi^{*}_{s}\ud
s;\cdot)$ is the kernel in (\ref{eq:partialF}), evaluated at $W +
\varepsilon\int_{0}^{\cdot}\varphi^{*}_{s}\ud s\in\Omega$.
  
\end{proposition}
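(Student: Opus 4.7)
The plan is to characterise $\varphi^{*}$ by first-order optimality: perturb the optimal control in an admissible direction, differentiate the objective, and use the fundamental lemma of the calculus of variations to read off the pointwise Euler equation. Concretely, fix a bounded previsible direction $\eta$ (with, say, bounded support so that $\varphi^{*}+\delta\eta\in\mathcal{A}$ for all sufficiently small $\delta\in\mathbb{R}$), set $H_{t}:=\int_{0}^{t}\eta_{s}\,\mathrm{d}s$ and $\Phi^{*}_{t}:=\int_{0}^{t}\varphi^{*}_{s}\,\mathrm{d}s$, and consider the scalar function
\begin{equation*}
f(\delta) := G(\varphi^{*}+\delta\eta) = \mathbb E\!\left[F\bigl(W+\varepsilon\Phi^{*}+\delta\varepsilon H\bigr)\right] - \tfrac{1}{2}\mathbb E\!\left[\int_{0}^{T}\Vert\varphi^{*}_{t}+\delta\eta_{t}\Vert^{2}\,\mathrm{d}t\right].
\end{equation*}
By optimality, $f'(0)=0$. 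The quadratic penalty is trivial: its derivative at $\delta=0$ equals $-\mathbb E[\int_{0}^{T}\varphi^{*}_{t}\cdot\eta_{t}\,\mathrm{d}t]$, with the exchange of derivative and expectation justified by Cauchy--Schwarz and admissibility of $\varphi^{*}$ and $\eta$.

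The main work lies in differentiating the $F$-term. I would apply the pointwise kernel relation (\ref{eq:partialF}) at the shifted path $W+\varepsilon\Phi^{*}\in\Omega$ with perturbation $\varepsilon H$: almost surely,
\begin{equation*}
\lim_{\delta\to 0}\frac{1}{\delta}\bigl[F(W+\varepsilon\Phi^{*}+\delta\varepsilon H) - F(W+\varepsilon\Phi^{*})\bigr] = \varepsilon\int_{0}^{T}H_{t}\cdot\partial F(W+\varepsilon\Phi^{*};\mathrm{d}t) = \varepsilon\int_{0}^{T}\partial F(W+\varepsilon\Phi^{*};(t,T])\cdot\eta_{t}\,\mathrm{d}t,
\end{equation*}
where the second equality is the integration by parts that gave (\ref{eq:partialF2}). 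The key obstacle is to interchange the $\delta$-limit with the expectation. This is where Assumption~\ref{ass:condF}(ii) is used: the difference quotient is dominated by $k(W+\varepsilon\Phi^{*})\cdot g(|\delta\varepsilon|\Vert H\Vert_{\infty})/|\delta|$, which, by the $\limsup$ property of $g$, is controlled uniformly in small $\delta$ by a constant multiple of $k(W+\varepsilon\Phi^{*})\Vert H\Vert_{\infty}$. To conclude integrability of this dominating random variable under $\mathbb{P}$, I would pass to the equivalent measure $\mathbb{P}^{(\varepsilon\varphi^{*})}$ (under which $W+\varepsilon\Phi^{*}$ is Brownian by Girsanov, using that $\mathcal E(|\varepsilon|\varphi^{*}\cdot W)$ is a martingale), invoke $\mathbb E[k^{2}(W)]<\infty$, and transport back via Cauchy--Schwarz, up to the integrability of the Radon--Nikodym density. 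This domination step is the principal technical hurdle; everything else is bookkeeping.

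Combining the two derivatives, the stationarity condition $f'(0)=0$ becomes
\begin{equation*}
\mathbb E\!\left[\int_{0}^{T}\Bigl\{\varepsilon\,\partial F(W+\varepsilon\Phi^{*};(t,T]) - \varphi^{*}_{t}\Bigr\}\cdot\eta_{t}\,\mathrm{d}t\right] = 0.
\end{equation*}
Because $\eta$ is adapted, I can replace the integrand by its predictable projection and pull the conditional expectation inside, obtaining
\begin{equation*}
\mathbb E\!\left[\int_{0}^{T}\Bigl\{\varepsilon\,\mathbb E[\partial F(W+\varepsilon\Phi^{*};(t,T])\mid\mathcal F_{t}] - \varphi^{*}_{t}\Bigr\}\cdot\eta_{t}\,\mathrm{d}t\right] = 0.
\end{equation*}
Since $\eta$ ranges over a sufficiently rich class of bounded adapted processes to separate $\mathrm{d}\mathbb P\otimes\mathrm{d}t$-a.e.\ equivalence classes, the curly bracket must vanish, which is exactly (\ref{eq:phistar}). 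A brief remark on existence of $\varphi^{*}$ and on admissibility of the perturbation $\varphi^{*}+\delta\eta$ closes the argument; both can be arranged by restricting to bounded previsible $\eta$ supported on $[0,T]$, which leaves the Novikov/martingale condition intact for $|\delta|$ small.
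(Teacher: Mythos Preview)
Your proposal is correct and follows essentially the same route as the paper: perturb $\varphi^{*}$ in an adapted direction, differentiate $G$ using Assumption~\ref{ass:condF} at the shifted path $W+\varepsilon\Phi^{*}$ together with dominated convergence, integrate by parts to obtain the $(t,T]$-form, and read off the Euler equation from first-order optimality. You are in fact somewhat more careful than the paper in two places---the Girsanov argument to control integrability of $k(W+\varepsilon\Phi^{*})$ and the explicit predictable-projection step yielding the conditional expectation---both of which the paper passes over briskly.
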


\begin{proof}

Recall the conditions (\ref{eq:condF}) and (\ref{eq:partialF}) in
Assumption \ref{ass:condF}. We shall use these to differentiate, in a
variational manner akin to our development of Lemma \ref{lem:ipcan},
the objective functional (\ref{eq:Gfunc}) of the control problem with
respect to the control $\varphi$. 

Consider varying $\varphi$ in (\ref{eq:Gfunc}). To this end, for
$\gamma\in\mathbb R$ a small parameter and
$\Xi=\int_{0}^{\cdot}\xi_{s}\ud s\in\Omega$, consider the variation
\begin{equation*}
\delta G(\varphi;\xi) := \lim_{\gamma\to 0}\frac{1}{\gamma}\left[G(\varphi +
\gamma\xi) - G(\varphi)\right].   
\end{equation*}
Using (\ref{eq:condF}) and (\ref{eq:partialF}) applied at
$W+\epsilon\int_{0}^{\cdot}\varphi_{s}\ud s$ along with the Dominated
Convergence Theorem, we obtain
\begin{equation*}
\delta G(\varphi;\xi) = \mathbb
E\left[\int_{0}^{T}\varepsilon\Xi_{t}\cdot\partial F\left(W +
\varepsilon\int_{0}^{\cdot}\varphi_{s}\ud s;\ud t\right) -
\int_{0}^{T}\varphi_{t}\cdot\xi_{t}\ud t\right].
\end{equation*}
Using $\Xi=\int_{0}^{\cdot}\xi_{s}\ud s$ and integrating by parts in
the first term on the right-hand-side converts this to
\begin{equation*}
\delta G(\varphi;\xi) = \mathbb
E\left[\int_{0}^{T}\left(\varepsilon\partial F\left(W +
\varepsilon\int_{0}^{\cdot}\varphi_{s}\ud s;(t,T]\right) -
\varphi_{t}\right)\cdot\xi_{t}\ud t\right]. 
\end{equation*} 
The first order condition for the optimal control, $\delta
G(\varphi^{*};\xi)=0$, gives that
\begin{equation*}
\mathbb
E\left[\int_{0}^{T}\varepsilon\partial F\left(W +
\varepsilon\int_{0}^{\cdot}\varphi^{*}_{s}\ud
s;(t,T]\right)\cdot\xi_{t}\ud t\right] = \mathbb E\left[\int_{0}^{T}
\varphi^{*}_{t}\cdot\xi_{t}\ud t\right] 
\end{equation*} 
must hold for every adapted process $\xi$, so (\ref{eq:phistar})
follows. Note that this is the analogue of (\ref{eq:vphistar}) when
performing variational differentiation on Wiener space.

\end{proof}

\begin{remark}
\label{rem:diffF}

If $F$ were Fr\'echet-differentiable (respectively,
Malliavin-differentiable with controls $\varphi$ such that
$\int_{0}^{\cdot}\varphi_{s}\ud s\in\mathcal{CM}$) then the optimiser
would be given by\newline $\varphi^{*}_{t}=\varepsilon\mathbb E[
F^{\prime}\left(W+\varepsilon\int_{0}^{\cdot}\varphi^{*}_{s}\ud
s;(t,T]\right)|\mathcal{F}_{t}]$ (respectively,
$\varphi^{*}_{t}=\varepsilon\mathbb E[
D_{t}F\left(W+\varepsilon\int_{0}^{\cdot}\varphi^{*}_{s}\ud
  s\right))|\mathcal{F}_{t}]$.

\end{remark}

\begin{proof}[Proof of Theorem \ref{thm:asex}]

There are two parts to the proof. First, following the method of
Davis \cite{mhad06}, we use Lemma \ref{lem:ipcan} to approximate
$G(\varphi)$ for small $\varepsilon$, and maximise the approximation
with respect to $\varphi$. We then show that {\em if} one were able to
solve the problem exactly, and then approximate the value function
$G(\varphi^{*})$ for small $\varepsilon$, the same result would
ensue. This will use variational arguments and Proposition
\ref{prop:oc}.

Using Lemma \ref{lem:ipcan} and the martingale representation
(\ref{eq:mrF}) of $F(W)$, the objective functional $G(\varphi)$
in (\ref{eq:Gfunc}) is approximated as
\begin{equation*}
G(\varphi) = \mathbb E\left[F(W) +
\int_{0}^{T}\left(\varepsilon\psi_{t}\cdot\varphi_{t} -
\frac{1}{2}\Vert\varphi_{t}\Vert^{2}\right)\ud t\right] +
o(|\varepsilon|\Vert\varphi\Vert_{\infty}).
\end{equation*}
This is maximised over $\varphi$ by choosing
$\varphi=\widehat{\varphi}:=\varepsilon\psi$, to give
\begin{equation*}
G(\widehat{\varphi}) = \mathbb E[F(W)] + \frac{1}{2}\varepsilon^{2}\mathbb
E\left[\int_{0}^{T}\Vert\psi_{t}\Vert^{2}\ud
t\right] + O(\varepsilon^{4}), 
\end{equation*}
with the remainder term of $O(\varepsilon^{4})$ due to
(\ref{eq:errortermc}). Thus, (\ref{eq:asex}) is indeed obtained by
optimising the approximation to $G(\varphi)$.

For the second part of the proof: using (\ref{eq:asymp2}) along with
(\ref{eq:condF}) and the Dominated Convergence Theorem, we write the
value function $G(\varphi^{*})$ as
\begin{equation}
G(\varphi^{*}) = \mathbb E\left[F(W) + \varepsilon\int_{0}^{T}\partial
F(W;(t,T])\cdot\varphi^{*}_{t}\ud t -
\frac{1}{2}\int_{0}^{T}\Vert\varphi^{*}_{t}\Vert^{2}\ud t\right] +
o(|\varepsilon|\Vert\varphi^{*}\Vert_{\infty}).   
\label{eq:Gphistarexp}
\end{equation}
From (\ref{eq:phistar}), it is evident that (under the mild condition
that $\partial F$ possesses a well-defined first variation) 
\begin{equation}
\varphi^{*}_{t} = \varepsilon\mathbb E[\partial
F(W;(t,T])|\mathcal{F}_{t}] +
o(|\varepsilon|\Vert\varphi^{*}\Vert_{\infty}) = \epsilon\psi_{t} +
o(|\varepsilon|\Vert\varphi^{*}\Vert_{\infty}),
\label{eq:varphiexp}
\end{equation}
the last equality following from the Clark formula
(\ref{eq:clark}). Observe that, to first order in $\varepsilon$,
$\varphi^{*}=\widehat{\varphi}$. 

We now show what would happen if we were to impose a perturbative structure
on the optimal control, that is, if we were to write
\begin{equation}
\varphi^{*}_{t} = \varepsilon\psi_{t} +
\varepsilon^{2}\varphi^{(2)}_{t} + \varepsilon^{3}\varphi^{(3)}_{t} +
O(\varepsilon^{4}),    
\label{eq:vphias}
\end{equation}
for some coefficients $\varphi^{(2)},\varphi^{(3)}$. Supposing
such an expansion were possible, and using this in
(\ref{eq:varphiexp}), we would have
\begin{equation*}
\varepsilon\psi_{t} + \varepsilon^{2}\varphi^{(2)}_{t} +
\varepsilon^{3}\varphi^{(3)}_{t} + O(\varepsilon^{4}) =
\varepsilon\psi_{t} + o(\varepsilon^{2}\Vert\psi\Vert_{\infty}).
\end{equation*}
This would imply, in particular, that $\varphi^{(2)}=0$, and then
(\ref{eq:vphias}) converts to
\begin{equation*}
\varepsilon\varphi^{*}_{t} = \varepsilon^{2}\psi_{t} + O(\varepsilon^{4}).  
\end{equation*}
Using this in (\ref{eq:Gphistarexp}) we obtain
\begin{equation*}
G(\varphi^{*}) = \mathbb E\left[F(W) + \varepsilon^{2}\int_{0}^{T}\partial
F(W;(t,T])\cdot\psi_{t}\ud t -
\frac{1}{2}\varepsilon^{2}\int_{0}^{T}\Vert\psi_{t}\Vert^{2}\ud t\right] +
O(\varepsilon^{4}).   
\end{equation*}
One can use iterated expectations and (\ref{eq:clark}) to convert this
to the statement (\ref{eq:asex}) of the theorem.

\end{proof}

\begin{remark}
\label{rem:ifdiff}

For $F$ sufficiently Fr\'echet differentiable (respectively
  Malliavin differentiable with $\int_{0}^{\cdot}\varphi_{s}\ud
  s\in\mathcal{CM}$), the proof of the asymptotic expansion and the
  quantification of the error term would be more straightforward,
  using a Taylor expansion of
  $F(W+\varepsilon\int_{0}^{\cdot}\varphi_{s}\ud s)$ and of the
  optimal control $\varphi^{*}_{t}=\mathbb
  E[F^{\prime}(W+\varepsilon\int_{0}^{\cdot}\varphi^{*}_{s}\ud
  s;(t,T])|\mathcal{F}_{t}]$ (respectively, $\varphi^{*}_{t}=\mathbb
  E[D_{t}F(W+\varepsilon\int_{0}^{\cdot}\varphi^{*}_{s}\ud
  s;(t,T])|\mathcal{F}_{t}]$).
 
\end{remark}

\section{Dynamic dual representations of indifference price 
processes}
\label{sec:ddrip}

In this section we derive a dynamic dual stochastic control
representation for the exponential indifference price process of a
European claim in a locally bounded semi-martingale market. This will
form the basis for our asymptotic expansion of the indifference
price. Our representation is a slight deviation from the usual way of
expressing the indifference price in terms of relative
entropy. Although the material in this section is mainly classical, we
want a unified treatment that gives dynamic results for unbounded
claims, and this is not readily available in one compact account.

Our approach is to begin with the seminal representation of Grandits
and Rheinl\"ander \cite{gr02} and Kabanov and Stricker \cite{kabstr02}
for an entropy-minimising measure, to establish a dynamic version of
this (Corollary \ref{corr:ZF}), and to use this to establish a dynamic
version (Theorem \ref{thm:fundd}) of the duality result of Delbaen
{\em et al} \cite{6auth}. This result has been obtained for a bounded
claim by Mania and Schweizer \cite{manschw05}. We carry out this
program for a claim satisfying exponential moment conditions akin to
those in Becherer \cite{bech03}. Once we establish duality for the
investment problem with random endowment, we obtain a dynamic version
of the classical dual indifference price representation (Corollary
\ref{cor:dripp}). Then we derive a dynamic result on the entropic
distance between measures (Proposition \ref{prop:ed}) using the
results of \cite{gr02,kabstr02} once more, and this allows us to
convert the classical indifference price representation to our
required representation in Lemma \ref{lem:dripnew}.

The setting is a probability space $(\Omega,\mathcal{F},\mathbb P)$
equipped with a filtration $\mathbb F=(\mathcal{F}_{t})_{0\leq t\leq
  T}$ satisfying the usual conditions of right-continuity and
completeness, where $T\in(0,\infty)$ is a fixed time horizon. We
assume that ${\mathcal{F}}_{0}$ is trivial and that
$\mathcal{F}=\mathcal{F}_{T}$. The discounted prices of $d$ stocks are
modelled by a positive locally bounded semi-martingale $S$. Since we
work with discounted assets, our formulae are unencumbered by any
interest rate adjustments. The class $\mathbf{M}$ of equivalent local
martingale measures (ELMMs) $\mathbb Q$ is of course defined by
\begin{equation*}
\mathbf{M} := \{\mathbb Q\sim\mathbb P|\mbox{$S$ is a $\mathbb Q$-local
martingale}\},    
\end{equation*}
and is assumed non-empty. This assumption is a classical one,
consistent with the absence of arbitrage opportunities, in accordance
with Delbaen and Schachermayer \cite{ds94}.  

Denote by $Z^{\mathbb Q}$ the density process with respect to $\mathbb
P$ of any $\mathbb Q\in\mathbf M$. We write $Z^{\mathbb Q,\mathbb M}$
for the density process of $\mathbb Q\in{\bf M}$ with respect to any
measure $\mathbb M$ other than the physical measure $\mathbb P$,
$\mathbb E^{\mathbb M}$ for expectation with respect to $\mathbb M$,
and $\mathbb E$ for $\mathbb E^{\mathbb P}$.

For $0\leq t\leq T$, we write $Z^{\mathbb Q}_{t,T}:=Z^{\mathbb
  Q}_{T}/Z^{\mathbb Q}_{t}$, with a similar convention for any
positive process. The {\em conditional relative entropy} between
$\mathbb Q\in\mathbf M$ and $\mathbb P$ is the process defined by
\begin{equation*}
I_{t}(\mathbb Q|\mathbb P) := \mathbb E^{\mathbb Q}[\log
Z^{\mathbb Q}_{t,T}|\mathcal{F}_{t}], \quad 0\leq t\leq
T,  
\end{equation*}
provided this is almost surely finite. Define the subset of $\mathbf
M$ given by
\begin{equation*}
\mathbf M_{f} := \{\mathbb Q\in\mathbf M|I_{0}(\mathbb Q|\mathbb P)<\infty\},
\end{equation*}
and we assume throughout that this set of ELMMs with finite relative
entropy is non-empty: $\mathbf M_{f}\neq\emptyset$. By Theorem 2.1 of
Frittelli \cite{fri00}, this implies that there exists a unique
$\mathbb Q^{0}\in\mathbf M_{f}$, the minimal entropy martingale
measure (MEMM), that minimises $I_{0}(\mathbb Q|\mathbb P)$ over all
$\mathbb Q\in\mathbf M_{f}$. It is well-known (for example,
Proposition 4.1 of Kabanov and Stricker \cite{kabstr02}) that the
density process $Z^{\mathbb Q^{0}}$ also minimises the conditional
relative entropy process $I(\mathbb Q|\mathbb P)$ between $\mathbb
Q\in\mathbf M_{f}$ and $\mathbb P$.

The density process of one martingale measure with respect to another
is simply the ratio of their density processes with respect to
$\mathbb P$, as shown in the following lemma.

\begin{lemma}
\label{lem:dpelmm}

Let $\mathbb Q_{1},\mathbb Q_{2}\in\mathbf M_{f}$ have density
processes $Z^{\mathbb Q_{1}},Z^{\mathbb Q_{2}}$ with respect to
$\mathbb P$. Then the density process of $\mathbb Q_{1}$ with respect
to $\mathbb Q_{2}$ is $Z^{\mathbb Q_{1}}/Z^{\mathbb Q_{2}}$.

\end{lemma}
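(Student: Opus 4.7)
The plan is to verify directly that the process $R_t := Z^{\mathbb Q_1}_t/Z^{\mathbb Q_2}_t$ (which is well-defined since $\mathbb Q_2 \sim \mathbb P$ implies $Z^{\mathbb Q_2} > 0$ almost surely) fulfills the defining property of a density process, namely that for each $t \in [0,T]$ the random variable $R_t$ is a version of the Radon--Nikodym derivative of $\mathbb Q_1|_{\mathcal F_t}$ with respect to $\mathbb Q_2|_{\mathcal F_t}$.

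The key computation is a short Bayes-style calculation: for an arbitrary $A \in \mathcal F_t$, I would write
\begin{equation*}
\mathbb E^{\mathbb Q_2}\!\left[ R_t \, \mathbbm 1_A \right]
= \mathbb E\!\left[ \frac{Z^{\mathbb Q_1}_t}{Z^{\mathbb Q_2}_t} \, Z^{\mathbb Q_2}_T \, \mathbbm 1_A \right],
\end{equation*}
then condition on $\mathcal F_t$ and pull out the $\mathcal F_t$-measurable factor $R_t \mathbbm 1_A$, leaving $\mathbb E[Z^{\mathbb Q_2}_T \mid \mathcal F_t]$. Since $Z^{\mathbb Q_2}$ is a $\mathbb P$-martingale by definition of the density process, this conditional expectation equals $Z^{\mathbb Q_2}_t$, which cancels the denominator to yield $\mathbb E[Z^{\mathbb Q_1}_t \mathbbm 1_A] = \mathbb Q_1(A)$, where the last equality again uses the definition of $Z^{\mathbb Q_1}$. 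Hence $R_t$ acts as the desired density on $\mathcal F_t$ under $\mathbb Q_2$, so by uniqueness (up to $\mathbb Q_2$-null sets, equivalently $\mathbb P$-null sets) it is the value at time $t$ of the density process $Z^{\mathbb Q_1,\mathbb Q_2}$.

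There is essentially no obstacle; the only points requiring a brief remark are that the manipulations are legitimate because the densities are strictly positive (ensured by $\mathbb Q_1,\mathbb Q_2 \sim \mathbb P$) and integrable under the appropriate measures, and that $R$ automatically inherits the $(\mathbb F,\mathbb Q_2)$-martingale property from the identity $\mathbb E^{\mathbb Q_2}[R_T \mid \mathcal F_t] = R_t$, which is just the case $A = \Omega$ localised to $\mathcal F_t$-sets of the computation above. The hypothesis $\mathbb Q_1, \mathbb Q_2 \in \mathbf M_f$ is stronger than needed—equivalence to $\mathbb P$ is all that is used—but the statement is phrased for measures in $\mathbf M_f$ because that is the class relevant to the subsequent applications.
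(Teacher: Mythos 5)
Your proposal is correct and amounts to the same Bayes-rule calculation as the paper, just run in the opposite direction: the paper starts from the terminal density $Z^{\mathbb Q_1,\mathbb Q_2}_T = Z^{\mathbb Q_1}_T/Z^{\mathbb Q_2}_T$ and projects it to $\mathcal F_t$ under $\mathbb Q_2$ via the Bayes formula (which in turn rests on the $\mathbb P$-martingale property of $Z^{\mathbb Q_1}$), whereas you propose the candidate $R_t = Z^{\mathbb Q_1}_t/Z^{\mathbb Q_2}_t$ and verify the Radon--Nikodym defining identity $\mathbb E^{\mathbb Q_2}[R_t\mathbbm 1_A] = \mathbb Q_1(A)$ for $A\in\mathcal F_t$ directly, invoking the $\mathbb P$-martingale property of $Z^{\mathbb Q_2}$ after conditioning on $\mathcal F_t$. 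Both arguments rely on exactly the same ingredients and your closing remark that only $\mathbb Q_1,\mathbb Q_2\sim\mathbb P$ is needed (not membership in $\mathbf M_f$) is accurate.
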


\begin{proof}

Denote by $Z^{\mathbb Q_{1},\mathbb Q_{2}}$ the density process of
$\mathbb Q_{1}$ with respect to $\mathbb Q_{2}$. We have
\begin{equation*}
Z^{\mathbb Q_{1},\mathbb Q_{2}}_{T} := \frac{\ud\mathbb Q_{1}}{\ud\mathbb Q_{2}} =
\frac{\ud\mathbb Q_{1}}{\ud\mathbb P}\left(\frac{\ud\mathbb
Q_{2}}{\ud\mathbb P}\right)^{-1} = \frac{Z^{\mathbb
Q_{1}}_{T}}{Z^{\mathbb Q_{2}}_{T}}.
\end{equation*}
Hence, the $\mathbb Q_{2}$-martingale $Z^{\mathbb Q_{1},\mathbb
Q_{2}}$ is given by
\begin{eqnarray*}
Z^{\mathbb Q_{1},\mathbb Q_{2}}_{t} = \mathbb E^{\mathbb
Q_{2}}[Z^{\mathbb Q_{1},\mathbb Q_{2}}_{T}\vert\mathcal{F}_{t}]
& = & \mathbb E^{\mathbb Q_{2}}\left[\left.\frac{Z^{\mathbb
Q_{1}}_{T}}{Z^{\mathbb Q_{2}}_{T}}\right\vert\mathcal{F}_{t}\right] \\ 
& = & \frac{1}{Z^{\mathbb Q_{2}}_{t}}\mathbb E[Z^{\mathbb
Q_{1}}_{T}|\mathcal{F}_{t}] = \frac{Z^{\mathbb
Q_{1}}_{t}}{Z^{\mathbb Q_{2}}_{t}}, \quad 0\leq t\leq T, 
\end{eqnarray*}
the penultimate equality following from the Bayes rule applied between
$\mathbb Q_{2}$ and $\mathbb P$, and the final equality from the
fact that $Z^{\mathbb Q_{1}}$ is a $\mathbb P$-martingale.
  
\end{proof}

A financial agent trades $S$ and has risk preferences represented by
the exponential utility function
\begin{equation*}
U(x) = - \exp(-\alpha x), \quad \alpha >0, \quad x\in\mathbb R,  
\end{equation*}
with risk aversion coefficient $\alpha$. A European contingent claim
has $\mathcal{F}_{T}$-measurable payoff $F$. Following Becherer
\cite{bech03} and others, we assume that $F$ satisfies suitable
exponential moment conditions:
\begin{equation}
\mathbb E[\exp((\alpha + \varepsilon)F)] < \infty, \quad \mathbb
E[\exp(-\varepsilon F)] <\infty, \quad \mbox{for some $\varepsilon>0$.}
\label{eq:assB} 
\end{equation}
Condition (\ref{eq:assB}) is sufficient to guarantee that $F$ is
$\mathbb Q$-integrable for any $\mathbb Q\in\mathbf M_{f}$ (see for
example Lemma A.1 in Becherer \cite{bech03}).

\subsection{The dynamic primal and dual problems}
\label{subsec:dpdp}

The set $\Theta$ of admissible trading strategies is defined as the
set of $S$-integrable processes $\theta$ such that the stochastic
integral $\theta\cdot S$ is a $\mathbb Q$-martingale for every
$\mathbb Q\in\mathbf M_{f}$, where $\theta$ is a $d$-dimensional
vector representing the number of shares of each stock in the vector
$S$. It is well-known \cite{bech03,6auth,kabstr02,schach01,schach03}
that there are a number of possible choices for a feasible set of
permitted strategies, which all lead to the same value for the dual
problem, defined further below, and it is on this latter problem that
our analysis will be centred. For any $t\in[0,T]$, fix an
$\mathcal{F}_{t}$-measurable random variable $x_{t}$, representing
initial capital. Let $\Theta_{t}$ denote admissible strategies
beginning at $t$.

The primal problem is to maximise expected utility of terminal wealth
generated from trading $S$ and paying the claim payoff at $T$. The
maximal expected utility process is
\begin{equation}
u^{F}_{t}(x_{t}) := \esssup_{\theta\in\Theta_{t}}\mathbb
E\left[\left.-{\rm e}^{-\alpha\left(x_{t} + \int_{t}^{T}\theta_{u}\cdot\ud
  S_{u} - F\right)}\right\vert{\mathcal F}_{t}\right], \quad
0\leq t\leq T,
\label{eq:primal}
\end{equation}
with $\int_{t}^{T}\theta_{u}\cdot\ud
S_{u}=\sum_{i=1}^{d}\int_{t}^{T}\theta^{i}_{u}\ud S^{i}_{u}$.

We shall use the notational convention whereby setting $F=0$ in
(\ref{eq:primal}) signifies the corresponding quantity in the problem
without the claim. Hence, the classical investment problem without the
claim has maximal expected utility process $u^{0}$. Denote the
optimiser in (\ref{eq:primal}) by $\theta^{F}$, so $\theta^{0}$ is the
optimiser in the problem without the claim.

The utility indifference price process for the claim, $p(\alpha)$, is
defined by
\begin{equation*}
u_{t}^{F}(x_{t} + p_{t}(\alpha)) = u_{t}^{0}(x_{t}), \quad 0\leq t\leq
T.  
\end{equation*}
It is well-known (see for instance Becherer \cite{bech03} or Mania and
Schweizer \cite{manschw05}) that, with exponential utility,
$p(\alpha)$ has no dependence on the starting capital (this follows
from (\ref{eq:primal}), where the initial capital factors out of the
optimisation). The hedging strategy associated with this pricing
mechanism is $\theta(\alpha)$, defined by
\begin{equation*}
\theta(\alpha) := \theta^{F} - \theta^{0}.  
\end{equation*}

The dual problem to (\ref{eq:primal}) is defined by
\begin{equation}
I^{F}_{t} := \essinf_{\mathbb Q\in\mathbf M_{f}}\left[I_{t}(\mathbb
Q|\mathbb P) - \alpha\mathbb E^{\mathbb Q}[F|\mathcal{F}_{t}]\right],
\quad 0\leq t\leq T.   
\label{eq:dual}
\end{equation}
Denote the optimiser in (\ref{eq:dual}) by $\mathbb Q^{F}$, so the
optimiser without the claim is $\mathbb Q^{0}$, the MEMM.

It is well-known (at least in a static context) that if we define the
measure $\mathbb P_{F}\sim\mathbb P$ by
\begin{equation}
\frac{\ud\mathbb P_{F}}{\ud\mathbb P} := \frac{\exp(\alpha F)}{\mathbb
E[\exp(\alpha F)]},  \label{eq:PB}
\end{equation}
then we can use $\mathbb P_{F}$ instead of $\mathbb P$ as our
reference measure, and this removes the claim from the primal and dual
problems. In the dual picture, therefore, $\mathbb Q^{F}$ is the
martingale measure which minimises the relative entropy between any
$\mathbb Q\in\mathbf M_{f}$ and $\mathbb P_{F}$. These properties of
$\mathbb P_{F}$ are well-known in a static context from Delbaen {\em
  et al} \cite{6auth}. The dynamic analogue of these arguments is
given below.

Note that if we use $\mathbb P_{F}$ instead of $\mathbb P$ as
reference measure, one could (in principle) define a set $\mathbf
M_{f}(\mathbb P_{F})$ of ELMMs with finite relative entropy with
respect to $\mathbb P_{F}$, but it is well-known that $\mathbf
M_{f}(\mathbb P_{F})=\mathbf M_{f}(\mathbb P)$ (see the statement and
proof of Lemma A.1 in Becherer \cite{bech03}, for example) so we
simply write $\mathbf M_{f}$.

Define the $\mathbb P$-martingale $M^{F}$ as the density process of
$\mathbb P_{F}$ with respect to $\mathbb P$:
\begin{equation*}
M^{F}_{t} := \left.\frac{\ud\mathbb P_{F}}{\ud\mathbb
  P}\right\vert_{\mathcal{F}_{t}} = \mathbb
E\left[\left.\frac{\ud\mathbb P_{F}}{\ud\mathbb
    P}\right\vert\mathcal{F}_{t}\right] = \frac{\mathbb E[e^{\alpha
      F}|\mathcal{F}_{t}]}{\mathbb E[e^{\alpha F}]}, \quad 0\leq t\leq T,
\end{equation*}
which satisfies, for any integrable $\mathcal{F}_{T}$-measurable
random variable $V$,
\begin{equation}
\mathbb E^{\mathbb P_{F}}[V|\mathcal{F}_{t}] =
\frac{1}{M^{F}_{t}}\mathbb E[M^{F}_{T}V|\mathcal{F}_{t}], \quad 0\leq
t\leq T.
\label{eq:MFV}
\end{equation}
We ``remove the claim'' from the primal problem using the measure
$\mathbb P_{F}$ as follows. Using (\ref{eq:MFV}) we convert
(\ref{eq:primal}) to
\begin{equation*}
u^{F}_{t}(x_{t}) := \mathbb E[e^{\alpha
F}|\mathcal{F}_{t}]\esssup_{\theta\in\Theta_{t}}\mathbb E^{\mathbb
P_{F}}\left[\left.-{\rm e}^{-\alpha\left(x_{t} +
\int_{t}^{T}\theta_{u}\cdot\ud S_{u}\right)}\right\vert{\mathcal
F}_{t}\right], \quad 0\leq t\leq T,
\end{equation*}
from which it is apparent that one may optimise over strategies in a
problem without the claim and with $\mathbb P_{F}$ as reference
measure. The same approach also works, of course, for the dual
problem, as we show below. We shall need the following simple result
relating the density process of any $\mathbb Q\in\mathbf M_{f}$ with
respect to $\mathbb P$ to its counterpart with respect to $\mathbb
P_{F}$.

\begin{lemma}
\label{lem:ZQFZQ}

For any $\mathbb Q\in\mathbf M_{f}$, the density processes $Z^{\mathbb
  Q}$ and $Z^{\mathbb Q,\mathbb P_{F}}$ are related by
\begin{equation*}
Z^{\mathbb Q}_{t} = M^{F}_{t}Z^{\mathbb Q,\mathbb P_{F}}_{t}, \quad 0\leq
t\leq T.  
\end{equation*}
  
\end{lemma}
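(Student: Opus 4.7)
The plan is to adapt the argument of Lemma \ref{lem:dpelmm} essentially verbatim, with $\mathbb P_F$ playing the role of the ``reference'' measure in place of a second ELMM. Since $\mathbb P_F\sim\mathbb P$ with $\ud\mathbb P_F/\ud\mathbb P=M^F_T$, the terminal Radon-Nikod\'ym derivative of $\mathbb Q\in\mathbf M_f$ with respect to $\mathbb P_F$ is simply
\begin{equation*}
Z^{\mathbb Q,\mathbb P_F}_T = \frac{\ud\mathbb Q}{\ud\mathbb P_F} = \frac{\ud\mathbb Q}{\ud\mathbb P}\left(\frac{\ud\mathbb P_F}{\ud\mathbb P}\right)^{-1} = \frac{Z^{\mathbb Q}_T}{M^F_T}.
\end{equation*}

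Next, I would take the $\mathbb P_F$-conditional expectation of the terminal density to recover the full density process. Using the Bayes rule (\ref{eq:MFV}) applied to $V=Z^{\mathbb Q}_T/M^F_T$ gives
\begin{equation*}
Z^{\mathbb Q,\mathbb P_F}_t = \mathbb E^{\mathbb P_F}\!\left[\left.\frac{Z^{\mathbb Q}_T}{M^F_T}\right\vert\mathcal F_t\right] = \frac{1}{M^F_t}\mathbb E\!\left[M^F_T\cdot\frac{Z^{\mathbb Q}_T}{M^F_T}\,\Big\vert\,\mathcal F_t\right] = \frac{1}{M^F_t}\mathbb E[Z^{\mathbb Q}_T\vert\mathcal F_t] = \frac{Z^{\mathbb Q}_t}{M^F_t},
\end{equation*}
where the last equality uses that $Z^{\mathbb Q}$ is a $\mathbb P$-martingale. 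Rearranging yields $Z^{\mathbb Q}_t=M^F_tZ^{\mathbb Q,\mathbb P_F}_t$ as claimed.

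There is essentially no obstacle here; the assertion is a direct consequence of Bayes' rule plus the martingale property of $Z^{\mathbb Q}$ under $\mathbb P$. The only thing worth checking (implicitly) is integrability of $Z^{\mathbb Q}_T/M^F_T$ under $\mathbb P_F$, but this is automatic since $\mathbb E^{\mathbb P_F}[Z^{\mathbb Q}_T/M^F_T]=\mathbb E[Z^{\mathbb Q}_T]=1$. Note also that one cannot invoke Lemma \ref{lem:dpelmm} directly, because $\mathbb P_F$ is not in general an ELMM; however, the proof structure is identical, with the exponential-moment hypotheses (\ref{eq:assB}) on $F$ ensuring that $\mathbb P_F$ is well-defined and equivalent to $\mathbb P$.
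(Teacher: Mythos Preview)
Your proof is correct and follows essentially the same route as the paper's own argument: both compute the terminal density $Z^{\mathbb Q,\mathbb P_F}_T=Z^{\mathbb Q}_T/M^F_T$, then apply the Bayes rule (\ref{eq:MFV}) and the $\mathbb P$-martingale property of $Z^{\mathbb Q}$ to obtain $Z^{\mathbb Q,\mathbb P_F}_t=Z^{\mathbb Q}_t/M^F_t$. Your added remarks on integrability and on why Lemma~\ref{lem:dpelmm} cannot be invoked directly are accurate but not needed for the argument.
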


\begin{proof}
  
For $\mathbb Q\in\mathbf M_{f}$, we have
\begin{eqnarray*}
Z^{\mathbb Q,\mathbb P_{F}}_{t} = \mathbb E^{\mathbb
P_{F}}\left[\left.\frac{\ud\mathbb Q}{\ud\mathbb
P_{F}}\right\vert\mathcal{F}_{t}\right] & = & \mathbb E^{\mathbb
P_{F}}\left[\left.\frac{\ud\mathbb Q}{\ud\mathbb
P}\bigg/\frac{\ud\mathbb P_{F}}{\ud\mathbb
P}\right\vert\mathcal{F}_{t}\right] \\
& = & \mathbb E^{\mathbb
P_{F}}\left[\left.\frac{1}{M^{F}_{T}}\frac{\ud\mathbb Q}{\ud\mathbb
P}\right\vert\mathcal{F}_{t}\right] \\
& = & \frac{1}{M^{F}_{t}}\mathbb
E\left[\left.\frac{\ud\mathbb Q}{\ud\mathbb
P}\right\vert\mathcal{F}_{t}\right] \\
& = & \frac{Z^{\mathbb Q}_{t}}{M^{F}_{t}}, \quad 0\leq t\leq T,
\end{eqnarray*}
where we have used (\ref{eq:MFV}). 

\end{proof}

Applying Lemma \ref{lem:ZQFZQ} in turn at $t\leq T$ and at $T$, we obtain
\begin{equation}
Z^{\mathbb Q,\mathbb P_{F}}_{t,T} = \frac{Z^{\mathbb
Q}_{t,T}}{M^{F}_{t,T}} = \frac{\mathbb E[{\rm e}^{\alpha
F}|\mathcal{F}_{t}]}{{\rm e}^{\alpha F}}Z^{\mathbb Q}_{t,T}, \quad 0\leq
t\leq T.  
\label{eq:ZFtT}
\end{equation}
We use this to ``remove the claim'' from the dual problem
(\ref{eq:dual}): compute, for any $\mathbb Q\in\mathbf
M_{f}$,
\begin{eqnarray*}
I_{t}(\mathbb Q|\mathbb P_{F}) & = & \mathbb E^{\mathbb Q}[\log
Z^{\mathbb Q,\mathbb P_{F}}_{t,T}|\mathcal{F}_{t}] \\
& = & I_{t}(\mathbb Q|\mathbb P) - \alpha\mathbb E^{\mathbb
Q}[F|\mathcal{F}_{t}] + \log(\mathbb E[{\rm e}^{\alpha
F}|\mathcal{F}_{t}]), \quad 0\leq t\leq T.  
\end{eqnarray*}
Using this in (\ref{eq:dual}), we obtain
\begin{equation}
I^{F}_{t} = \essinf_{\mathbb Q\in\mathbf M_{f}}[I_{t}(\mathbb
Q|\mathbb P_{F})] - \log(\mathbb E[{\rm e}^{\alpha
F}|\mathcal{F}_{t}]), \quad 0\leq t\leq T.  
\label{eq:IFI}
\end{equation}
Since the last term on the right-hand-side does not depend on $\mathbb
Q$, we see that we can reduce the dual problem to the problem
\begin{equation*}
I_{t}(\mathbb Q|\mathbb P_{F}) \longrightarrow \min !,  
\end{equation*}
so that $\mathbb Q^{F}$ minimises $I(\mathbb Q|\mathbb P_{F})$, and,
when $F=0$, $\mathbb Q^{0}$ is the MEMM.

\subsection{The fundamental duality}
\label{subsec:fd}

The duality results we need follow from the representation below for
$Z^{\mathbb Q^{F},\mathbb P_{F}}$, originally proven independently (to
the best of our knowledge) by Grandits and Rheinl\"ander \cite{gr02}
and Kabanov and Stricker \cite{kabstr02} for $F=0$ (and hence for
$Z^{\mathbb Q^{0}}$), but which applies equally well to $\mathbb
Q^{F}$ if we use $\mathbb P_{F}$ as reference measure. Both
\cite{gr02} and \cite{kabstr02} prove the result for a market
involving a locally bounded semi-martingale $S$. This has been
generalised to a general semi-martingale by Biagini and Frittelli
\cite{bf07}.

\begin{property}[\cite{gr02,kabstr02}]
\label{property:gr}

The density of the dual minimiser $\mathbb Q^{F}$ in (\ref{eq:dual})
with respect to the measure $\mathbb P_{F}$ defined in (\ref{eq:PB})
is given by
\begin{equation}
\frac{\ud\mathbb Q^{F}}{\ud\mathbb P_{F}} \equiv Z^{\mathbb
  Q^{F},\mathbb P_{F}}_{T} = c_{F}\exp(-\alpha(\theta^{F}\cdot
S)_{T}), \quad c_{F}\in\mathbb R_{+},  
\label{eq:grF}
\end{equation}
where $\theta^{F}\in\Theta$ is the optimal strategy in the primal
problem (\ref{eq:primal}) and the stochastic integral
$(\theta^{F}\cdot S)$ is a $\mathbb Q$-martingale for any $\mathbb
Q\in\mathbf M_{f}$.

\end{property}

We convert this to the dynamic result below, in which we also restore
$\mathbb P$ as reference measure.

\begin{corollary}
\label{corr:ZF}

The density process $Z^{\mathbb Q^{F}}$ of the dual minimiser $\mathbb
Q^{F}$ in (\ref{eq:dual}) satisfies, for $t\in[0,T]$, 
\begin{equation}
Z^{\mathbb Q^{F}}_{t,T} = \exp\left[I_{t}(\mathbb Q^{F}|\mathbb P) -
\alpha\left(\mathbb E^{\mathbb Q^{F}}[F|\mathcal{F}_{t}] +
\int_{t}^{T}\theta^{F}_{u}\cdot\ud S_{u} - F\right)\right], 
\label{eq:ZQF}
\end{equation}
where $\theta^{F}\in\Theta$ is the optimal strategy in the primal
problem (\ref{eq:primal}).
  
\end{corollary}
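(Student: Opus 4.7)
The plan is to start from Property \ref{property:gr}, which gives the terminal density $Z^{\mathbb Q^{F},\mathbb P_{F}}_{T}=c_{F}\exp(-\alpha(\theta^{F}\cdot S)_{T})$, recover the full process $Z^{\mathbb Q^{F},\mathbb P_{F}}$ by a logarithmic manipulation that pins down the normalising constant in terms of conditional entropy, and then transport the result from $\mathbb P_{F}$ back to $\mathbb P$ using the change-of-reference relation (\ref{eq:ZFtT}) from Lemma \ref{lem:ZQFZQ}.

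First I would write $\log Z^{\mathbb Q^{F},\mathbb P_{F}}_{t,T}=\log Z^{\mathbb Q^{F},\mathbb P_{F}}_{T}-\log Z^{\mathbb Q^{F},\mathbb P_{F}}_{t}$ and take $\mathbb Q^{F}$-conditional expectation. The left-hand side becomes $I_{t}(\mathbb Q^{F}|\mathbb P_{F})$ by definition, and on the right-hand side Property \ref{property:gr} and the $\mathbb Q^{F}$-martingale property of $\theta^{F}\cdot S$ give
\begin{equation*}
I_{t}(\mathbb Q^{F}|\mathbb P_{F})=\log c_{F}-\alpha(\theta^{F}\cdot S)_{t}-\log Z^{\mathbb Q^{F},\mathbb P_{F}}_{t}.
\end{equation*}
Substituting this expression for $\log Z^{\mathbb Q^{F},\mathbb P_{F}}_{t}$ back, the constant $\log c_{F}$ cancels and I obtain the clean identity
\begin{equation*}
\log Z^{\mathbb Q^{F},\mathbb P_{F}}_{t,T}=-\alpha\int_{t}^{T}\theta^{F}_{u}\cdot\ud S_{u}+I_{t}(\mathbb Q^{F}|\mathbb P_{F}).
\end{equation*}

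Next I would convert this to a statement about $Z^{\mathbb Q^{F}}_{t,T}$ via (\ref{eq:ZFtT}), giving
\begin{equation*}
\log Z^{\mathbb Q^{F}}_{t,T}=\alpha F-\log\mathbb E[{\rm e}^{\alpha F}|\mathcal{F}_{t}]-\alpha\int_{t}^{T}\theta^{F}_{u}\cdot\ud S_{u}+I_{t}(\mathbb Q^{F}|\mathbb P_{F}).
\end{equation*}
To eliminate $\log\mathbb E[{\rm e}^{\alpha F}|\mathcal{F}_{t}]$ and $I_{t}(\mathbb Q^{F}|\mathbb P_{F})$ in favour of $I_{t}(\mathbb Q^{F}|\mathbb P)$ and $\mathbb E^{\mathbb Q^{F}}[F|\mathcal{F}_{t}]$, I would invoke the conditional version of (\ref{eq:IFI}) (which already appears in the derivation of (\ref{eq:IFI}) just above), namely
\begin{equation*}
I_{t}(\mathbb Q^{F}|\mathbb P_{F})=I_{t}(\mathbb Q^{F}|\mathbb P)-\alpha\mathbb E^{\mathbb Q^{F}}[F|\mathcal{F}_{t}]+\log\mathbb E[{\rm e}^{\alpha F}|\mathcal{F}_{t}].
\end{equation*}
Plugging this in, the $\log\mathbb E[{\rm e}^{\alpha F}|\mathcal{F}_{t}]$ terms cancel, and exponentiating yields exactly (\ref{eq:ZQF}).

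The only non-routine step is the one that identifies $\log Z^{\mathbb Q^{F},\mathbb P_{F}}_{t}$: it requires knowing that $\theta^{F}\cdot S$ is a true $\mathbb Q^{F}$-martingale (so that the conditional expectation kills $(\theta^{F}\cdot S)_{T}-(\theta^{F}\cdot S)_{t}$) and that the conditional entropy $I_{t}(\mathbb Q^{F}|\mathbb P_{F})$ is a.s.\ finite, which is guaranteed by the standing assumption $\mathbf M_{f}\neq\emptyset$ together with the exponential moment condition (\ref{eq:assB}) on $F$. Everything else is bookkeeping with Bayes' rule and the change-of-reference-measure identity of Lemma \ref{lem:ZQFZQ}.
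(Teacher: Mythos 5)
Your argument follows essentially the same route as the paper's own proof: start from Property~\ref{property:gr}, pin down $\log Z^{\mathbb Q^{F},\mathbb P_{F}}_{t}$ by taking $\mathbb Q^{F}$-conditional expectations and invoking the $\mathbb Q^{F}$-martingale property of $\theta^{F}\cdot S$, transport back to $\mathbb P$ via (\ref{eq:ZFtT}), and use the conditional relation between $I_{t}(\cdot|\mathbb P_{F})$ and $I_{t}(\cdot|\mathbb P)$. The only cosmetic difference is that the paper routes the final substitution through the definitions of $I^{F}_{t}$ in (\ref{eq:IFI}) and (\ref{eq:dual}), whereas you substitute the entropy identity directly; the steps are otherwise identical and the proof is correct.
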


\begin{proof}

First, we obtain a dynamic version of (\ref{eq:grF}). Using
(\ref{eq:grF}) and the $\mathbb Q^{F}$-martingale property of
$(\theta^{F}\cdot S)$, we have
\begin{eqnarray*}
I_{t}(\mathbb Q^{F}|\mathbb P_{F}) & = & \mathbb E^{\mathbb Q^{F}}[\log
Z^{\mathbb Q^{F},\mathbb P_{F}}_{t,T}|\mathcal{F}_{t}] \\  
& = & \mathbb E^{\mathbb Q^{F}}[\log c_{F} - \alpha(\theta^{F}\cdot
S)_{T}|\mathcal{F}_{t}] - \log Z^{\mathbb Q^{F},\mathbb P_{F}}_{t} \\
& = & \log c_{F} - \alpha(\theta^{F}\cdot S)_{t} - \log Z^{\mathbb
Q^{F},\mathbb P_{F}}_{t}, \quad 0\leq t\leq T.
\end{eqnarray*}
Using this in turn at $t\leq T$ and $T$ we obtain
\begin{equation*}
Z^{\mathbb Q^{F},\mathbb P_{F}}_{t,T} =
c^{F}_{t}\exp\left(-\alpha\int_{t}^{T}\theta^{F}_{u}\cdot\ud
S_{u}\right), \quad c^{F}_{t} := \exp(I_{t}(\mathbb Q^{F}|\mathbb P_{F}),
  \quad 0\leq t\leq T,   
\end{equation*}
which is a dynamic version of (\ref{eq:grF}). Using this along with
(\ref{eq:ZFtT}) and (\ref{eq:IFI}) we obtain
\begin{equation*}
Z^{\mathbb Q^{F}}_{t,T} = \exp\left(I^{F}_{t} -
\alpha\int_{t}^{T}\theta^{F}_{u}\cdot\ud S_{u} + \alpha F\right),
\quad 0\leq t\leq T.
\end{equation*}
Finally, using the definition (\ref{eq:dual}) of $I^{F}$ gives the
result. 
  
\end{proof}

Corollary \ref{corr:ZF} is nothing more than a dynamic version of the
classical result of Grandits and Rheinl\"ander \cite{gr02} and Kabanov
and Stricker \cite{kabstr02} for the MEMM, with the added
generalisation of allowing for $\mathbb P_{F}$ as reference
measure. It leads immediately to the duality result below, a dynamic
version of the duality in Delbaen {\em et al} \cite{6auth}. This
result is stated in Mania and Schweizer \cite{manschw05} for a bounded
claim. We give a proof to highlight that the boundedness condition on
the claim is not needed.

\begin{theorem}[\cite{6auth,bech03,kabstr02,manschw05}]
\label{thm:fundd}

Suppose the claim payoff $F$ satisfies the exponential moment
conditions (\ref{eq:assB}). Then the maximal expected utility process
in (\ref{eq:primal}) and the optimal dual process in (\ref{eq:dual})
are related by
\begin{equation}
u^{F}_{t}(x_{t}) = -\exp\left(-\alpha x_{t} - I^{F}_{t}\right),
\quad 0\leq t\leq T. \label{eq:fundd}
\end{equation}

\end{theorem}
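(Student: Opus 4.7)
The plan is a standard weak duality plus attainment argument, where weak duality comes from a conditional Fenchel--Young inequality for exponential utility, and attainment is supplied directly by Corollary \ref{corr:ZF}.

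For weak duality, I would start from the elementary inequality
\begin{equation*}
-\mathrm{e}^{-\alpha x} \;\leq\; -y + y\log y + \alpha y x, \qquad x\in\mathbb R,\; y>0,
\end{equation*}
which is the Fenchel--Young inequality for $U(x)=-\mathrm{e}^{-\alpha x}$ and its convex conjugate. For arbitrary $\theta\in\Theta_{t}$, $\mathbb Q\in\mathbf M_{f}$ and constant $c>0$, set $x=x_{t}+\int_{t}^{T}\theta_{u}\cdot\ud S_{u}-F$ and $y=c\,Z^{\mathbb Q}_{t,T}$, and take $\mathbb E[\,\cdot\mid\mathcal F_{t}]$. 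The $\alpha y x$ term yields, via Bayes, $\alpha c Z^{\mathbb Q}_{t,T}(x_{t}+\int_{t}^{T}\theta\cdot\ud S - F)$ whose conditional expectation is $\alpha c\,x_{t}-\alpha c\,\mathbb E^{\mathbb Q}[F\mid\mathcal F_{t}]$, using that $\theta\cdot S$ is a $\mathbb Q$-martingale by admissibility. The $y\log y$ term produces $c\log c + c\,I_{t}(\mathbb Q\mid\mathbb P)$, again via Bayes. Collecting terms gives
\begin{equation*}
u^{F}_{t}(x_{t})\;\leq\; c\bigl(-1+\log c + \alpha x_{t}\bigr) + c\bigl(I_{t}(\mathbb Q\mid\mathbb P)-\alpha\mathbb E^{\mathbb Q}[F\mid\mathcal F_{t}]\bigr).
\end{equation*}
Taking the essential infimum over $\mathbb Q\in\mathbf M_{f}$ brings in $c\,I^{F}_{t}$, and a pointwise minimisation over $c>0$ (at $c=\exp(-\alpha x_{t}-I^{F}_{t})$) yields $u^{F}_{t}(x_{t})\leq -\exp(-\alpha x_{t}-I^{F}_{t})$.

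For the reverse inequality, Corollary \ref{corr:ZF} gives
\begin{equation*}
Z^{\mathbb Q^{F}}_{t,T} \;=\; \exp\!\Bigl[I^{F}_{t} - \alpha\Bigl(\int_{t}^{T}\theta^{F}_{u}\cdot\ud S_{u} - F\Bigr)\Bigr],
\end{equation*}
using the definition $I^{F}_{t}=I_{t}(\mathbb Q^{F}\mid\mathbb P)-\alpha\mathbb E^{\mathbb Q^{F}}[F\mid\mathcal F_{t}]$. Plugging the candidate strategy $\theta^{F}\in\Theta$ into the primal objective and isolating $Z^{\mathbb Q^{F}}_{t,T}$ gives
\begin{equation*}
\mathbb E\Bigl[-\mathrm{e}^{-\alpha(x_{t}+\int_{t}^{T}\theta^{F}_{u}\cdot\ud S_{u}-F)}\Bigm|\mathcal F_{t}\Bigr] = -\mathrm{e}^{-\alpha x_{t}-I^{F}_{t}}\,\mathbb E[Z^{\mathbb Q^{F}}_{t,T}\mid\mathcal F_{t}] = -\mathrm{e}^{-\alpha x_{t}-I^{F}_{t}},
\end{equation*}
since $Z^{\mathbb Q^{F}}$ is a $\mathbb P$-martingale. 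This matches the upper bound, giving (\ref{eq:fundd}) and showing that the supremum in (\ref{eq:primal}) is attained at $\theta^{F}$.

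The technical obstacle is not the algebra but the integrability needed to apply Bayes and to guarantee that $c Z^{\mathbb Q}_{t,T}\log(c Z^{\mathbb Q}_{t,T})$ is $\mathbb P$-integrable and that $Z^{\mathbb Q}_{t,T}\int_{t}^{T}\theta\cdot\ud S$ has vanishing conditional expectation; this is exactly why (\ref{eq:assB}) is imposed and why $\Theta$ is defined so that $\theta\cdot S$ is a $\mathbb Q$-martingale under every $\mathbb Q\in\mathbf M_{f}$. The exponential moment conditions also ensure $F$ is $\mathbb Q^{F}$-integrable, so $I^{F}_{t}$ is finite a.s., and hence the pointwise optimisation over $c$ above is legitimate. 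Everything else (passing essential infimum inside, working conditionally rather than unconditionally) is standard provided these integrability ingredients are in place.
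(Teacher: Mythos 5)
Your proof is correct, and it is more self-contained than the paper's. The paper's proof is a one-line computation: it takes for granted (from Property~\ref{property:gr} and the cited references) that $\theta^F$ is the optimiser of the primal problem~(\ref{eq:primal}), then plugs $\theta^F$ in and uses Corollary~\ref{corr:ZF} to substitute for $\exp(-\alpha(\int_t^T\theta^F\cdot\ud S - F))$, obtaining the value $-\exp(-\alpha x_t - I^F_t)$ directly. You instead prove both directions: a conditional Fenchel--Young argument gives the weak-duality bound $u^F_t(x_t) \leq -\exp(-\alpha x_t - I^F_t)$ without presupposing optimality of any particular strategy, and the attainment step is then essentially the paper's own computation. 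The paper's shortcut buys brevity by leaning on the literature (\cite{6auth,bech03,kabstr02,manschw05}); your version buys a proof of optimality of $\theta^F$ as a by-product. Your integrability remarks correctly identify where~(\ref{eq:assB}) and the definition of $\Theta$ enter (to make the Bayes manipulations and the vanishing of the conditional expectation of $Z^{\mathbb Q}_{t,T}\int_t^T\theta\cdot\ud S$ legitimate), and the pointwise minimisation over $c$ is sound since $I^F_t$ is a.s.\ finite. Both routes are valid; yours is closer to a standalone duality proof, the paper's to a verification given the external machinery.
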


\begin{proof}

We compute the primal optimal expected utility process and use
Corollary \ref{corr:ZF} to substitute for the stochastic integral
$(\theta^{F}\cdot S)$: 
\begin{eqnarray*}
u^{F}_{t}(x_{t}) & = & \mathbb E\left[\left.-{\rm e}^{-\alpha\left(x_{t} +
\int_{t}^{T}\theta^{F}_{u}\cdot\ud S_{u} -
F\right)}\right\vert{\mathcal F}_{t}\right] \\
& = & -{\rm e}^{-\alpha x_{t}}\mathbb E[Z^{\mathbb
Q^{F}}_{t,T}\exp(-I^{F}_{t})|\mathcal{F}_{t}] \quad \mbox{(using
Corollary   \ref{corr:ZF})} \\
& = & -\exp\left(-\alpha x_{t} - I^{F}_{t}\right), \quad 0\leq t\leq T. 
\end{eqnarray*}

\end{proof}

Using this theorem and the definition of the indifference price we
obtain the following dual representation of the indifference price
process, a dynamic version of the classical representation.

\begin{corollary}
\label{cor:dripp}
  
The indifference price process has the dual representation
\begin{equation}
p_{t}(\alpha) = -\frac{1}{\alpha}(I^{F}_{t} - I^{0}_{t}),
\quad 0\leq t\leq T. 
\label{eq:erip}
\end{equation}

\end{corollary}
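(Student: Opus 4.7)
The plan is to obtain the representation by direct substitution of Theorem \ref{thm:fundd} into the defining equation of the indifference price process, so the argument is essentially algebraic and no hard step is anticipated.

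First I would write out the defining identity $u^{F}_{t}(x_{t}+p_{t}(\alpha)) = u^{0}_{t}(x_{t})$. Applying (\ref{eq:fundd}) on the left-hand side, with initial capital $x_{t}+p_{t}(\alpha)$ and payoff $F$, gives
\begin{equation*}
u^{F}_{t}(x_{t}+p_{t}(\alpha)) = -\exp\bigl(-\alpha(x_{t}+p_{t}(\alpha)) - I^{F}_{t}\bigr),
\end{equation*}
while applying (\ref{eq:fundd}) in the $F=0$ case on the right-hand side yields
\begin{equation*}
u^{0}_{t}(x_{t}) = -\exp\bigl(-\alpha x_{t} - I^{0}_{t}\bigr).
\end{equation*}

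Equating the two exponentials and taking logarithms, the $-\alpha x_{t}$ terms cancel, leaving $-\alpha p_{t}(\alpha) - I^{F}_{t} = -I^{0}_{t}$. Solving for $p_{t}(\alpha)$ gives (\ref{eq:erip}). The only conceptual observation worth noting is that, since exponential utility has no wealth dependence in the indifference price, the cancellation of $x_{t}$ is automatic and the representation holds for any $\mathcal{F}_{t}$-measurable initial capital; this is exactly what Theorem \ref{thm:fundd} is set up to deliver. I do not expect any real obstacle, since all moment and admissibility issues have already been absorbed into the statement of the duality theorem via the exponential moment conditions (\ref{eq:assB}) and the definition of $\Theta_{t}$.
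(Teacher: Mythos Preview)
Your proof is correct and is exactly the approach the paper takes: it simply states that the corollary follows from Theorem~\ref{thm:fundd} and the definition of the indifference price, which is precisely the substitution-and-cancellation argument you wrote out.
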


Written out explicitly, (\ref{eq:erip}) can be re-cast into the
more familiar form
\begin{equation}
p_{t}(\alpha) = \esssup_{\mathbb Q\in\mathbf M_{f}}\left[\mathbb
E^{\mathbb Q}[F|{\mathcal F}_{t}] - \frac{1}{\alpha}\left(I_{t}(\mathbb
Q|\mathbb P) - I_{t}(\mathbb Q^{0}|\mathbb P)\right)\right], \quad
0\leq t\leq T. 
\label{eq:erip2} 
\end{equation}
The two conditional entropy terms in (\ref{eq:erip2}) can in fact be
condensed into one, using the following proposition.

\begin{proposition}
\label{prop:ed}

The conditional entropy process $I$ satisfies the property that, for
any equivalent local martingale measure $\mathbb Q\in\mathbf M_{f}$, 
\begin{equation}
I_{t}(\mathbb Q|\mathbb P) - I_{t}(\mathbb Q^{0}|\mathbb P) = I_{t}(\mathbb
Q|\mathbb Q^{0}), \quad 0\leq t\leq T. 
\label{eq:IQQ0}
\end{equation}
  
\end{proposition}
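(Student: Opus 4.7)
The plan is to unpack everything via the density processes and reduce the identity to a simple consequence of the representation in Corollary \ref{corr:ZF} combined with the martingale property of $(\theta^0\cdot S)$ under every measure in $\mathbf{M}_f$.

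First I would use Lemma \ref{lem:dpelmm} applied to the pair $(\mathbb Q,\mathbb Q^0)\in\mathbf M_f\times\mathbf M_f$, which gives
\begin{equation*}
Z^{\mathbb Q,\mathbb Q^0}_{t,T}=\frac{Z^{\mathbb Q}_{t,T}}{Z^{\mathbb Q^0}_{t,T}}, \qquad 0\leq t\leq T.
\end{equation*}
Taking logarithms and conditional $\mathbb Q$-expectations then splits the conditional entropy,
\begin{equation*}
I_t(\mathbb Q|\mathbb Q^0)=\mathbb E^{\mathbb Q}[\log Z^{\mathbb Q}_{t,T}\mid\mathcal F_t]-\mathbb E^{\mathbb Q}[\log Z^{\mathbb Q^0}_{t,T}\mid\mathcal F_t]=I_t(\mathbb Q|\mathbb P)-\mathbb E^{\mathbb Q}[\log Z^{\mathbb Q^0}_{t,T}\mid\mathcal F_t],
\end{equation*}
so the whole proposition reduces to showing that the last conditional expectation equals $I_t(\mathbb Q^0|\mathbb P)$.

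For this I would invoke Corollary \ref{corr:ZF} with $F=0$ (so $\mathbb Q^F=\mathbb Q^0$ and the optimal strategy is $\theta^0\in\Theta$), which yields the explicit pathwise formula
\begin{equation*}
\log Z^{\mathbb Q^0}_{t,T}=I_t(\mathbb Q^0|\mathbb P)-\alpha\int_t^T\theta^0_u\cdot\ud S_u.
\end{equation*}
The term $I_t(\mathbb Q^0|\mathbb P)$ is $\mathcal F_t$-measurable and pulls out of the conditional expectation. The key point is that $\theta^0\in\Theta$ by construction, so the stochastic integral $(\theta^0\cdot S)$ is a $\mathbb Q$-martingale for \emph{every} $\mathbb Q\in\mathbf M_f$, not merely for $\mathbb Q^0$; hence $\mathbb E^{\mathbb Q}[\int_t^T\theta^0_u\cdot\ud S_u\mid\mathcal F_t]=0$. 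Combining,
\begin{equation*}
\mathbb E^{\mathbb Q}[\log Z^{\mathbb Q^0}_{t,T}\mid\mathcal F_t]=I_t(\mathbb Q^0|\mathbb P),
\end{equation*}
and substitution into the earlier splitting gives (\ref{eq:IQQ0}).

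No serious obstacle is anticipated: the crux is simply recognising that the universal $\mathbf M_f$-martingale property of $(\theta^0\cdot S)$, built into the admissibility class $\Theta$, is precisely what makes the conditional expectation of $\log Z^{\mathbb Q^0}_{t,T}$ under $\mathbb Q$ equal to its conditional expectation under $\mathbb Q^0$. This is the first-order optimality of the MEMM in disguise, and once Corollary \ref{corr:ZF} is in hand the calculation is immediate.
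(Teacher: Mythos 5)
Your proof is correct and follows essentially the same route as the paper: split $I_t(\mathbb Q|\mathbb Q^0)$ via $Z^{\mathbb Q,\mathbb Q^0}_{t,T}=Z^{\mathbb Q}_{t,T}/Z^{\mathbb Q^0}_{t,T}$, then substitute the dynamic Grandits--Rheinl\"ander representation of $Z^{\mathbb Q^0}_{t,T}$ from Corollary \ref{corr:ZF} with $F=0$ and use that $(\theta^0\cdot S)$ is a $\mathbb Q$-martingale for every $\mathbb Q\in\mathbf M_f$. The only cosmetic difference is that you cite Lemma \ref{lem:dpelmm} explicitly for the ratio of density processes, whereas the paper uses it silently.
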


\begin{proof}

For any $\mathbb Q\in\mathbf M_{f}$, the conditional
entropy process $I(\mathbb Q|\mathbb Q^{0})$ is given by
\begin{eqnarray}
I_{t}(\mathbb Q|\mathbb Q^{0}) & := & \mathbb E^{\mathbb Q}[\log Z^{\mathbb
Q,\mathbb Q^{0}}_{t,T}|{\mathcal F}_{t}] \nonumber \\
& = & \mathbb E^{\mathbb Q}[\log Z^{\mathbb
Q}_{t,T} - \log Z^{\mathbb Q^{0}}_{t,T}|{\mathcal F}_{t}] \nonumber  \\
& = & I_{t}(\mathbb Q|\mathbb P) - \mathbb E^{\mathbb Q}[\log Z^{\mathbb
Q^{0}}_{t,T}|{\mathcal F}_{t}], \quad 0\leq t\leq T.  
\label{eq:HQQ0}
\end{eqnarray}
We have the dynamic version of the Grandits-Rheinl\"ander
\cite{gr02} representation of the MEMM, given by (\ref{eq:ZQF}) for
$F=0$: 
\begin{equation*}
Z^{\mathbb Q^{0}}_{t,T} = \exp\left(I_{t}(\mathbb Q^{0}|\mathbb P) -
\alpha\int_{t}^{T}\theta^{0}_{u}\cdot\ud S_{u}\right), \quad 0\leq
t\leq T,  
\end{equation*}
where the optimal investment strategy $\theta^{0}\in\Theta$, so
$(\theta^{0}\cdot S)$ is a $\mathbb Q$-martingale, for any $\mathbb
Q\in\mathbf M_{f}$. Using this in (\ref{eq:HQQ0}) we obtain
(\ref{eq:IQQ0}). 
  
\end{proof}

Using Proposition \ref{prop:ed} in the classical dual stochastic
control representation (\ref{eq:erip}) of the indifference price
process, we immediately obtain the following form for $p(\alpha)$,
which will form the basis for our asymptotic expansion of the
indifference price process.

\begin{lemma}
\label{lem:dripnew}

The indifference price process is given by the dual stochastic control
representation 
\begin{equation*}
p_{t}(\alpha) = \esssup_{\mathbb Q\in\mathbf M_{f}}\left[\mathbb
E^{\mathbb Q}[F|{\mathcal F}_{t}] - \frac{1}{\alpha}I_{t}(\mathbb
Q|\mathbb Q^{0})\right], \quad 0\leq t\leq T. 
\end{equation*}

\end{lemma}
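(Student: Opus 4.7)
The plan is essentially immediate: the heavy lifting has been done in Proposition \ref{prop:ed}, and Lemma \ref{lem:dripnew} follows as a one-line substitution into the representation already derived in (\ref{eq:erip2}). So the proof proposal is more a matter of assembling the pieces than of discovering anything new.

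Concretely, I would start from the classical dynamic dual representation of the indifference price process established in Corollary \ref{cor:dripp} and written out explicitly in (\ref{eq:erip2}):
\begin{equation*}
p_{t}(\alpha) = \esssup_{\mathbb Q\in\mathbf M_{f}}\left[\mathbb
E^{\mathbb Q}[F|{\mathcal F}_{t}] - \frac{1}{\alpha}\left(I_{t}(\mathbb
Q|\mathbb P) - I_{t}(\mathbb Q^{0}|\mathbb P)\right)\right].
\end{equation*}
Proposition \ref{prop:ed} states that for every $\mathbb{Q}\in\mathbf{M}_f$ and every $t\in[0,T]$,
\begin{equation*}
I_{t}(\mathbb Q|\mathbb P) - I_{t}(\mathbb Q^{0}|\mathbb P) = I_{t}(\mathbb Q|\mathbb Q^{0}).
\end{equation*}
Since this identity holds pathwise in $\mathbb{Q}$ (as an almost sure identity of conditional-entropy processes), I can substitute it inside the essential supremum to obtain the stated representation. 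There is no subtlety about interchanging the $\esssup$ with the identity, because the identity simply rewrites the integrand in the variational problem, $\mathbb{Q}$ by $\mathbb{Q}$, without altering the admissible family $\mathbf{M}_f$.

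The only place where one might be tempted to worry is whether the integrability used in (\ref{eq:erip2}) still makes sense once rewritten: $I_t(\mathbb{Q}|\mathbb{Q}^0)$ is well-defined for each $\mathbb{Q}\in\mathbf{M}_f$ precisely because of the finite-entropy conditions defining $\mathbf{M}_f$ and the dynamic representation of $Z^{\mathbb{Q}^0}$ from Corollary \ref{corr:ZF} specialised to $F=0$, which were exactly the ingredients used to prove Proposition \ref{prop:ed}. Hence the substitution is legitimate, and no additional hypotheses are needed. In short, the proof will consist of citing (\ref{eq:erip2}), invoking Proposition \ref{prop:ed}, and writing down the result; there is no main obstacle to overcome.
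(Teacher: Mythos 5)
Your proposal is correct and matches the paper's own proof exactly: the paper's proof of Lemma \ref{lem:dripnew} is the one-line instruction to substitute the identity of Proposition \ref{prop:ed} into the representation (\ref{eq:erip})/(\ref{eq:erip2}). Your additional remarks about the substitution being pathwise in $\mathbb{Q}$ and the well-definedness of $I_t(\mathbb{Q}|\mathbb{Q}^0)$ are accurate but not needed beyond what Proposition \ref{prop:ed} already establishes.
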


\begin{proof}

Use (\ref{eq:IQQ0}) in (\ref{eq:erip}). 

\end{proof}

\begin{remark}
\label{rem:lsz}

A version of Lemma \ref{lem:dripnew} for American claims was given in
Leung {\em et al} \cite{lsz12} in a stochastic volatility scenario
(see their Proposition 7).

\end{remark}

\begin{remark}
\label{rem:oqf}

The optimiser in Lemma \ref{lem:dripnew} is also the optimiser in
(\ref{eq:erip2}), that is, $\mathbb Q^{F}$. 

\end{remark}

\section{Indifference valuation in an incomplete It\^o process 
market} 
\label{sec:iviim}

In this section we apply the indifference pricing formula from Lemma
\ref{lem:dripnew} in an It\^o process setting, and we show how it
leads to a control problem of a similar structure to the one analysed
in Section \ref{sec:cp}.

We have a probability space $(\Omega,{\mathcal F},\mathbb P)$ equipped
with the standard augmented filtration $\mathbb F:=({\mathcal
  F}_{t})_{0\leq t\leq T}$ associated with an $m$-dimensional Brownian
motion $W$. On this space we have a financial market with (for
simplicity) zero interest rate. The price processes of $d<m$ stocks
are given by the vector $S=(S^{1},\ldots,S^{d})^{\top}$, where
$S=(S_{t})_{0\leq t\leq T}$ follows the It\^o process
\begin{equation}
\ud S_{t} = \diag_{d}(S_{t})[\mu^{S}_{t}\ud t + \sigma_{t}\ud W_{t}],
\label{eq:sdyn}
\end{equation}
with $\diag_{d}(\cdot)$ denoting the $(d\times d)$ matrix with zero
entries off the main diagonal. The $d$-dimensional appreciation rate
vector $\mu^{S}$ and the $(d\times m)$ volatility matrix $\sigma$ are
$\mathbb F$-progressively measurable processes satisfying
$\int_{0}^{T}\Vert\mu^{S}_{t}\Vert\ud t<\infty$ and
$\int_{0}^{T}\sigma_{t}\sigma^{\top}_{t}\ud t<\infty$, almost surely. The
volatility matrix $\sigma_{t}$ has full rank for every $t\in[0,T]$, so
that the matrix $(\sigma_{t}\sigma^{\top}_{t})^{-1}$ is well-defined, as
is the $m$-dimensional relative risk process given by
\begin{equation}
\lambda_{t} := \sigma^{\top}_{t}(\sigma_{t}\sigma^{\top}_{t})^{-1}\mu^{S}_{t},
\quad 0\leq t\leq T. 
\label{eq:lambda}
\end{equation}

For $d<m$, this market is incomplete. We also have a vector
$Y=(Y^{1},\ldots,Y^{m-d})^{\top}$ of $(m-d)$ non-traded factors. These
could be the prices of non-traded assets, or of factors such as
stochastic volatilities and correlations. This framework is general
enough to encompass multi-dimensional versions of basis risk models as
well as multi-factor stochastic volatility models, with no Markovian
structure needed. We assume that $Y$ follows the It\^o process
\begin{equation*}
\ud Y_{t} = \diag_{m-d}(Y_{t})[\mu^{Y}_{t}\ud t + \beta_{t}\ud W_{t}],
\end{equation*}
for an $(m-d)$-dimensional progressively measurable vector $\mu^{Y}$
satisfying $\int_{0}^{T}\Vert\mu^{Y}_{t}\Vert\ud t<\infty$, almost surely,
and an $(m-d)\times m$-dimensional progressively measurable matrix
$\beta$ satisfying $\int_{0}^{T}\beta_{t}\beta^{\top}_{t}\ud
t<\infty$, almost surely.

A European contingent claim has $\mathcal{F}_{T}$-measurable payoff
$F$ depending on the evolution of $(S,Y)$. We assume $F$ satisfies
Assumption \ref{ass:condF}, so in particular, $F\in L^{2}(\mathbb Q)$,
for any ELMM $\mathbb Q\in\mathbf{M}_{f}$.

Measures $\mathbb Q\sim\mathbb P$ have density processes with
respect to $\mathbb P$ of the form
\begin{equation}
Z^{\mathbb Q}_{t} = {\mathcal E}(-q\cdot W)_{t}, \quad 0\leq t\leq
T,
\label{eq:ZmbbQ}
\end{equation}
for some $m$-dimensional process $q$ satisfying $\int_{0}^{T}\Vert
q_{t}\Vert^{2}\ud t<\infty$ almost surely. For $Z^{\mathbb Q}$ to be
the density of an equivalent local martingale measure, it must be is a
$\mathbb P$-martingale (a Novikov condition on $q$ would guarantee
this) and in addition $q$ must satisfy
\begin{equation}
\mu^{S}_{t} - \sigma_{t}q_{t} = {\bf 0}_{d}, \quad 0\leq t\leq
T,  
\label{eq:mcphi} 
\end{equation}
where ${\bf 0}_{d}$ denotes the $d$-dimensional zero vector, so that
$S$ is a local $\mathbb Q$-martingale. 

As the market is incomplete, there will be an infinite number of
solutions $q$ to the equations (\ref{eq:mcphi}), and the ELMMs
$\mathbb Q$ are in one-to-one correspondence with processes $q$
satisfying (\ref{eq:mcphi}) and such that ${\mathcal E}(-q\cdot W)$ is
a $\mathbb P$-martingale.

By the Girsanov theorem, the process $W^{\mathbb Q}$ defined by
\begin{equation}
W^{\mathbb Q}_{t} := W_{t} + \int_{0}^{t}q_{u}\ud u, \quad 0\leq
t\leq T,
\label{eq:WmbbQ}
\end{equation}
is an $m$-dimensional $\mathbb Q$-Brownian motion. The dynamics of the
stocks and non-traded factors under $\mathbb Q$ are then
\begin{eqnarray}
\ud S_{t} & = & \diag_{d}(S_{t})\sigma_{t}\ud W^{\mathbb
Q}_{t}, \label{eq:sqdyn} \\ 
\ud Y_{t} & = & \diag_{m-d}(Y_{t})[(\mu^{Y}_{t} - \beta_{t}q_{t})\ud t +
\beta_{t}\ud W^{\mathbb Q}_{t}]. \label{eq:yqdyn}
\end{eqnarray}
If we choose $q=\lambda$, given by (\ref{eq:lambda}), we obtain the
minimal martingale measure $\mathbb Q_{M}$, while the density process
of the MEMM $\mathbb Q^{0}$ is $Z^{\mathbb
  Q^{0}}=\mathcal{E}(-q^{0}\cdot W)$, for some integrand $q^{0}$.

Denote by $H^{2}(\mathbb Q)$ the space of $L^{2}$-bounded continuous
$\mathbb Q$-martingales $M$ (so, $\sup_{t\in[0,T]}\mathbb E^{\mathbb
  Q}[M^{2}_{t}]<\infty$). By Proposition IV.1.23 and Corollary IV.1.25
in Revuz and Yor \cite{ry99}, $H^{2}(\mathbb Q)$ is also the space of
martingales $M$ such that $\mathbb E^{\mathbb Q}[[M]_{T}]<\infty$.
Denoting $\Lambda^{\mathbb Q}:=(q\cdot W^{\mathbb Q})$, then using
(\ref{eq:ZmbbQ}) and (\ref{eq:WmbbQ}), $\log Z^{\mathbb
  Q}=-\Lambda^{\mathbb Q}+[\Lambda^{\mathbb Q}]/2$, so the relative
entropy between $\mathbb Q\in\mathbf M_{f}$ and $\mathbb P$ is given
by
\begin{equation*}
0\leq I_{0}(\mathbb Q|\mathbb P) = \mathbb E^{\mathbb
Q}\left[-\Lambda^{\mathbb Q}_{T} + \frac{1}{2}[\Lambda^{\mathbb Q}]_{T}\right] <
\infty,  
\end{equation*}
the last inequality true by assumption. The finiteness and
non-negativity of this relative entropy yields that both expectations
above are finite. Precisely, we have $\mathbb E^{\mathbb
  Q}[\Lambda^{\mathbb Q}_{T}]>-\infty$ and, in particular, $\mathbb
E^{\mathbb Q}[[\Lambda^{\mathbb Q}]_{T}]<\infty$, the latter condition
implying that $\Lambda^{\mathbb Q}\in H^{2}(\mathbb Q)$. Therefore,
\begin{equation}
\Lambda^{\mathbb Q} := (q\cdot W^{\mathbb Q}) \quad \mbox{is a
  $\mathbb Q$-martingale, for all $\mathbb Q\in\mathbf{M}_{f}$}.
\label{eq:Lmgl}
\end{equation}
This will be useful in computing the conditional relative
entropy $I(\mathbb Q|\mathbb Q^{0})$. 

Using (\ref{eq:WmbbQ}) in turn for $\mathbb Q$ and $\mathbb Q^{0}$, we
have
\begin{equation}
W^{\mathbb Q}_{t} = W^{\mathbb Q^{0}}_{t} + \int_{0}^{t}(q_{t} -
q^{0}_{t})\ud t, \quad 0\leq t\leq T,
\label{eq:wqq0}
\end{equation}
where $W^{\mathbb Q^{0}}$ is a $\mathbb Q^{0}$-Brownian motion.
 
Note that since both $q$ and $q^{0}$ satisfy (\ref{eq:mcphi}), we have
\begin{equation}
\sigma_{t}(q_{t} - q^{0}_{t}) = \mathbf{0}_{d}, \quad 0\leq t\leq T,
\label{eq:mcq}
\end{equation}
which we shall use later.

Using (\ref{eq:wqq0}), we can write the $\mathbb Q$-dynamics of $Y$ in
(\ref{eq:yqdyn}) as
\begin{equation*}
\ud Y_{t} = \diag_{m-d}(Y_{t})[(\mu^{Y}_{t} - \beta_{t}q^{0}_{t})\ud t +
\beta_{t}(\ud W^{\mathbb Q}_{t} - (q_{t} - q^{0}_{t})\ud t)].
\end{equation*}
The point of this representation is that the $\mathbb Q$-dynamics of
$Y$ may be interpreted as a perturbation of the $\mathbb
Q^{0}$-dynamics, since setting $q=q^{0}$ gives the dynamics under the
MEMM $\mathbb Q^{0}$, with the Brownian motion $W^{\mathbb Q}$ also
being modulated by the choice of $q$.

Using (\ref{eq:WmbbQ}) and (\ref{eq:wqq0}), the density process of
$\mathbb Q$ with respect to $\mathbb Q^{0}$ is
\begin{equation*}
Z^{\mathbb Q,\mathbb Q^{0}}_{t} = \frac{Z^{\mathbb Q}_{t}}{Z^{\mathbb
Q^{0}}_{t}} = \frac{\mathcal{E}(-q\cdot
  W)_{t}}{\mathcal{E}(-q^{0}\cdot W)_{t}} = \mathcal{E}(-(q -
q^{0})\cdot W^{\mathbb Q^{0}})_{t}, \quad 0\leq t\leq T.
\end{equation*}
Using this, along with (\ref{eq:wqq0}) and the martingale condition
(\ref{eq:Lmgl}), we compute
\begin{equation}
I_{t}(\mathbb Q|\mathbb Q^{0}) = \mathbb E^{\mathbb
Q}\left[\left.\frac{1}{2}\int_{t}^{T}\Vert q_{u}-q^{0}_{u}\Vert^{2}\ud 
u\right\vert{\mathcal F}_{t}\right], \quad 0\leq t\leq T.
\label{eq:iqq0}
\end{equation}
Now we explicitly consider $\mathbb Q$ as a perturbation around
$\mathbb Q^{0}$. Introduce, for some small parameter $\varepsilon$, a
parametrised family of measures $\{\mathbb
Q(\varepsilon)\}_{\varepsilon\in\mathbb R}$, such that
\begin{equation}
\mathbb Q \equiv \mathbb Q(\varepsilon), \quad \mathbb Q^{0} \equiv
\mathbb Q(0), 
\label{eq:qeq0}
\end{equation}
and also write
\begin{equation}
q - q^{0} =: -\varepsilon\varphi,  \label{eq:psiphi}
\end{equation}
for some process $\varphi$. Then (\ref{eq:mcq}) becomes
\begin{equation}
\sigma\varphi = {\bf 0}_{d}.
\label{eq:cphi}  
\end{equation}
Denote by $\mathcal{A}(\mathbf M_{f})$ the set of such $\varphi$ which
correspond to $\mathbb Q\in\mathbf M_{f}$, and also define the process
$\Phi:=\int_{0}^{\cdot}\varphi_{s}\ud s$.

The $\mathbb Q(\varepsilon)$-dynamics of the state variables $S,Y$ in
this notation are then
\begin{eqnarray}
\ud S_{t} & = & \diag_{d}(S_{t})\sigma_{t}\ud W^{\mathbb
Q(\varepsilon)}_{t}, \label{eq:sedyn} \\ 
\ud Y_{t} & = & \diag_{m-d}(Y_{t})[(\mu^{Y}_{t} -
\beta_{t}q^{0}_{t})\ud t + \beta_{t}(\ud W^{\mathbb
  Q(\varepsilon)}_{t} + \varepsilon\varphi_{t}\ud t)]. \label{eq:yedyn} 
\end{eqnarray}
Observe that if we define the state variable $X:=(S,Y)^{\top}$, then we
have recovered dynamics of the general form (\ref{eq:xeps2}). 

The $\mathbb Q(\varepsilon)$-dynamics (\ref{eq:sedyn}) of $S$, along
with the constraint (\ref{eq:cphi}), lead to the following
orthogonality result between trading strategies and dual controls.
Consider integrands $\theta^{(\varepsilon)},\varphi$ such that
$(\theta^{(\varepsilon)}\cdot S)$ is a $\mathbb
Q(\varepsilon)$-martingale and $\varphi$ satisfies
(\ref{eq:cphi}). Then a straightforward computation using
(\ref{eq:sedyn}) and (\ref{eq:cphi}) shows that the stochastic
integrals $(\theta^{(\varepsilon)}\cdot S)$ and $(\varphi\cdot
W^{\mathbb Q(\varepsilon)})$ are orthogonal $\mathbb
Q(\varepsilon)$-martingales. That is, $\mathbb E^{\mathbb
  Q(\varepsilon)}[(\theta^{(\varepsilon)}\cdot S)_{T}(\varphi\cdot
W^{\mathbb Q(\varepsilon)})_{T}] = 0$.  In particular, this will hold
for $\varepsilon=0$.

A similar orthogonality result is reflected in the following
decomposition of the claim payoff $F$. When the dynamics of the state
variables are given as in (\ref{eq:sedyn}) and (\ref{eq:yedyn}), we
write $F\equiv F(W^{\mathbb Q(\varepsilon)}+\varepsilon\Phi)$. Write
the Galtchouk-Kunita-Watanabe decomposition of $F$ under $\mathbb
Q(0)\equiv\mathbb Q^{0}$ as
\begin{equation}
F(W^{\mathbb Q(0)}) = \mathbb E^{\mathbb Q(0)}[F(W^{\mathbb Q(0)})] +
(\theta^{(0)}\cdot S)_{T} + (\xi^{(0)}\cdot W^{\mathbb Q(0)})_{T}, 
\label{eq:gkwde}
\end{equation}
for some integrands $\theta^{(0)},\xi^{(0)}$, such that the
stochastic integrals in (\ref{eq:gkwde}) are orthogonal $\mathbb
Q(0)$-martingales, so we have
\begin{equation*}
\mathbb E^{\mathbb Q(0)}[(\theta^{(0)}\cdot
S)_{T}(\xi^{(0)}\cdot W^{\mathbb Q(0)})_{T}] = 0.
\end{equation*}

Using (\ref{eq:iqq0}) and (\ref{eq:psiphi}), the indifference price
process, as given by Lemma \ref{lem:dripnew}, has the stochastic
control representation
\begin{equation*}
p_{t}(\alpha) = \sup_{\varphi\in\mathcal{A}(\mathbf M_{f})}\mathbb E^{\mathbb
Q(\varepsilon)}\left[\left.F(W^{\mathbb Q(\varepsilon)} + \varepsilon\Phi) -
\frac{\varepsilon^{2}}{2\alpha}\int_{t}^{T}\Vert\varphi_{u}\Vert^{2}\ud
u\right\vert{\mathcal F}_{t}\right], \quad 0\leq t\leq T.   
\end{equation*}
If we choose 
\begin{equation}
\varepsilon^{2} = \alpha,
\label{eq:epsalpha}
\end{equation}
then we get a control problem of the form
\begin{equation*}
p_{t}(\alpha) = \sup_{\varphi\in\mathcal{A}(\mathbf M_{f})}\mathbb E^{\mathbb
Q(\varepsilon)}\left[\left.F\left(W^{\mathbb Q(\varepsilon)} +
\varepsilon\int_{t}^{\cdot}\varphi_{u}\ud u\right) -
\frac{1}{2}\int_{t}^{T}\Vert\varphi_{u}\Vert^{2}\ud u\right\vert{\mathcal
F}_{t}\right], \quad 0\leq t\leq T. 
\end{equation*}
subject to $\mathbb Q(\varepsilon)$-dynamics of $S,Y$ given by
(\ref{eq:sedyn}), (\ref{eq:yedyn}), and with $\mathbb Q(0)$
corresponding to the MEMM $\mathbb Q^{0}$. We have now formulated the
indifference pricing control problem in the form of a control problem
akin to that described in Section \ref{sec:cp}. We then have the
following result.

\begin{theorem}
\label{thm:asexip}

Let the payoff of the claim, $F$, be a functional of the paths of
$S,Y$, satisfying Assumption \ref{ass:condF}. Let the $\mathbb
Q(\varepsilon)$-dynamics of $S,Y$ be given by
(\ref{eq:sedyn},\ref{eq:yedyn}), with $\mathbb Q(\varepsilon)$ given
by (\ref{eq:qeq0}), and with the parameter $\varepsilon$ given by
(\ref{eq:epsalpha}). Then for small risk aversion $\alpha$, the
indifference price process of the claim has the asymptotic expansion
\begin{equation}
p_{t}(\alpha) = \mathbb E^{\mathbb Q^{0}}[F|\mathcal{F}_{t}] +
\frac{1}{2}{\alpha}\mathbb E^{\mathbb
Q^{0}}\left[\left.\int_{t}^{T}\Vert\xi^{(0)}_{u}\Vert^{2}\ud
u\right\vert\mathcal{F}_{t}\right] + O(\alpha^{2}), \quad 0\leq t\leq T,
\label{eq:asexip}
\end{equation}
where $\mathbb Q^{0}$ is the minimal entropy martingale measure, and
$\xi^{(0)}$ is the process in the Kunita-Watanabe decomposition
(\ref{eq:gkwde}) of the claim, under $\mathbb Q(0)\equiv\mathbb
Q^{0}$. 

\end{theorem}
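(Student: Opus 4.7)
The plan is to apply Theorem \ref{thm:asex} to the conditional control problem for $p_t(\alpha)$ just displayed, with two modifications specific to the indifference setting: a conditional (rather than unconditional) expansion starting at time $t$, and the linear constraint $\sigma_u\varphi_u=\mathbf{0}_d$ on admissible controls coming from (\ref{eq:cphi}). Reading the problem as an instance of (\ref{eq:objf}) with reference Brownian motion $W^{\mathbb Q(\varepsilon)}$ under $\mathbb Q(\varepsilon)$, and noting that $\mathbb Q(0)=\mathbb Q^0$, I expect the leading term $\mathbb E^{\mathbb Q^0}[F|\mathcal F_t]$ to emerge as in Theorem \ref{thm:asex}.

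Following the first half of the proof of Theorem \ref{thm:asex}, I would apply the conditional form of Lemma \ref{lem:ipcan} starting at time $t$, together with the $\mathbb Q^0$-martingale representation of $F$, to approximate the objective by
\begin{equation*}
\mathbb E^{\mathbb Q^0}\!\left[F+\int_t^T\!\Big(\varepsilon\psi_u\cdot\varphi_u-\tfrac{1}{2}\Vert\varphi_u\Vert^2\Big)\ud u\,\Big|\,\mathcal F_t\right]+o(|\varepsilon|\Vert\varphi\Vert_\infty),
\end{equation*}
where $\psi$ is the integrand in the $\mathbb Q^0$-martingale representation of $F(W^{\mathbb Q^0})$. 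Rewriting the traded-asset piece of the Galtchouk-Kunita-Watanabe decomposition (\ref{eq:gkwde}) as a stochastic integral against $W^{\mathbb Q^0}$ by means of (\ref{eq:sedyn}) gives
\begin{equation*}
\psi_u=\sigma_u^\top\diag_d(S_u)\theta^{(0)}_u+\xi^{(0)}_u.
\end{equation*}
The key structural observation is that the GKW orthogonality of $(\theta^{(0)}\cdot S)$ and $(\xi^{(0)}\cdot W^{\mathbb Q^0})$ under $\mathbb Q^0$ forces the integrand of the vanishing quadratic covariation to be zero, i.e.\ $\sigma_u\xi^{(0)}_u=\mathbf{0}_d$ almost surely for almost every $u$. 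Hence $\xi^{(0)}_u\in\ker\sigma_u$, whereas $\sigma_u^\top\diag_d(S_u)\theta^{(0)}_u\in\mathrm{range}(\sigma_u^\top)=(\ker\sigma_u)^\perp$, so the orthogonal projection of $\psi_u$ onto the constraint subspace $\ker\sigma_u$ is exactly $\xi^{(0)}_u$.

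Performing the pointwise constrained maximisation of $\varepsilon\psi_u\cdot\varphi_u-\tfrac{1}{2}\Vert\varphi_u\Vert^2$ over $\varphi_u\in\ker\sigma_u$ yields the optimiser $\widehat\varphi_u=\varepsilon\xi^{(0)}_u$ and the maximum value $\tfrac{1}{2}\varepsilon^2\Vert\xi^{(0)}_u\Vert^2$ per unit time. Since $\widehat\varphi=O(\varepsilon)$, the density $\ud\mathbb Q(\varepsilon)/\ud\mathbb Q^0=\mathcal E(\varepsilon\widehat\varphi\cdot W^{\mathbb Q^0})_T$ differs from $1$ only at order $\varepsilon^2$, so replacing $\mathbb E^{\mathbb Q(\varepsilon)}$ by $\mathbb E^{\mathbb Q^0}$ in the $O(1)$ and $O(\varepsilon^2)$ pieces contributes only to the $O(\varepsilon^4)$ remainder, which is also consistent with the bound (\ref{eq:errortermc}). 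Setting $\varepsilon^2=\alpha$ then produces (\ref{eq:asexip}). As in the second half of the proof of Theorem \ref{thm:asex}, consistency of this expansion with the true constrained optimiser $\varphi^*$ can be verified by an argument analogous to Proposition \ref{prop:oc} restricted to $\ker\sigma$.

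The main obstacle is the joint handling of the constraint $\sigma\varphi=\mathbf{0}_d$ and the $\varepsilon$-dependence of the reference measure $\mathbb Q(\varepsilon)$, neither of which appears in the unconstrained Wiener-space problem of Section \ref{sec:cp}. Both are resolved by the structural identification above (which cleanly aligns the constraint with the $\xi^{(0)}$-component of $\psi$) combined with the bootstrap observation that an $O(\varepsilon)$ optimal control produces a measure discrepancy of order $\varepsilon^2$, which is absorbed in the claimed $O(\alpha^2)$ remainder.
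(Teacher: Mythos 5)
Your proof is correct and in fact makes explicit a step that the paper's own proof glosses over. The paper says ``apply Theorem~\ref{thm:asex} directly, with the Kunita-Watanabe decomposition~(\ref{eq:gkwde}) taking the place of the martingale representation~(\ref{eq:mrF}),'' but a literal application of the \emph{unconstrained} Theorem~\ref{thm:asex} would produce $\Vert\psi\Vert^{2}$ with $\psi_{u}=\sigma_{u}^{\top}\diag_{d}(S_{u})\theta^{(0)}_{u}+\xi^{(0)}_{u}$, not $\Vert\xi^{(0)}\Vert^{2}$. Your observation that the admissible controls satisfy the linear constraint~(\ref{eq:cphi}), that GKW orthogonality forces $\sigma_{u}\xi^{(0)}_{u}=\mathbf 0_{d}$ so $\xi^{(0)}_{u}\in\ker\sigma_{u}$ while $\sigma_{u}^{\top}\diag_{d}(S_{u})\theta^{(0)}_{u}\in(\ker\sigma_{u})^{\perp}$, and that the pointwise constrained maximiser is therefore the projection $\widehat\varphi_{u}=\varepsilon\xi^{(0)}_{u}$, is precisely the content that closes the gap. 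This is the right mechanism and is what the paper is silently appealing to.

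Two smaller remarks. First, the paper handles the dependence of the reference measure on $\varepsilon\varphi$ by fixing an auxiliary measure $\mathbb M$ and working with a perturbed state process $X^{(\varepsilon)}=(S^{(\varepsilon)},Y^{(\varepsilon)})$ whose $\mathbb M$-law matches the $\mathbb Q(\varepsilon)$-law of $(S,Y)$; this cleanly puts the problem in the exact form~(\ref{eq:objf}) under a single fixed measure, so no separate argument about the moving reference measure is needed. Your alternative, viewing the problem directly under $\mathbb Q(\varepsilon)$ and arguing the measure change is higher order, is less clean. In particular, the specific claim that ``replacing $\mathbb E^{\mathbb Q(\varepsilon)}$ by $\mathbb E^{\mathbb Q^{0}}$ in the $O(1)$ piece contributes only to the $O(\varepsilon^{4})$ remainder'' is not right as stated: with $\widehat\varphi=O(\varepsilon)$ the density $Z^{\mathbb Q(\varepsilon),\mathbb Q^{0}}_{T}=\mathcal E(\varepsilon\widehat\varphi\cdot W^{\mathbb Q^{0}})_{T}$ differs from $1$ at $O(\varepsilon^{2})$, and an $O(\varepsilon^{2})$ density correction acting on the $O(1)$ quantity $F$ produces an $O(\varepsilon^{2})$ effect, not $O(\varepsilon^{4})$. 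In fact, that $O(\varepsilon^{2})$ contribution (which equals $\varepsilon^{2}\mathbb E^{\mathbb Q^{0}}[\int\Vert\xi^{(0)}\Vert^{2}]$ via the GKW decomposition) is exactly what, combined with the penalty $-\tfrac12\varepsilon^{2}\int\Vert\xi^{(0)}\Vert^{2}$, yields the $+\tfrac12\varepsilon^{2}\mathbb E^{\mathbb Q^{0}}[\int\Vert\xi^{(0)}\Vert^{2}]$ in the answer. Your bottom line is nonetheless correct, because the identification of the problem with~(\ref{eq:objf}) under a fixed measure (which you invoke at the outset by ``reading the problem as an instance of~(\ref{eq:objf})'') already absorbs the measure change into Lemma~\ref{lem:ipcan}'s machinery; the extra bootstrap argument is both unnecessary and, as phrased, not accurate. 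Second, the conditional (time-$t$) version of Lemma~\ref{lem:ipcan} that you invoke is not stated in the paper, but the same proof carries over conditionally, so this is a reasonable extension rather than a gap.
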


\begin{proof}

In the state dynamics (\ref{eq:sedyn},\ref{eq:yedyn}) each choice of
the perturbation $\varepsilon\varphi$ gives rise to a different
measure $\mathbb Q(\varepsilon)$. To apply Theorem \ref{thm:asex}, we
fix a measure $\mathbb M$ and instead consider the perturbed state
process
$X^{(\varepsilon)}=(S^{(\varepsilon)},Y^{(\varepsilon)})^{\top}$, with
dynamics under $\mathbb M$ given by 
\begin{eqnarray*}
\ud S^{(\varepsilon)}_{t} & = &
\diag_{d}(S^{(\varepsilon)}_{t})\sigma_{t}\ud W^{\mathbb
M}_{t}, \\  
\ud Y^{(\varepsilon)}_{t} & = & \diag_{m-d}(Y^{(\varepsilon)}_{t})[(\mu^{Y}_{t} -
\beta_{t}q^{0}_{t})\ud t + \beta_{t}(\ud W^{\mathbb M}_{t} +
\varepsilon\varphi_{t}\ud t)],
\end{eqnarray*}
for some $m$-dimensional $\mathbb M$-Brownian motion $W^{\mathbb M}$.
The dynamics of the state variable $X^{(\varepsilon)}$ under $\mathbb
M$ match those of $(S,Y)^{\top}$ under $\mathbb Q(\varepsilon)$, and
are of the required form (\ref{eq:xeps2}), with $\varepsilon=0$
corresponding to the MEMM $\mathbb Q^{0}$. We can now apply Theorem
\ref{thm:asex} directly, with the Kunita-Watanabe decomposition
(\ref{eq:gkwde}) of the claim under $\mathbb Q(0)\equiv\mathbb Q^{0}$
taking the place of the martingale representation result
(\ref{eq:mrF}), and the result duly follows.

\end{proof}

The underlying message of Theorem \ref{thm:asexip} is that for small
risk aversion, the lowest order contribution to the indifference price
process is the marginal utility-based price process
$\widehat{p}_{t}:=\mathbb E^{\mathbb Q^{0}}[F|\mathcal{F}_{t}]$,
corresponding to the valuation methodology developed by Davis
\cite{mhad97}. The first order correction is a mean-variance
correction, since the Kunita-Watanabe decomposition (\ref{eq:gkwde})
for $\varepsilon=0$ is the F\"ollmer-Schweizer-Sondermann decomposition of
the claim under $\mathbb Q^{0}$, and the integrand $\theta^{(0)}$ in
(\ref{eq:gkwde}) is a risk-minimising strategy in the sense of
F\"ollmer and Sondermann \cite{fs86} under $\mathbb Q^{0}$. Similar
results have been obtained for bounded claims by Mania and Schweizer
\cite{manschw05} and Kallsen and Rheinl\"ander \cite{kr11}. The
contribution here is to show a new methodology for obtaining this
result, for a square-integrable claim.  The strategy $\theta^{(0)}$ is,
in general, the zero risk aversion limit of the optimal hedging
strategy $\theta(\alpha)$ (see, for example, \cite{manschw05,kr11} for
a bounded claim), and hence can also be interpreted as the marginal
utility-based hedging strategy.

Note that using (\ref{eq:gkwde}) for $\varepsilon=0$, we can write 
(\ref{eq:asexip}) as
\begin{equation}
p_{t}(\alpha) = \mathbb E^{\mathbb Q^{0}}[F|\mathcal{F}_{t}] 
+ \frac{1}{2}{\alpha}\left(\var^{\mathbb Q^{0}}[F|\mathcal{F}_{t}] -
\mathbb E^{\mathbb Q^{0}}\left[\left.\int_{t}^{T}\Vert\theta^{(0)}_{u}\Vert^{2}\ud
[S]_{u}\right\vert\mathcal{F}_{t}\right]\right) 
+ O(\alpha^{2}),
\label{eq:aealt}  
\end{equation}
for $t\in[0,T]$, which highlights the mean-variance structure of the
asymptotic representation.

\section{Applications}
\label{sec:examples}

Here we show some examples where Theorem \ref{thm:asexip} would
apply. In these examples we assume that the functional $F$ satisfies
Assumption \ref{ass:condF}. This is a relatively mild assumption and
would apply in a wide range of models, but of course would need to be
checked on a case-by-case basis in specific models, and would depend
on the model and also on the specific form of the functional $F$. We
give a concrete case in Example \ref{examp:lbrm} of a lookback put
option on a non-traded asset, where we check that Assumption
\ref{ass:condF} is satisfied.

\begin{example}[Multi-dimensional random parameter basis risk model]
\label{examp:mdbrm}

This is the model of Section \ref{sec:iviim}, with $d$ traded stocks
$S$ and $(m-d)$ non-traded assets $Y$, and with the volatility process
$\sigma$ in (\ref{eq:sdyn}) given by
\begin{equation*}
\sigma_{t} = \left(\begin{array}{cc}\sigma^{S}_{t} &
\mathbf{0}_{d\times(m-d)}\end{array}\right), \quad 0\leq t\leq T,
\end{equation*}
where $\sigma^{S}$ is a $d\times d$ invertible matrix process, and
where $\mathbf{0}_{d\times(m-d)}$ denotes the zero $d\times(m-d)$
matrix. Write the $m$-dimensional Brownian motion $W$ as
$W=(W^{S},W^{S,\perp})^{\top}$, where $W^{S}$ denotes the first $d$
components of $W$. Then the $d$ traded stocks are driven by $d$
Brownian motions, and the non-traded assets are imperfectly correlated
with $S$. The claim payoff $F$ is typically dependent on the evolution
of $Y$ only, though our results are valid without this restriction.

In this case, the process $\lambda$ in (\ref{eq:lambda}) and the
integrand $q$ in (\ref{eq:ZmbbQ}) are given by
\begin{equation*}
\lambda_{t} = \binom{\lambda^{S}_{t}}{\mathbf{0}_{m-d}}, \quad q_{t} =
\binom{\lambda^{S}_{t}}{\gamma_{t}}, \quad 0\leq t\leq T,  
\end{equation*}
where $\lambda^{S}$ is the stocks' $d$-dimensional market price of
risk process, given by $\lambda^{S}:=(\sigma^{S})^{-1}\mu^{S}$, and
$\gamma$ is an $(m-d)$-dimensional adapted process. Each choice of
$\gamma$ leads to a different ELMM $\mathbb Q$, with
$\gamma=\mathbf{0}_{m-d}$ corresponding to the minimal martingale
measure $\mathbb Q_{M}$, and $\gamma=\gamma^{0}$ corresponding to the
minimal entropy martingale measure $\mathbb Q^{0}\equiv\mathbb Q_{E}$,
for some $(m-d)$-dimensional process $\gamma^{0}$. The density process
of any ELMM $\mathbb Q\in\mathbf{M}_{f}$ is then given by
\begin{equation}
Z^{\mathbb Q}_{t} = \mathcal{E}(-\lambda^{S}\cdot W^{S} - \gamma\cdot
W^{S,\perp})_{t}, \quad 0\leq t\leq T.  
\label{eq:ZQmdbrm}
\end{equation}
The indifference price expansion of the claim with payoff $F$ is then
of the form (\ref{eq:asexip}) or, equivalently, (\ref{eq:aealt}).

A special feature of these models arises when the process
$\lambda^{S}$ is either deterministic or does not depend on the
non-traded asset prices $Y$. In this case it is not hard to see that
the MEMM $\mathbb Q^{0}=\mathbb Q_{M}$. This is because the relative
entropy process between $\mathbb Q\in\mathbf{M}_{f}$ and $\mathbb P$
is given by
\begin{equation}
I_{t}(\mathbb Q|\mathbb P) = \mathbb E^{\mathbb
Q}\left[\left.\frac{1}{2}\int_{t}^{T}(\Vert\lambda^{S}_{u}\Vert^{2} +
\Vert\gamma_{u}\Vert^{2})\ud u\right\vert\mathcal{F}_{t}\right], \quad 0\leq
t\leq T.
\label{eq:basisent}
\end{equation}
The problem of finding the minimal entropy martingale measure is then
to minimise this functional subject to $\mathbb Q$-dynamics of $S,Y$
given by (\ref{eq:sqdyn},\ref{eq:yqdyn}), with the process $\gamma$
playing the role of a control. In the current notation, the $\mathbb
Q$-dynamics of $Y$ are
\begin{equation*}
\ud Y_{t} = \diag_{m-d}(Y_{t})\left[\left(\mu^{Y}_{t} -
\beta_{t}\binom{\lambda^{S}_{t}}{\gamma_{t}}\right)\ud t +
\beta_{t}\ud W^{\mathbb Q}_{t}\right].
\end{equation*}
From this it is clear that if $\lambda^{S}$ does not depend on $Y$,
then it is unaffected by the control, and then the relative entropy
process in (\ref{eq:basisent}) is minimised by choosing
$\gamma=\mathbf{0}_{m-d}$, so $\mathbb Q_{E}\equiv\mathbb Q^{0}=\mathbb
Q_{M}$. In this case, the Kunita-Watanabe decomposition of the claim
under $\mathbb Q_{M}$ will be of the form
\begin{equation*}
F = \mathbb E^{\mathbb Q_{M}}[F] + (\theta^{M}\cdot S)_{T} +
(\xi^{M}\cdot W^{\mathbb Q_{M}})_{T}, 
\end{equation*}
for integrands $\theta^{M},\xi^{M}$ such that the $\mathbb
Q_{M}$-martingales $(\theta^{M}\cdot S)$ and $(\xi^{M}\cdot W^{\mathbb
  Q_{M}})$ are orthogonal, and $W^{\mathbb Q_{M}}$ is a $\mathbb
Q_{M}$-Brownian motion. An example where this pertains is given in
Monoyios \cite{mmamf10}, in a two-dimensional model of basis risk with
partial information. The indifference price process expansion is
given by the analogue of (\ref{eq:aealt}), as
\begin{equation}
p_{t}(\alpha) = \mathbb E^{\mathbb Q_{M}}[F|\mathcal{F}_{t}] +
\frac{1}{2}\alpha\left( 
\var^{\mathbb Q_{M}}[F|\mathcal{F}_{t}] - \mathbb E^{\mathbb 
Q_{M}}\left[\int_{t}^{T}\Vert\theta^{M}_{u}\Vert^{2}\ud 
[S]_{u}\right]\right) + O(\alpha^{2}), \quad 0\leq t\leq T.
\label{eq:t0ipex}
\end{equation}
When the model is Markovian, the integrand $\theta^{M}$ can sometimes
be expressed in terms of the partial derivatives with respect to $S$
and $Y$ of the marginal price process
$\widehat{p}(t,S_{t},Y_{t})=\mathbb E^{\mathbb
Q_{M}}[F|S_{t},Y_{t}]$. An example where this is carried out can be
found in Monoyios \cite{mmamf10}.

\end{example}

\begin{example}[Two-dimensional random parameter basis risk model]
\label{examp:lbrm}

This is a random parameter version of the classical example first
considered by Davis \cite{mhad06}, and so a two-dimensional case of
Example \ref{examp:mdbrm}. We show how the asymptotic expansion
for the indifference price simplifies in this case, and so we extend
results of Davis and others \cite{mhad06,hend02,mmqf04} to general
(so possibly path-dependent) payoffs dependent on the non-traded
asset price, in a random parameter scenario. We also illustrate how the
conditions in Assumption \ref{ass:condF} are satisfied in the case of
a lookback put option in the constant parameter (lognormal) case.

Set $d=1$, $m=2$ in Example \ref{examp:mdbrm}, and set 
\begin{equation*}
\beta_{t} = \sigma^{Y}_{t}\left(\begin{array}{cc} \rho_{t} &
\sqrt{1-\rho^{2}_{t}}\end{array}\right),  \quad \rho_{t}\in(-1,1),
\quad 0\leq t\leq T, 
\end{equation*}
for adapted processes $\sigma^{Y},\rho$. Then the stock and non-traded asset
are imperfectly correlated with cross-variation process given by
\begin{equation*}
[S,Y]_{t} = \int_{0}^{t}\rho_{u}\sigma^{S}_{u}\sigma^{Y}_{u}S_{u}Y_{u}\ud u, \quad
0\leq t\leq T.   
\end{equation*}
The $\mathbb P$-dynamics of the assets are 
\begin{equation*}
\ud S_{t} = \sigma^{S}_{t}S_{t}(\lambda^{S}_{t}\ud t + \ud W^{S}_{t}), \quad
\ud Y_{t} = Y_{t}(\mu^{Y}_{t}\ud t + \sigma^{Y}_{t}\ud W^{Y}_{t}), \quad
\lambda^{S}_{t}:=\mu^{S}_{t}/\sigma^{S}_{t}, \quad 0\leq t\leq T, 
\end{equation*}
where $W^{Y}=\rho W^{S}+\sqrt{1-\rho^{2}}W^{S,\perp}$. 

The density process of any ELMM $\mathbb Q\in\mathbf{M}_{f}$ is once
again given by (\ref{eq:ZQmdbrm}), with
$\lambda^{S}_{t}=\mu^{S}_{t}/\sigma^{S}_{t}$, and in this case the
processes in the Dol\'eans exponential are one-dimensional.  For the
MEMM $\mathbb Q^{0}\equiv \mathbb Q_{E}$, the integrand $\gamma$ in
(\ref{eq:ZQmdbrm}) is given by some process $\gamma^{0}$. We may write
the $\mathbb Q$-dynamics of $Y$ as a perturbation to the $\mathbb
Q^{0}$-dynamics, in the same manner as in Section
\ref{sec:iviim}. This gives the $\mathbb Q$-dynamics of the asset
prices in the form
\begin{equation}
\ud S_{t} = \sigma^{S}_{t}S_{t}\ud W^{S,\mathbb Q}_{t}, 
\quad \ud Y_{t} = Y_{t}\left[\nu_{t}\ud t + \sigma^{Y}_{t}\left(\ud W^{Y,\mathbb
Q}_{t} + \sqrt{1-\rho^{2}_{t}}\varepsilon\varphi_{t}\ud t\right)\right],  
\label{eq:Qdynassets}
\end{equation}
for $\mathbb Q$-Brownian motions $W^{S,\mathbb Q},W^{Y,\mathbb Q}$
with instantaneous correlation $\rho$, so that 
\begin{equation}
W^{Y,\mathbb Q} = \rho W^{S,\mathbb Q} + \sqrt{1-\rho^{2}}W^{S,\perp,\mathbb Q},  
\label{eq:wyq}
\end{equation}
with $W^{S,\mathbb Q},W^{S,\perp,\mathbb Q}$ independent $\mathbb
Q$-Brownian motions,
$\nu:=\mu^{Y}-\sigma^{Y}(\rho\lambda^{S}+\sqrt{1-\rho^{2}}\gamma^{0})$
and $\varepsilon\varphi:=-(\gamma-\gamma^{0})$, for a small parameter
$\varepsilon$ and control process $\varphi$. For
$\varepsilon\varphi=0$ we have the dynamics under the MEMM $\mathbb
Q^{0}$. Once again, the perturbation expansion for the indifference
price of a claim with payoff $F$ depending on the evolution of $S,Y$
over $[0,T]$ will be of the form (\ref{eq:asexip}) or, equivalently,
(\ref{eq:aealt}). In the case that $\lambda^{S}$ has no dependence on
$Y$, then $\gamma^{0}=0$ and $\mathbb Q^{0}=\mathbb Q_{M}$.

Another special case arises when $\rho$ is deterministic (say,
constant), $\lambda^{S},\mu^{Y},\sigma^{Y}$ are adapted to the
filtration generated by $W^{Y}$, so depend on the evolution of the
non-traded asset price only, and the claim is written on the
non-traded asset, so its payoff $F$ also depends only on the evolution
of $Y$. (This would also apply in a stochastic volatility model where
$Y$ is the process driving the volatility, and then $F$ would be a
volatility derivative.) In this case the Kunita-Watanabe decomposition
of $F$ under $\mathbb Q^{0}$ will be of the special form
\begin{equation}
F = \mathbb E^{\mathbb Q^{0}}[F] + (\psi^{(0)}\cdot W^{Y,\mathbb Q^{0}})_{T},  
\label{eq:kwd1}
\end{equation}
for some process $\psi^{(0)}$ such that $(\psi^{(0)}\cdot W^{Y,\mathbb
  Q^{0}})$ is a $\mathbb Q^{0}$-martingale. But we also have the
general form (\ref{eq:gkwde}) of this decomposition, which in this
case reads as
\begin{equation}
F = \mathbb E^{\mathbb Q^{0}}[F] + (\theta^{(0)}\cdot S)_{T} +
(\xi^{(0)}\cdot W^{S,\perp,\mathbb Q^{0}})_{T},
\label{eq:kwd2}
\end{equation}
for integrands $\theta^{(0)},\xi^{(0)}$ (here, $\theta^{(0)}$ would be
the marginal utility-based hedging strategy for the claim).

Equating the representations in (\ref{eq:kwd1}) and (\ref{eq:kwd2})
and in view of (\ref{eq:Qdynassets}) and (\ref{eq:wyq}) for the case
$\mathbb Q=\mathbb Q^{0}$, it is easy to see that
$\theta^{(0)},\xi^{(0)}$ are both linearly related to the process
$\psi^{(0)}$, through
\begin{equation*}
\theta^{(0)}\sigma^{S}S = \rho\psi^{(0)}, \quad \xi^{(0)} =
\sqrt{1-\rho^{2}}\psi^{(0)}.   
\end{equation*}
It is then straightforward to compute that
\begin{equation*}
\var^{\mathbb Q^{0}}[F] = \frac{1}{\rho^{2}}\mathbb E^{\mathbb
  Q^{0}}\left[\int_{0}^{T}(\theta^{(0)}_{t})^{2}\ud[S]_{t}\right].   
\end{equation*}
The time-zero indifference price expansion in this case then
simplifies to
\begin{equation*}
p_{0}(\alpha) = \mathbb E^{\mathbb Q^{0}}[F] +
\frac{1}{2}\alpha(1-\rho^{2})\var^{\mathbb Q^{0}}[F] + O(\alpha^{2}), 
\end{equation*}
which is an extension of the form found in \cite{mhad06,hend02,mmqf04}
to European payoffs $F$ satisfying Assumption \ref{ass:condF}, in
models with random parameters dependent on $Y$. If, in addition,
$\lambda^{S}$ is deterministic, then $\mathbb Q^{0}=\mathbb Q_{M}$.

It is instructive to see how Assumption \ref{ass:condF} would be
checked in a simple case of this example. Suppose the parameters of
the model are constants, so that $Y$ is a geometric Brownian motion.
Let the claim be a European floating strike lookback put option on the
non-traded asset, so that $F$ is a functional of a one-dimensional
Brownian motion given by
\begin{equation*}
F = \max_{0\leq t\leq T}Y_{t} - Y_{T},
\end{equation*}
To ease notation, write $W\equiv W^{Y,\mathbb Q}$ for the Brownian
motion driving $Y$ under any ELMM. When the perturbation
$\varepsilon\varphi$ is zero, $Y$ satisfies
\begin{equation*}
\ud Y_{t} = Y_{t}(\nu\ud t + \eta\ud W_{t}), \quad Y_{0} = y
\in\mathbb R^{+},   
\end{equation*}
for constants $\nu$ and $\eta>0$. For concreteness, let us suppose
that $\nu-\frac{1}{2}\eta^{2}>0$. The functional $F\equiv F(W)$ is
given by
\begin{equation*}
F(W) = y\exp\left[\left(\nu -
\frac{1}{2}\eta^{2}\right)T\right]\left(\exp\left(\eta\max_{0\leq 
t\leq T}W_{t}\right) - \exp(\eta W_{T})\right).    
\end{equation*}
Consider the two functionals $F_{1}(W):=\exp(\eta W_{T})$ and
$F_{2}(W):=\exp\left(\eta\max_{0\leq t\leq T}W_{t}\right)$ in turn. 

For $F_{1}$, it is straightforward to see square-integrability, and
that Assumption \ref{ass:condF} (ii) is satisfied with $k=F{1}$ and
$g(\varepsilon)=\exp(\eta\varepsilon)-1$. It is also easy to compute
\begin{equation*}
\lim_{\varepsilon\to
0}\frac{1}{\varepsilon}\left[F_{1}\left(W +
\varepsilon\int_{0}^{\cdot}\varphi_{s}\ud s\right) - F_{1}(W)\right] =
\eta Y_{T}\int_{0}^{T}\varphi_{t}\ud t = \int_{0}^{T}\eta
F_{1}(W)\varphi_{t}\ud t. 
\end{equation*}
Therefore, $\partial F_{1}(W;(t,T])=\eta F_{1}(W)$. 

For $F_{2}$, the maximum of the Brownian motion over $[0,T]$ is
achieved at some random time $\tau(\omega)\equiv\tau(W)$, so in this
case we have
\begin{equation*}
F_{2}(W) = y\exp\left[\left(\nu -
\frac{1}{2}\eta^{2}\right)T + \eta W_{\tau(W)}\right].   
\end{equation*}
The first two conditions in Assumption \ref{ass:condF} are satisfied
in a similar manner as for $F_{1}$. For the last condition, 
with $\Phi=\int_{0}^{\cdot}\varphi_{s}\ud s$  we obtain
\begin{equation*}
\lim_{\varepsilon\to
0}\frac{1}{\varepsilon}\left[F_{2}\left(W +
\varepsilon\Phi\right) - F_{2}(W)\right] =
\eta F_{2}(W)\Phi_{\tau(W)} =
\int_{0}^{T}\eta F_{2}(W)\mathbbm{1}_{\{\tau(W)>t\}}\varphi_{t}\ud t.  
\end{equation*}
Therefore, $\partial F_{2}(W;(t,T])=\eta
F_{2}(W)\mathbbm{1}_{\{\tau(W)>t\}}$. This shows how Assumption
\ref{ass:condF} is compatible with path-dependent payoffs. Similar
reasoning can work with random parameter models.

\end{example}

\begin{example}[Basis risk with stochastic correlation]

This model has been considered by Ankirchner and Heyne \cite{ah12},
who examined local risk minimisation methods for hedging basis risk. A
traded asset $S$ and non-traded asset $Y$ follow correlated geometric
Brownian motions, as in Example \ref{examp:lbrm}, but the correlation
$\rho=(\rho_{t})_{0\leq t\leq T}$ is now stochastic. In this case, we
have $m=3$, $d=1$. With $W$ a three-dimensional Brownian motion, let
$W^{S}=W^{1}$, $W^{Y}=\rho W^{1}+\sqrt{1-\rho^{2}}W^{2}$,
$W^{\rho}=\delta W^{1}+\eta W^{2}+\sqrt{1-\delta^{2}-\eta^{2}}W^{3}$,
for constants $\delta,\eta$ such that $\delta^{2}+\eta^{2}\leq 1$. The
state variable dynamics are then
\begin{eqnarray*}
\ud S_{t} & = & \sigma^{S}S_{t}(\lambda^{S}\ud t + \ud W^{1}_{t}), \\
\ud Y_{t} & = & Y_{t}[\mu^{Y}\ud t + \sigma^{Y}(\rho_{t}\ud W^{1}_{t} +
\sqrt{1-\rho^{2}_{t}}\ud W^{2}_{t})], \\
\ud\rho_{t} & = & g_{t}\ud t + h_{t}(\delta\ud W^{1}_{t} + \eta\ud
W^{2}_{t} + \sqrt{1-\delta^{2}-\eta^{2}}\ud W^{3}_{t}).     
\end{eqnarray*}
Here, $g,h$ are adapted processes such that $\rho_{t}\in[-1,1]$ almost
surely. Ankirchner and Heyne \cite{ah12} give some specific examples
of such models.

In this example we also have $\mathbb Q^{0}=\mathbb Q_{M}$, with
$Z^{\mathbb Q_{M}}=\mathcal{E}(-\lambda^{S}W^{1})$, and the
F\"ollmer-Schweizer-Sondermann decomposition of the claim is of the form
\begin{equation}
F = \mathbb E^{\mathbb Q_{M}}[F] + (\theta^{M}\cdot S)_{T} +
(\xi^{M}\cdot W^{2,\mathbb Q_{M}})_{T} + (\phi^{M}\cdot W^{3,\mathbb Q_{M}})_{T},    
\label{eq:fsssc}
\end{equation}
for some integrands $\theta^{M},\xi^{M},\phi^{M}$, where $W^{\mathbb
  Q_{M}}=(W^{1,\mathbb Q_{M}},W^{2,\mathbb Q_{M}},W^{3,\mathbb
  Q_{M}})^{\top}$ is a three-dimensional $\mathbb Q_{M}$-Brownian
motion, the first of which drives the stock, so that the stochastic
integrals in (\ref{eq:fsssc}) are orthogonal $\mathbb
Q_{M}$-martingales. The time-zero indifference price expansion is
again of the form (\ref{eq:t0ipex}).
  
\end{example}

Many examples are covered by the framework of Theorem
\ref{thm:asexip}, including classical stochastic volatility models,
basis risk models with stochastic volatility (so $m=3$, $d=1$) with a
traded and non-traded asset both driven by a common stochastic
volatility process (and stochastic correlation can be added to this
framework), or basis risk models with unknown asset drifts, extending
\cite{mmamf10} (which modelled the drifts as unknown constants) to
model the drifts as linear diffusions.

\subsection{Entropy minimisation in stochastic volatility models}
\label{subsec:emsvm}

We end with another application of the asymptotic methods developed in
the paper. This time, we are interested in finding the minimal entropy
martingale measure $\mathbb Q^{0}\equiv\mathbb Q_{E}$ in a stochastic
volatility model. A traded asset $S$ and a non-traded stochastic
factor $Y$ follow, under the physical measure $\mathbb P$,
\begin{eqnarray}
\ud S_{t} & = & \sigma(Y_{t})S_{t}\left(\lambda(Y_{t})\ud t + \ud W^{S}_{t}\right),
\label{eq:S} \\ 
\ud Y_{t} & = & a(Y_{t})\ud t + b(Y_{t})\ud W^{Y}_{t}, \label{eq:Y} 
\end{eqnarray}
for suitable functions $\sigma,\lambda,a,b$ such that there are unique
strong solutions to (\ref{eq:S},\ref{eq:Y}). The Brownian motions
$W^{S},W^{Y}$ have constant correlation $\rho\in[-1,1]$. We write
$W^{Y}_{t}=\rho W^{S}_{t} + \sqrt{1-\rho^{2}}W^{S,\perp}_{t}$. The
density process of any ELMM $\mathbb Q$ is
\begin{equation*}
Z^{\mathbb Q}_{t} = \mathcal{E}(-\lambda\cdot W^{S} - \gamma\cdot
W^{S,\perp})_{t}, \quad 0\leq t\leq T,  
\end{equation*}
for some square-integrable process $\gamma$ such that $Z^{\mathbb Q}$
is a $\mathbb P$-martingale.

The entropy minimisation problem is the stochastic control problem to
minimise
\begin{equation*}
I_{0}(\mathbb Q|\mathbb P) = \mathbb E^{\mathbb
  Q}\left[\frac{1}{2}\int_{0}^{T}(\lambda^{2}(Y_{t}) +
  \gamma^{2}_{t})\ud t\right],   
\end{equation*}
over control processes $\gamma$, where we assume that $I_{0}(\mathbb
Q|\mathbb P)<\infty$, and where, under $\mathbb Q$, $S,Y$ follow
\begin{eqnarray}
\ud S_{t} & = & \sigma(Y_{t})S_{t}\ud W^{S,\mathbb Q}_{t}, \nonumber \\
\ud Y_{t} & = & (a(Y_{t}) - b(Y_{t})\rho\lambda(Y_{t}))\ud t + b(Y_{t})(\ud
W^{Y,\mathbb Q}_{t} - \sqrt{1-\rho^{2}}\gamma_{t}\ud t), 
\label{eq:svyqdyn}
\end{eqnarray}
for $\mathbb Q$-Brownian motions $W^{S,\mathbb Q},W^{Y,\mathbb Q}$
with correlation $\rho$, such that setting $\gamma=0$ yields the
minimal martingale measure $\mathbb Q_{M}$.

The idea here is to consider the drift adjustment
$\sqrt{1-\rho^{2}}\gamma_{t}$ in (\ref{eq:svyqdyn}) as a perturbation
to the Brownian paths, and hence to convert the entropy minimisation
problem to the type of control problem we have considered in Section
\ref{sec:cp}, in the limit that the absolute value of the correlation
is close to $1$, so $1-\rho^{2}$ is small. To this end, we define a
parameter $\varepsilon$ and a control process $\varphi$ such that
\begin{equation*}
\varepsilon^{2} = 1-\rho^{2}, \quad \varepsilon\varphi =
-\sqrt{1-\rho^{2}}\gamma, 
\end{equation*}
and we define a parametrised family of measures $\{\mathbb
Q(\varepsilon)\}_{\varepsilon\in\mathbb R}$, such that
\begin{equation*}
\mathbb Q = \mathbb Q(\varepsilon), \quad \mathbb Q(0) = \mathbb Q_{M}.   
\end{equation*}
The state variable dynamics for $Y$ are then given by
\begin{equation}
\ud Y_{t} = (a(Y_{t}) - b(Y_{t})\rho\lambda(Y_{t}))\ud t + b(Y_{t})(\ud
W^{Y,\mathbb Q(\varepsilon)}_{t} + \varepsilon\varphi_{t}\ud t).   
\label{eq:yphi}
\end{equation}
With $\Phi:=\int_{0}^{\cdot}\varphi_{s}\ud s$, we define a
square-integrable functional $F\equiv F(W^{\mathbb
Q(\varepsilon)}+\varepsilon\Phi)$ of the Brownian paths by
\begin{equation*}
F := \frac{1}{2}\int_{0}^{T}\lambda^{2}(Y_{t})\ud t =: \frac{1}{2}K_{T},  
\end{equation*}
where, for brevity of notation, we have defined the so-called
mean-variance trade-off process $K$ by
\begin{equation}
K_{t} := \int_{0}^{t}\lambda^{2}(Y_{u})\ud u, \quad 0\leq t\leq T.  
\label{eq:Kt}
\end{equation}
We assume that the model is such that $K_{T}$ defines a functional
satisfying Assumption \ref{ass:condF}.

In this notation, the relative entropy between the minimal martingale
measure and $\mathbb P$ is
\begin{equation}
I_{0}(\mathbb Q_{M}|\mathbb P) = \mathbb E^{\mathbb
  Q_{M}}\left[\frac{1}{2}K_{T}\right] = \mathbb E^{\mathbb
  Q(0)}[F(W^{\mathbb Q(0)})].
\label{eq:fie}
\end{equation}
The control
problem to minimise $I_{0}(\mathbb Q|\mathbb P)$ over ELMMs $\mathbb
Q\in\mathbf{M}_{f}$ then has value function
\begin{equation}
I_{0}(\mathbb Q_{E}|\mathbb P) :=
\inf_{\varphi\in\mathcal{A}(\mathbf M_{f})}\mathbb E^{\mathbb
Q(\varepsilon)}\left[F\left(W^{\mathbb
Q(\varepsilon)} + \varepsilon\int_{0}^{\cdot}\varphi_{s}\ud s\right) + 
\frac{1}{2}\int_{0}^{T}\varphi^{2}_{t}\ud t\right], 
\label{eq:objfi}
\end{equation}
where $\mathcal{A}(\mathbf M_{f})$ denotes the set of controls
$\varphi$ such that $I_{0}(\mathbb Q|\mathbb P)$ is finite.

We have now formulated the entropy minimisation problem in the form we
need to be able to apply the Malliavin asymptotic method, and this
gives the theorem below.

\begin{theorem}
\label{thm:sventm}

In the stochastic volatility model defined by (\ref{eq:S},\ref{eq:Y}),
suppose the terminal value $K_{T}$ of mean-variance trade-off process
in (\ref{eq:Kt}) defines a Brownian functional satisfying Assummption
\ref{ass:condF}. Then the relative entropy between the minimal
entropy martingale measure $\mathbb Q_{E}$ and $\mathbb P$, in the
limit that $1-\rho^{2}$ is close to $1$, is given as
\begin{equation*}
I_{0}(\mathbb Q_{E}|\mathbb P) = I_{0}(\mathbb Q_{M}|\mathbb P) -
\frac{1}{8}(1-\rho^{2})\var^{\mathbb Q_{M}}[K_{T}] + O((1-\rho^{2})^{2}),
\end{equation*}
where $\mathbb Q_{M}$ is the minimal martingale measure.

\end{theorem}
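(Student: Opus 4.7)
The plan is to reformulate the infimum problem (\ref{eq:objfi}) as a supremum problem of the form to which Theorem \ref{thm:asex} applies, and then to identify the martingale representation integrand explicitly in terms of $K_{T}$. First I would rewrite
\begin{equation*}
I_{0}(\mathbb Q_{E}|\mathbb P) = -\sup_{\varphi\in\mathcal{A}(\mathbf{M}_{f})}\mathbb E^{\mathbb Q(\varepsilon)}\left[-F\left(W^{\mathbb Q(\varepsilon)}+\varepsilon\int_{0}^{\cdot}\varphi_{s}\ud s\right) - \frac{1}{2}\int_{0}^{T}\varphi^{2}_{t}\ud t\right].
\end{equation*}
Following the device used in the proof of Theorem \ref{thm:asexip}, I would introduce a perturbed state process $Y^{(\varepsilon)}$ defined on a fixed reference space, with dynamics under some $\mathbb M$-Brownian motion $W^{\mathbb M}$ given by
\begin{equation*}
\ud Y^{(\varepsilon)}_{t} = (a(Y^{(\varepsilon)}_{t}) - b(Y^{(\varepsilon)}_{t})\rho\lambda(Y^{(\varepsilon)}_{t}))\ud t + b(Y^{(\varepsilon)}_{t})(\ud W^{\mathbb M}_{t} + \varepsilon\varphi_{t}\ud t),
\end{equation*}
so that, for each $\varepsilon$, the law of $Y^{(\varepsilon)}$ under $\mathbb M$ matches the law of $Y$ under $\mathbb Q(\varepsilon)$, with $\varepsilon=0$ corresponding to $\mathbb Q_{M}$. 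The functional $F=\frac{1}{2}K_{T}$ then fits exactly the framework of Theorem \ref{thm:asex} (the perturbation here is effectively one-dimensional in the $W^{Y}$-direction, since $\varphi$ takes values in $\mathbb R$), and Assumption \ref{ass:condF} is assumed in the hypothesis.

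Applying Theorem \ref{thm:asex} with $F$ replaced by $-F$, which leaves $\Vert\psi\Vert^{2}$ unchanged, yields
\begin{equation*}
I_{0}(\mathbb Q_{E}|\mathbb P) = \mathbb E^{\mathbb Q_{M}}\left[\tfrac{1}{2}K_{T}\right] - \tfrac{1}{2}\varepsilon^{2}\mathbb E^{\mathbb Q_{M}}\left[\int_{0}^{T}\psi_{t}^{2}\ud t\right] + O(\varepsilon^{4}),
\end{equation*}
where $\psi$ is the integrand in the $\mathbb Q_{M}$-martingale representation
\begin{equation*}
\tfrac{1}{2}K_{T} = \mathbb E^{\mathbb Q_{M}}[\tfrac{1}{2}K_{T}] + \int_{0}^{T}\psi_{t}\ud W^{Y,\mathbb Q_{M}}_{t}.
\end{equation*}
The It\^o isometry then gives $\mathbb E^{\mathbb Q_{M}}[\int_{0}^{T}\psi_{t}^{2}\ud t] = \var^{\mathbb Q_{M}}[\tfrac{1}{2}K_{T}] = \tfrac{1}{4}\var^{\mathbb Q_{M}}[K_{T}]$. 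Inserting $\varepsilon^{2}=1-\rho^{2}$ and using $I_{0}(\mathbb Q_{M}|\mathbb P)=\mathbb E^{\mathbb Q_{M}}[\tfrac{1}{2}K_{T}]$ from (\ref{eq:fie}) produces the claimed expansion, with the coefficient $\frac{1}{8}=\frac{1}{2}\cdot\frac{1}{4}$ emerging naturally.

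The main obstacle I expect is bookkeeping rather than analytical: keeping track of the sign flip when converting the infimum into a supremum, making sure that the factor $\frac{1}{2}$ already present in $F=\frac{1}{2}K_{T}$ is propagated through the It\^o isometry (so that the variance of $K_{T}$ itself, not of $\frac{1}{2}K_{T}$, appears in the final formula), and confirming that the admissibility class $\mathcal{A}(\mathbf{M}_{f})$ is compatible with the class $\mathcal{A}$ of controls used in Theorem \ref{thm:asex}. Since the statement assumes Assumption \ref{ass:condF} for $K_{T}$ directly, no additional regularity verification is needed beyond what is already hypothesised.
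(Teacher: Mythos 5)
Your proof is correct and takes essentially the same route as the paper: the paper's sketch inlines the first-order expansion of the objective, the minimisation $\varphi=-\varepsilon\xi^{(0)}$, and the It\^o isometry step, rather than invoking Theorem \ref{thm:asex} formally via the sign flip, but the underlying computation and the provenance of the $\frac{1}{8}$ coefficient (the $\frac{1}{2}$ in the theorem times the $\frac{1}{4}$ from $\var^{\mathbb Q_M}[\frac{1}{2}K_T]$) are identical. Your reduction to the supremum form and the observation that replacing $F$ by $-F$ leaves $\Vert\psi\Vert^2$ invariant is a clean way to reuse the theorem directly.
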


\begin{proof}

This is along the same lines as previous results, so we only sketch
the details. One appeals to the decomposition of $F$ under
$\mathbb Q(0)$, which is of the form
\begin{equation}
F(W^{\mathbb Q(0)}) =  \mathbb E^{\mathbb Q(0)}[F(W^{\mathbb Q(0)})] + 
(\xi^{(0)}\cdot W^{\mathbb Q(0)})_{T}, 
\label{eq:kwdsv}
\end{equation}
for some integrand $\xi^{(0)}$. Such a decomposition exists uniquely,
given that $F$ depends only on $Y$, and the dynamics in
(\ref{eq:yphi}). We expand the objective function (\ref{eq:objfi})
about $\varepsilon=0$ and use the representation
(\ref{eq:kwdsv}). This gives
\begin{eqnarray*}
&& \mathbb E^{\mathbb Q(\varepsilon)}\left[F\left(W^{\mathbb Q(\varepsilon)} +
  \varepsilon\int_{0}^{\cdot}\varphi_{s}\ud s\right) +
\frac{1}{2}\int_{0}^{T}\varphi^{2}_{t}\ud t\right] \\
& = & \mathbb
E^{\mathbb Q(0)}\left[F(W^{\mathbb Q(0)}) +
\int_{0}^{T}\left(\varepsilon\xi^{(0)}_{t}\varphi_{t} +
\frac{1}{2}\varphi^{2}_{t}\right)\ud t\right] + o(\varepsilon). 
\end{eqnarray*}
We minimise the right-hand-side over $\varphi$ by choosing
$\varphi=-\varepsilon\xi^{(0)}$. Using (\ref{eq:kwdsv}) again, and
recalling (\ref{eq:fie}), the result follows.
 
\end{proof}

\begin{remark}
\label{rem:esscher}

In \cite{mmspl07,mmdef06}, Esscher transform relations between
$\mathbb Q_{E}$ and $\mathbb Q_{M}$ are derived, and it is an exercise
in asymptotic analysis to see that those results are consistent with
Theorem \ref{thm:sventm}. 

\end{remark}

\section{Conclusions}

It is quite natural to apply Malliavin calculus ideas in stochastic
control problems where the control turns out to be a drift which is
considered as a perturbation to a Brownian motion, and this is the
path taken in this paper. We have shown how the method can yield
small risk aversion asymptotic expansions for exponential indifference
prices in It\^o process models, and how one can identify the minimal
entropy measure as a perturbation to the minimal martingale measure in
stochastic volatility models. It would be interesting to extend the
method to models with jumps in the underlying state process.

{\small
\bibliography{mallrefs_final}

\begin{thebibliography}{10}

\bibitem{ah12}
{\sc Stefan Ankirchner and Gregor Heyne}, {\em Cross hedging with stochastic
  correlation}, Finance Stoch., 16 (2012), pp.~17--43.

\bibitem{bech03}
{\sc Dirk Becherer}, {\em Rational hedging and valuation of integrated risks
  under constant absolute risk aversion}, Insurance Math. Econom., 33 (2003),
  pp.~1--28.

\bibitem{bgm10}
{\sc E.~Benhamou, E.~Gobet, and M.~Miri}, {\em Time dependent {H}eston model},
  SIAM J. Financial Math., 1 (2010), pp.~289--325.

\bibitem{bf07}
{\sc Sara Biagini and Marco Frittelli}, {\em The supermartingale property of
  the optimal wealth process for general semimartingales}, Finance Stoch., 11
  (2007), pp.~253--266.

\bibitem{bierkenskappen12}
{\sc Joris Bierkens and Hilbert~J. Kappen}, {\em Explicit solution of relative
  entropy weighted control}.
\newblock Preprint, available at {\tt http://arxiv.org/pdf/1205.6946v3.pdf},
  2012.

\bibitem{bismut81}
{\sc Jean-Michel Bismut}, {\em Martingales, the {M}alliavin calculus and
  hypoellipticity under general {H}\"ormander's conditions}, Z. Wahrsch. Verw.
  Gebiete, 56 (1981), pp.~469--505.

\bibitem{bouedupuis98}
{\sc Michelle Bou{\'e} and Paul Dupuis}, {\em A variational representation for
  certain functionals of {B}rownian motion}, Ann. Probab., 26 (1998),
  pp.~1641--1659.

\bibitem{clark70}
{\sc J.~M.~C. Clark}, {\em The representation of functionals of {B}rownian
  motion by stochastic integrals}, Ann. Math. Statist., 41 (1970),
  pp.~1282--1295.

\bibitem{mhad97}
{\sc Mark H.~A. Davis}, {\em Option pricing in incomplete markets}, in
  Mathematics of derivative securities ({C}ambridge, 1995), vol.~15 of Publ.
  Newton Inst., Cambridge Univ. Press, Cambridge, 1997, pp.~216--226.

\bibitem{mhad06}
\leavevmode\vrule height 2pt depth -1.6pt width 23pt, {\em Optimal hedging with
  basis risk}, in From stochastic calculus to mathematical finance, Y.~Kabanov,
  R.~Liptser, and J.~Stoyanov, eds., Springer, Berlin, 2006, pp.~169--187.

\bibitem{6auth}
{\sc Freddy Delbaen, Peter Grandits, Thorsten Rheinl{\"a}nder, Dominick
  Samperi, Martin Schweizer, and Christophe Stricker}, {\em Exponential hedging
  and entropic penalties}, Math. Finance, 12 (2002), pp.~99--123.

\bibitem{ds94}
{\sc Freddy Delbaen and Walter Schachermayer}, {\em A general version of the
  fundamental theorem of asset pricing}, Math. Ann., 300 (1994), pp.~463--520.

\bibitem{fs86}
{\sc Hans F{\"o}llmer and Dieter Sondermann}, {\em Hedging of nonredundant
  contingent claims}, in Contributions to mathematical economics,
  North-Holland, Amsterdam, 1986, pp.~205--223.

\bibitem{fpss11}
{\sc Jean-Pierre Fouque, George Papanicolaou, Ronnie Sircar, and Knut
  S{\o}lna}, {\em Multiscale stochastic volatility for equity, interest rate,
  and credit derivatives}, Cambridge University Press, Cambridge, 2011.

\bibitem{fournie99}
{\sc Eric Fourni{\'e}, Jean-Michel Lasry, J{\'e}r{\^o}me Lebuchoux,
  Pierre-Louis Lions, and Nizar Touzi}, {\em Applications of {M}alliavin
  calculus to {M}onte {C}arlo methods in finance}, Finance Stoch., 3 (1999),
  pp.~391--412.

\bibitem{fri00}
{\sc Marco Frittelli}, {\em The minimal entropy martingale measure and the
  valuation problem in incomplete markets}, Math. Finance, 10 (2000),
  pp.~39--52.

\bibitem{gr02}
{\sc Peter Grandits and Thorsten Rheinl{\"a}nder}, {\em On the minimal entropy
  martingale measure}, Ann. Probab., 30 (2002), pp.~1003--1038.

\bibitem{hend02}
{\sc Vicky Henderson}, {\em Valuation of claims on nontraded assets using
  utility maximization}, Math. Finance, 12 (2002), pp.~351--373.

\bibitem{hl12}
{\sc Vicky Henderson and Gechun Liang}, {\em A multidimensional exponential
  utility indifference pricing model with applications to counterparty risk}.
\newblock Preprint, available at {\tt http://arxiv.org/pdf/1111.3856v2.pdf},
  2012.

\bibitem{imk03}
{\sc Peter Imkeller}, {\em Malliavin's calculus in insider models: additional
  utility and free lunches}, Math. Finance, 13 (2003), pp.~153--169.
\newblock Conference on Applications of Malliavin Calculus in Finance
  (Rocquencourt, 2001).

\bibitem{kabstr02}
{\sc Yuri~M. Kabanov and Christophe Stricker}, {\em On the optimal portfolio
  for the exponential utility maximization: remarks to the six-author paper},
  Math. Finance, 12 (2002), pp.~125--134.

\bibitem{kmkv11}
{\sc Jan Kallsen, Johannes Muhle-Karbe, and Richard Vierthauer}, {\em
  Asymptotic power utility-based pricing and hedging}, Mathematics and
  Financial Economics,  (2013).
\newblock in press; doi:10.1007/s11579-013-0095-8.

\bibitem{kr11}
{\sc Jan Kallsen and Thorsten Rheinl{\"a}nder}, {\em Asymptotic utility-based
  pricing and hedging for exponential utility}, Statist. Decisions, 28 (2011),
  pp.~17--36.

\bibitem{ks98}
{\sc Ioannis Karatzas and Steven~E. Shreve}, {\em Methods of mathematical
  finance}, vol.~39 of Applications of Mathematics (New York), Springer-Verlag,
  New York, 1998.

\bibitem{ks07}
{\sc D.~Kramkov and M.~S{\^{\i}}rbu}, {\em Asymptotic analysis of utility-based
  hedging strategies for small number of contingent claims}, Stochastic
  Process. Appl., 117 (2007), pp.~1606--1620.

\bibitem{lsz12}
{\sc Tim Leung, Ronnie Sircar, and Thaleia Zariphopoulou}, {\em Forward
  indifference valuation of {A}merican options}, Stochastics, 84 (2012),
  pp.~741--770.

\bibitem{luenberger69}
{\sc David~G. Luenberger}, {\em Optimization by vector space methods}, John
  Wiley \& Sons Inc., New York, 1969.

\bibitem{manschw05}
{\sc Michael Mania and Martin Schweizer}, {\em Dynamic exponential utility
  indifference valuation}, Ann. Appl. Probab., 15 (2005), pp.~2113--2143.

\bibitem{mmqf04}
{\sc Michael Monoyios}, {\em Performance of utility-based strategies for
  hedging basis risk}, Quant. Finance, 4 (2004), pp.~245--255.

\bibitem{mmdef06}
\leavevmode\vrule height 2pt depth -1.6pt width 23pt, {\em Characterisation of
  optimal dual measures via distortion}, Decis. Econ. Finance, 29 (2006),
  pp.~95--119.

\bibitem{mmspl07}
\leavevmode\vrule height 2pt depth -1.6pt width 23pt, {\em The minimal entropy
  measure and an {E}sscher transform in an incomplete market model}, Statist.
  Probab. Lett., 77 (2007), pp.~1070--1076.

\bibitem{mmima07}
\leavevmode\vrule height 2pt depth -1.6pt width 23pt, {\em Optimal hedging and
  parameter uncertainty}, IMA J. Manag. Math., 18 (2007), pp.~331--351.

\bibitem{mmamf10}
\leavevmode\vrule height 2pt depth -1.6pt width 23pt, {\em Utility-based
  valuation and hedging of basis risk with partial information}, Applied
  Mathematical Finance, 17 (2010), pp.~519--551.

\bibitem{nualart06}
{\sc David Nualart}, {\em The {M}alliavin calculus and related topics},
  Probability and its Applications (New York), Springer-Verlag, Berlin,
  second~ed., 2006.

\bibitem{ry99}
{\sc Daniel Revuz and Marc Yor}, {\em Continuous martingales and {B}rownian
  motion}, vol.~293 of Grundlehren der Mathematischen Wissenschaften
  [Fundamental Principles of Mathematical Sciences], Springer-Verlag, Berlin,
  third~ed., 1999.

\bibitem{rw00}
{\sc L.~C.~G. Rogers and David Williams}, {\em Diffusions, {M}arkov processes,
  and martingales. {V}ol. 2}, Cambridge Mathematical Library, Cambridge
  University Press, Cambridge, 2000.
\newblock It{\^o} calculus, Reprint of the second (1994) edition.

\bibitem{schach01}
{\sc Walter Schachermayer}, {\em Optimal investment in incomplete markets when
  wealth may become negative}, Ann. Appl. Probab., 11 (2001), pp.~694--734.

\bibitem{schach03}
\leavevmode\vrule height 2pt depth -1.6pt width 23pt, {\em A super-martingale
  property of the optimal portfolio process}, Finance Stoch., 7 (2003),
  pp.~433--456.

\bibitem{sz05}
{\sc Ronnie Sircar and Thaleia Zariphopoulou}, {\em Bounds and asymptotic
  approximations for utility prices when volatility is random}, SIAM J. Control
  Optim., 43 (2004/05), pp.~1328--1353 (electronic).

\bibitem{str04}
{\sc Christophe Stricker}, {\em Indifference pricing with exponential utility},
  in Seminar on {S}tochastic {A}nalysis, {R}andom {F}ields and {A}pplications
  {IV}, vol.~58 of Progr. Probab., Birkh\"auser, Basel, 2004, pp.~323--328.

\bibitem{wouk79}
{\sc Arthur Wouk}, {\em A course of applied functional analysis},
  Wiley-Interscience [John Wiley \& Sons], New York, 1979.
\newblock Pure and Applied Mathematics.

\bibitem{z01}
{\sc Thaleia Zariphopoulou}, {\em A solution approach to valuation with
  unhedgeable risks}, Finance Stoch., 5 (2001), pp.~61--82.

\end{thebibliography}
\bibliographystyle{siam}
}

\end{document}